\definecolor{BeauBlue}{rgb}{0, 0.2, .9}
\definecolor{BeauOrange}{rgb}{.8, .1, 0}
\numberwithin{equation}{section}
\DeclareRobustCommand\widecheck[1]{{\mathpalette\@widecheck{#1}}}
\def\@widecheck#1#2{%
    \setbox\z@\hbox{\m@th$#1#2$}%
    \setbox\tw@\hbox{\m@th$#1%
       \widehat{%
          \vrule\@width\z@\@height\ht\z@
          \vrule\@height\z@\@width\wd\z@}$}%
    \dp\tw@-\ht\z@
    \@tempdima\ht\z@ \advance\@tempdima2\ht\tw@ \divide\@tempdima\thr@@
    \setbox\tw@\hbox{%
       \raise\@tempdima\hbox{\scalebox{1}[-1]{\lower\@tempdima\box
\tw@}}}%
    {\ooalign{\box\tw@ \cr \box\z@}}}
\newcommand{\leqnos}{\tagsleft@true\let\veqno\@@leqno}
\newcommand{\reqnos}{\tagsleft@false\let\veqno\@@eqno}
\newcommand{\triple}[1]{{\left\vert\kern-0.25ex\left\vert\kern-0.25ex\left\vert #1 
    \right\vert\kern-0.25ex\right\vert\kern-0.25ex\right\vert}}
\newcommand{\bbbone}{\mathchoice {\rm 1\mskip-4mu l} {\rm 1\mskip-4mu l}
{\rm 1\mskip-4.5mu l} {\rm 1\mskip-5mu l}}
\newtheorem{theorem}{Theorem}[section] 
\newtheorem{corollary}[theorem]{Corollary}
\newtheorem{lemma}[theorem]{Lemma}
\newtheorem{assumption}[theorem]{Assumption}
\title{Fluctuation Relations associated to an arbitrary bijection in path space}
\author[1]{Raphael Chetrite \thanks{raphael.chetrite@univ-cotedazur.fr}}
\author[2,3]{Stefano Marcantoni \thanks{Corresponding author: stefano.marcantoni@gssi.it}} 
\affil[1]{Institut de Physique de Nice (INPHYNI), Universit\'e C\^ote d'Azur, CNRS, \protect\\ 17 rue Julien Laupr\^etre, 06200 Nice, France}
\affil[2]{Laboratoire J. A. Dieudonn\'e (LJAD), Universit\'e C\^ote d'Azur, CNRS, \protect\\ Parc Valrose, 06108 Nice, France}
\affil[3]{Mathematics Division,  Gran Sasso Science Institute,  \protect\\ Viale Rendina 24-26-28, 67100 L'Aquila, Italy}
\begin{document}

\maketitle

\begin{center}
    \emph{Dedicated to Claudio Landim on the occasion of his 60th birthday}
\vspace{0.5cm}
\end{center}

\begin{abstract}
We introduce a framework to identify Fluctuation Relations for vector-valued observables in physical systems evolving through a stochastic dynamics. These relations arise from the particular structure of a suitable entropic functional and are induced by transformations in trajectory space that are invertible but are not involutions, typical examples being spatial rotations and translations. In doing so, we recover as particular cases results known in the literature as isometric fluctuation relations or spatial fluctuation relations and moreover we provide a recipe to find new ones. We mainly discuss two case studies, namely stochastic processes described by a canonical path probability and non degenerate diffusion processes. In both cases we provide sufficient conditions for the  fluctuation relations to hold, considering either finite time or asymptotically large times.
\end{abstract}

\tableofcontents     

\section{Introduction}

A Fluctuation Relation is a constraint that manifests when studying the statistics of a physical observable, in particular when comparing the probability that the observable attains a certain value with the probability that it attains another value under (possibly) different but closely related processes. Usually, it states that some physical outcomes are exponentially suppressed (in terms of relevant control parameters) with respect to others. For example, when the observable is the entropy production, fluctuation relations can be considered a refinement of the second law of thermodynamics, quantifying the probability of observing currents flowing in the ``wrong'' direction. Their importance also relies on the fact that they are among the few results in nonequilibrium statistical mechanics that hold true arbitrarily far from equilibrium, and other well-known results can be derived from them, like the fluctuation-dissipation theorem in the linear-response regime.

Investigations around this topic started from the numerical observation of a symmetry in the fluctuations of pressure in an externally-driven thermostatted particle system \cite{ECM93}.
The first theoretical result followed soon and it is the celebrated Gallavotti-Cohen fluctuation theorem, constraining the entropy production in stationary states of deterministic chaotic systems \cite{GC95a,GC95b}. Another influential identity for the work performed in closed systems with external driving has been later found by Jarzynski \cite{Jar97}. For these kind of results, randomness arises only from the probability distribution of the initial condition.

After the initial phase, research about fluctuation relations mainly involved systems with a stochastic dynamics, with the works \cite{Kur98,LS99,Mae99} being the seminal contributions. 
It was indeed understood (see for instance \cite{Mae99} for a detailed account) that these relations can be obtained almost tautologically by defining a generalized entropic functional as the Radon-Nikodim derivative of two probability measures of interest and using some duality relations that exploit the involution property of time-reversal (or similar involutions).
Many relations followed, using different approaches (see e.g. \cite{Sei12} for a review), so that the need for a unifying frameworks became prominent.  In the context of diffusion processes a first unification was presented in \cite{CG08} (see also \cite{Che18} for general processes), where the variety of results was linked to the different ways of defining a reversed protocol.

However, some fluctuation relations still could not fit into this scheme. These were named ``isometric'' fluctuation relations because involved the probability of observables transformed under some general isometries that do not satisfy the involution property, like spatial rotations (in fact they were also dubbed ``spatial'' later on \cite{PRG15}). They appeared for the first time in the context of Macroscopic Fluctuation Theory in \cite{Hur+11}, where the authors compared the probability of a (macroscopic) current in a chosen direction with the probability of the current rotated by an arbitrary angle, and found the exponential bias typical of fluctuation relations. Apart from follow-ups of the same authors \cite{Hur+14}, they were further studied and generalized in \cite{LG14,LG15} for equilibrium states with broken discrete and continuous symmetries, and in \cite{VHT14,VH16} for anisotropic non-equilibrium systems. Importantly, these new relations were also tested experimentally with some anisotropic rod moving on a substrate of hard spheres \cite{Kum+15} and with a hot Brownian swimmer \cite{Fal+16}. Moreover, a theoretical investigation of these relations from a microscopic point of view was later conducted in \cite{PRG15,MPG20}. In particular, in the second paper the connection with canonical biasing with respect to a given observable was highlighted, thus extending the set of candidate observables beyond currents. In \cite{VPG20}, the isometric fluctuation relations were also used to derive thermodynamic uncertainty relations for Markov processes in continuous $d$-dimensional space, thus showing their importance in deducing further constraints out of equilibrium.

In this paper, we aim at including also these fluctuation relations into a general framework, to better understand what are the assumptions needed to get those results. The known fluctuation relations from the literature are then obtained as particular cases of our main findings. We focus here on the general ideas that allow to unify different relations and frame the results into a language that is familiar to physicists, trying at the same time to be precise from a mathematical point of view. In cases where full rigor is not reached we point to the relevant mathematical literature that could be useful to proceed in the program.

The paper is structured as follows.
In Section~\ref{sec:setup}  we introduce the notation and the fundamental background for the derivation of the results. In particular, we define the generalized scalar entropic functional that is the central object around which the paper is built. We state its fundamental symmetry property that follows directly from the definition and that we dub ``mother fluctuation relation''. We also mention some basic notions of large deviation theory that are needed in the following. In Section~\ref{sec:Inv} we recall the usual case of fluctuation relations induced by involutions, and highlight the properties that do not extend to the more general case. In Section~\ref{sec:General--transformation} we state our general result and the assumptions needed for its derivation. This is the core of our unifying framework and the most important point for the applications in physics. However, the considered assumptions may be difficult to check in general. For this reason, we present in the next Sections two concrete situations where they can be verified. In particular, the so-called \emph{canonical processes}, that are processes obtained from a base process via a tilting procedure, are discussed in Section~\ref{sec:canonical}. In this case, Markovianity is not needed to obtain the result and we clarify this with an example about semi-Markov processes with finite state space. Assuming a particular form for the semi-Markov kernel and for a particular observable we are able to compute the moment generating function analytically at any time and the asymptotic cumulant generating function for large times, that allows to access also the large deviation rate function. One can then easily check by means of the explicit formulas that the fluctuation relations are satisfied. As a second example in the canonical setting we also consider a multidimensional Langevin equation with quadratic potential. The Gaussian nature of the problem allows to treat it without resorting to the overdamped limit and we can compute analytically the asymptotic cumulant generating function and the rate function. Also in this case we can check the fluctuation relations. Finally, in Section~\ref{sec:Markovian} we consider the case of generic non-degenerate diffusion processes. We present a result about finite-time fluctuation relations assuming some constraints on the drift and covariance matrix and we conjecture about the form of the respective time-asymptotic results. Even if our assumptions may seem very restrictive, we do not need to assume a constant diffusion matrix, and therefore we can generalize previous results appeared in the literature, notably \cite{PRG15}. We also discuss an example dealing with multiplicative noise to clarify this aspect.

\section{Setup and Notation}
\label{sec:setup}

In this Section we fix the notation and introduce the fundamental ingredients for the following analysis. Focusing on physical systems with a stochastic dynamics, our aim is to provide a general and unifying framework for the different fluctuation relations appeared in the literature. In turn, this general scheme can inspire and facilitate the discovery of new physically relevant relations in concrete models.

In particular, we consider stochastic processes $X_t \in E$ in the time-interval $[0,T]$ specified through a path probability measure $\mathbb{P}_{\left[0,T\right]}$, that is a probability measure on the space of paths, or trajectories, $\Omega_T$. In general, one chooses $\Omega_T= D([0,T], E)$, that is the space of c\'adl\'ag (right-continuous with left-limit) functions, but depending on the application we could also restrict to the space of continuous functions $C([0,T], E)$. If not explicitly stated, the stochastic processes considered are not supposed to be Markovian.

We also introduce $R$, a bijective transformation in the space of
trajectories $R : \Omega_T \to \Omega_T$. In many applications, $R$ can be thought of as an involution, e.g. time-reversal, but in the present manuscript we want to go beyond this restriction (the involution case will be nevertheless recalled in Section~\ref{sec:Inv}). In particular, we want to include the case of spatial rotations considered in this set-up for the first time in \cite{Hur+11}.

The main object of interest in this paper is the statistics of some $n$-dimensional real stochastic observables, that are functionals\footnote{All functionals considered in this paper are assumed to be measurable.} $A_T$ on the path space, $A_T: \Omega_T \to \mathbb{R}^n$. Often, that satisfy the time-additive property, namely $A_{T+S}= A_T + A_S \circ \theta_T$, where $\theta_T$ is time-translation by $T$. 
Example of such additive observable is 
\[
A_{T}=\int_{0}^{T}dtf(X_{t}),
\]
where $f$ is a scalar field on $E$.

Given a path measure $\mathbb{P}$, the statistics of $A_T$ is completely described by the object
\begin{equation}\label{eq:probobs}
    \mathcal{P}_{A_{T}} \equiv \mathbb{P} \circ A_T^{-1} 
\end{equation}
that is a probability measure on $\mathbb{R}^n$. Here $A_T^{-1}$ is the pre-image under the functional $A_T$, namely, given a subset $S \subseteq \mathbb{R}^n$ one has $A_T^{-1}(S) = \{ X_{\left[0,T\right]} \in \Omega_T \; | \; A_T(X_{\left[0,T\right]}) \in S \}$. This notation is common in the probabilistic literature and should not be confused with the inverse, even though the two notions coincide for bijections. 

\subsection{Entropic Functional}
\label{subsec:entropic_functional}
  
Given two different path measures $\mathbb{P}_{\left[0,T\right]}$ and $\mathbb{Q}_{\left[0,T\right]}$ on the same path space $\Omega_T$ and the transformation $R$, we define a scalar entropic functional $\varSigma_{T}^{\mathbb{P},\mathbb{Q}} : \Omega_T \to \mathbb{R}$ as follows
\begin{equation}\label{eq:AF}
\exp\left(-\varSigma_{T}^{\mathbb{P},\mathbb{Q}}\right)  \equiv\frac{d\mathbb{Q}_{\left[0,T\right]} \circ R }{d\mathbb{P}_{\left[0,T\right]}} \, ,
\end{equation}
where the notation on right-hand side denotes the Radon-Nikodym derivative. The existence of this quantity requires that $d\mathbb{Q}_{\left[0,T\right]}\circ R$ is absolutely continuous with respect to $d\mathbb{P}_{\left[0,T\right]}$. In the following, we also assume that the converse is true so that the two measures are in fact equivalent.

From this definition, denoting by $\mathbb{E}_{\mathbb{P}}$ the expectation with respect to the measure $\mathbb{P}$, we get almost directly the following lemma that we call the \emph{``Mother'' Fluctuation Relation} for an arbitrary functional of the path $Z : \Omega_T \to \mathbb{R}$.
\begin{lemma}[Mother fluctuation relation]\label{lem:mother}
Given two path measures $\mathbb{P}$ and $\mathbb{Q}$ on some trajectory space $\Omega_T$, a bijection $R : \Omega_T \to \Omega_T$ such that $R \Omega_T = \Omega_T$ and a scalar entropic functional defined as in \eqref{eq:AF}, the following relation holds 
   \begin{equation}
\mathbb{E}_{\mathbb{Q}}\left[Z\left(X_{\left[0,T\right]}\right)\right]=\ensuremath{\mathbb{E}_{\mathbb{P}}\left[\exp\left(-\varSigma_{T}^{\mathbb{P},\mathbb{Q}}\right)Z\left(R\left(X_{\left[0,T\right]}\right)\right)\right]}.\label{eq:MRF}
\end{equation} 
\end{lemma}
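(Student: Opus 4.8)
The plan is to derive \eqref{eq:MRF} from a single change of variables in path space, reading the definition \eqref{eq:AF} of the entropic functional from right to left. First I would rewrite the right-hand side of \eqref{eq:MRF} as an integral over $\Omega_T$ against $\mathbb{P}_{[0,T]}$,
\[
\mathbb{E}_{\mathbb{P}}\!\left[\exp\!\left(-\varSigma_{T}^{\mathbb{P},\mathbb{Q}}\right)Z\!\left(R\!\left(X_{[0,T]}\right)\right)\right]=\int_{\Omega_T}\exp\!\left(-\varSigma_{T}^{\mathbb{P},\mathbb{Q}}(\omega)\right)(Z\circ R)(\omega)\,d\mathbb{P}_{[0,T]}(\omega),
\]
and then use \eqref{eq:AF} to recognise the factor $\exp(-\varSigma_{T}^{\mathbb{P},\mathbb{Q}})\,d\mathbb{P}_{[0,T]}$ as the measure $d(\mathbb{Q}_{[0,T]}\circ R)$. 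The right-hand side of \eqref{eq:MRF} therefore equals $\int_{\Omega_T}(Z\circ R)\,d(\mathbb{Q}_{[0,T]}\circ R)$.

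The crucial step is then the transfer (change-of-variables) formula for the image measure $\mathbb{Q}_{[0,T]}\circ R$. Consistently with the notation used in \eqref{eq:probobs}, this measure is characterised by $(\mathbb{Q}_{[0,T]}\circ R)(S)=\mathbb{Q}_{[0,T]}(R(S))$ for measurable $S\subseteq\Omega_T$, i.e. it is the law of $R^{-1}(X_{[0,T]})$ when $X_{[0,T]}$ is distributed according to $\mathbb{Q}_{[0,T]}$; equivalently, $\int_{\Omega_T}h\,d(\mathbb{Q}_{[0,T]}\circ R)=\int_{\Omega_T}(h\circ R^{-1})\,d\mathbb{Q}_{[0,T]}$ for every measurable $h$ (proved first for indicators, then simple functions, then nonnegative $h$ by monotone convergence, and finally for integrable $h$ by linearity). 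Applying this with $h=Z\circ R$ and using that $R$ is a bijection, so that $(Z\circ R)\circ R^{-1}=Z$, collapses the composition and gives $\int_{\Omega_T}Z\,d\mathbb{Q}_{[0,T]}=\mathbb{E}_{\mathbb{Q}}[Z(X_{[0,T]})]$, which is the left-hand side of \eqref{eq:MRF}.

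So the argument is genuinely short — this is the ``almost tautological'' mechanism alluded to in the Introduction — and I would organise it as: (i) rewrite the right-hand side as an integral against $\mathbb{P}_{[0,T]}$ and absorb the exponential factor via \eqref{eq:AF}; (ii) apply the transfer formula for $\mathbb{Q}_{[0,T]}\circ R$; (iii) cancel $R$ against $R^{-1}$. The only points that require care, and hence the main (mild) obstacle, are the measure-theoretic hypotheses that must be made explicit: $R$ and $R^{-1}$ should be measurable for $\mathbb{Q}_{[0,T]}\circ R$ to be well defined; the absolute continuity assumed after \eqref{eq:AF} is what makes the Radon–Nikodym derivative exist; the hypothesis $R\Omega_T=\Omega_T$ guarantees that $\mathbb{Q}_{[0,T]}\circ R$ is still a probability measure with no loss of mass; and the identity should first be established for bounded (or nonnegative) $Z$ and then extended to any $Z$ for which the right-hand side is absolutely convergent. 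Unlike the involutive case treated in Section~\ref{sec:Inv}, one cannot simplify further using $R^{2}=\mathrm{id}$, so the direction of the change of variables — a pushforward under $R^{-1}$ rather than under $R$ — has to be tracked consistently throughout.
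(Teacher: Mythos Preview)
Your argument is correct and is essentially the same as the paper's: both proofs hinge on the single change of variables $X_{[0,T]}\mapsto R(X_{[0,T]})$ combined with the definition \eqref{eq:AF}, the only difference being that the paper starts from $\mathbb{E}_{\mathbb{Q}}[Z]$ and changes variables forward to reach $\mathbb{E}_{\mathbb{P}}[\exp(-\varSigma_T^{\mathbb{P},\mathbb{Q}})\,Z\circ R]$, whereas you start from the right-hand side and push forward under $R^{-1}$. Your more explicit discussion of the measure-theoretic hypotheses (measurability of $R$ and $R^{-1}$, the role of $R\Omega_T=\Omega_T$, and the extension from bounded to integrable $Z$) is a welcome addition that the paper leaves implicit.
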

\begin{proof}
Starting from the definition of the expectation on the left-hand side, we find the following chain of equalities
  \begin{align}
  \mathbb{E}_{\mathbb{Q}}\left[Z\left(X_{\left[0,T\right]}\right)\right] &\equiv \int_{\Omega_T} Z\left(X_{\left[0,T\right]}\right) d\mathbb{Q}\left(X_{\left[0,T\right]}\right)  \nonumber \\
  &=  \int_{R^{-1}\Omega_T} Z\left( R \left( X_{\left[0,T\right]}\right) \right) d\mathbb{Q}\left( R\left( X_{\left[0,T\right]}\right)\right)  \nonumber \\
  &=  \int_{\Omega_T} Z\left( R \left( X_{\left[0,T\right]}\right) \right) \exp\left(-\varSigma_{T}^{\mathbb{P},\mathbb{Q}}\right) d\mathbb{P}\left(X_{\left[0,T\right]}\right) \nonumber \\
  &\equiv \mathbb{E}_{\mathbb{P}}\left[\exp\left(-\varSigma_{T}^{\mathbb{P},\mathbb{Q}}\right) Z\left(R\left(X_{\left[0,T\right]}\right)\right)\right]
  \end{align}
where in the first step we used the fact that the inverse $R^{-1}$ exists because $R$ is a bijection, and in the second step we exploited the property $R^{-1}\Omega_T = \Omega_T = R \Omega_T$ and the definition of the scalar entropic functional \eqref{eq:AF}.   
\end{proof}

\noindent
Note that this relation does not require any Markovianity and/or equilibrium assumptions. In the particular case $\mathbb{Q}=\mathbb{P}$, for simplicity we denote all the objects with a unique super-index, for instance $\varSigma_{T}^{\mathbb{P},\mathbb{P}}\rightarrow \varSigma_{T}^{\mathbb{P}}$
for the scalar entropic functional. In this case, for a functional $Z\left(X_{\left[0,T\right]}\right)$ the previous relation becomes 
\begin{equation}
\mathbb{E}_{\mathbb{P}}\left[Z\left(X_{\left[0,T\right]}\right)\right]=\mathbb{E}_{\mathbb{P}}\left[\exp\left(-\varSigma_{T}^{\mathbb{P}}\right)Z\left(R\left(X_{\left[0,T\right]}\right)\right)\right].\label{eq:MRF-1}
\end{equation}
In the following, starting with Section \ref{sec:General--transformation}, we will restrict to this case.

It is crucial to remark that with respect to the usual case \cite{Mae99,LS99,CG08,Sei12,Rol+24} where $R$ is the time-reversal, i.e an involution ($R^{2}=\mathrm{Id}$), some physically relevant properties of the entropic functional \eqref{eq:AF} are still valid, \emph{but other properties are lost}. In particular, the following properties remain valid:
\begin{enumerate}
\item Thanks to the normalisation of the path probability $\mathbb{Q}_{\left[0,T\right]}$,
the choice $Z=1$ in (\ref{eq:MRF}) gives the `Generalized
Integral Fluctuation Relation' (GIFR)
$$
\mathbb{E}_{\mathbb{P}}\left[\exp\left(-\varSigma_{T}^{\mathbb{P},\mathbb{Q}}\right)\right]=1.
$$
\item Moreover, the average value with respect to $\mathbb{P}$ of $\varSigma_{T}^{\mathbb{P},\mathbb{Q}}$ can be written as a Kullback-Leibler divergence, i.e. 
$$
\mathbb{E}_{\mathbb{P}}\left[\varSigma_{T}^{\mathbb{P},\mathbb{Q}}\right]=D_{KL}\left[\left.\mathbb{P}_{\left[0,T\right]}\right\Vert \mathbb{Q}_{\left[0,T\right]}\circ R\right].
$$
\item The last property implies that we have the `Generalized Second Law'
\[
\mathbb{E}_{\mathbb{P}}\left[\varSigma_{T}^{\mathbb{P},\mathbb{Q}}\right]\geq0.
\]
\end{enumerate}
On the other hand, the following properties that are true for involutions do not hold any more:
\begin{enumerate}
\item There is no general duality relation, i.e. a relation between
$\varSigma_{T}^{\mathbb{P},\mathbb{Q}}$ and $\varSigma_{T}^{\mathbb{Q},\mathbb{P}}$, because
$$\varSigma_{T}^{\mathbb{P},\mathbb{Q}}\circ R=-\ln\frac{d\mathbb{Q}_{\left[0,T\right]}\circ R^{2}}{d\mathbb{P}_{\left[0,T\right]}\circ R}$$
and there is no obvious way of connecting the right-hand side with $\varSigma_{T}^{\mathbb{Q},\mathbb{P}}$ if $R^2 \neq \mathrm{Id}$. 
\item The absence of such a duality property does not allow to obtain from
\eqref{eq:MRF} a Duality Fluctuation Relation between the statistics of $\varSigma_{T}^{\mathbb{Q},\mathbb{P}}$ and $\varSigma_{T}^{\mathbb{P},\mathbb{Q}}$
(see the relations \eqref{eq:GC1},\eqref{eq:GC2},\eqref{eq:GC3}
in the next Section). Even in the case $\mathbb{Q}=\mathbb{P},$ these relations are typically lost\footnote{Nevertheless, in some situations they may hold. See the discussion in Section \ref{subsec:remarks}}. The main goal of this article is to provide alternative results for more general $R$s.
\end{enumerate}

\subsection{Large Deviations}
\label{subsec:LD}

Some of the results given in the following are true at any time $T$. However, one is also interested in fluctuation relations that hold true asimptotically for large times. In fact, the latter are somewhat more relevant, in the sense that they may emerge in the limit even if not true at finite time. This makes them less fragile to perturbations.  In order to discuss this second family of relations, we need to recall a few elements about the theory of large deviations.
Informally, the large deviation principle for a family of probability measures $P_T(x), T\geq 0$, states that asymptotically in $T$ non-typical values of $x$ are suppressed exponentially. The notation commonly used to denote this fact is

\begin{equation*}
    P_T(X \approx x) \asymp \mathrm{e}^{-T I^{X}(x)}
\end{equation*}
where the function $I^{X}$ is called the rate function. In the following, most of the times we omit the superscript $X$ relative to the observable when there is no room for confusion. More precisely \cite{DZ98}, the sequence $\{ P_T \}$ of probability measures on the probability space $(\chi, \mathcal{B})$ satisfies a large deviation principle with rate function $I$ if for any Borel set $B \in \mathcal{B}$ one has
\begin{equation*}
    - \underset{x \in B^\circ}{\inf} I(x) \leq \underset{T \to \infty} {\liminf} \frac{1}{T} \ln P_T(B) \leq \underset{T \to \infty} {\limsup} \frac{1}{T} \ln P_T(B) \leq - \underset{x \in \overline{B}}{\inf} I(x) \;,
\end{equation*}
where $B^\circ$ and $\overline{B}$ denote the interior and the closure of $B$, respectively. In the framework of large deviation theory, fluctuation relations that hold true asymptotically in time translate into constraints on the rate function.

Sometimes, when studying the statistics of some random functional (one-dimensional here for simplicity) $A_T \equiv A(X_{[0,T]})$, instead of focusing on the sequence of measures $\mathcal{P}_{A_T}$, it is more convenient to analyze the sequence of cumulant generating functions $\Lambda_T$ 
\begin{equation*}
\Lambda_T(k)  \equiv \ln \mathbb{E}_{\mathbb{P}} \left[ \exp\left(k A_T\right) \right] \;.
\end{equation*}

In this respect, a very useful result is the G\"artner-Ellis theorem \cite{DZ98} according to which the large deviation principle for the sequance of measures $\{ \mathcal{P}_{A_T/T} \}$ holds if the scaled cumulant generating function $\Lambda(k)$, defined as the following limit 
$$\Lambda (k) = \lim_{T \to \infty} \frac{\Lambda_T (k)}{ T},$$
is finite and differentiable for any $k \in \mathbb{R}$. Moreover, the rate function $I(a)$ is the Legendre-Fenchel transform of $\Lambda$, namely
\begin{equation*}
   I(a) = \sup_{k \in \mathbb{R}} \big( k a - \Lambda(k) \big) \;,
\end{equation*}
and therefore it is strictly convex. As a result, typically, if $\Lambda (k)$ has a symmetry, a corresponding symmetry is present for the rate function. We will see examples of this in the next Sections. 

More generally, we are interested in vector-valued observables $A_T \in \mathbb{R}^n$, whose cumulant generating function reads as before, with $k A_T$ replaced by $\langle  k , A_T\rangle$, for $k \in \mathbb{R}^n$. Moreover, the function $\Lambda(k)$ may only be finite on a subset $S \subset \mathbb{R}^n$. In this case, the G\"artner-Ellis theorem still applies provided that $\Lambda(k)$ is \emph{essentially smooth} \cite{DZ98}, namely the following conditions hold
\begin{itemize}
    \item[(i)] the origin is in the interior of $S$, i.e. $0 \in S^\circ$
    \item[(ii)] $\Lambda(k)$ is differentiable on $S^\circ$  
    \item[(iii)] $\Lambda(k)$ is steep, in the sense that $\lim_{n \to \infty} |\nabla \Lambda(k_n)| = \infty$ for any sequence $\{ k_n\}$ in $S^\circ$ converging to a boundary point of $S^\circ$
\end{itemize}

Note that the validity of G\"artner-Ellis theorem is not necessary to have a large deviation principle on the sequence of probability measures describing the relevant observable. In particular, the rate function does not need to be convex.  However, we decided here to keep the technical details to a minimum and restrict ourselves to the simplest conditions that allow to prove interesting fluctuation relations, that are the main focus of the paper.

\section{Usual Fluctuation Relations induced by an involution}
\label{sec:Inv}

Before stating the main results of the paper, in this Section, we informally recall the usual relevant case \cite{Mae99,LS99,CG08,Sei12,Rol+24} where $R$ is an involution i.e. $R^{2}=\mathrm{Id}$.
A prototypical example is the time reversal $\left(R\left(X_{\left[0,T\right]}\right)\right)_{t}=X_{T-t}$. As previously mentioned, in such a case a duality relation exists between $\varSigma_{T}^{\mathbb{P},\mathbb{Q}}$ and $\varSigma_{T}^{\mathbb{Q},\mathbb{P}}$, namely
\begin{equation}\label{eq:duality}
    \varSigma_{T}^{\mathbb{P},\mathbb{Q}}\circ R=-\ln\frac{d\mathbb{Q}_{\left[0,T\right]}}{d\mathbb{P}_{\left[0,T\right]}\circ R}=-\varSigma_{T}^{\mathbb{Q},\mathbb{P}}.
\end{equation}
This relation, together with a suitable choice of the functional $Z$ in Eq. \eqref{eq:MRF}, can be used to generate fluctuation relations for the statistics of the scalar entropic functionals $\varSigma_{T}^{\mathbb{P},\mathbb{Q}}$ and $\varSigma_{T}^{\mathbb{Q},\mathbb{P}}$. 
In particular, one defines the law of the stochastic variable $\varSigma_{T}^{\mathbb{P},\mathbb{Q}}$ under the process $\mathbb{P}$ and the law of $\varSigma_{T}^{\mathbb{Q},\mathbb{P}}$ under the process $\mathbb{Q}$ as follows
\begin{equation}
    \mathcal{P}_{\varSigma_{T}^{\mathbb{P},\mathbb{Q}}}^\mathbb{P} \equiv \mathbb{P} \circ (\varSigma_{T}^{\mathbb{P},\mathbb{Q}})^{-1} \,, \qquad  \mathcal{P}_{\varSigma_{T}^{\mathbb{Q},\mathbb{P}}}^\mathbb{Q} \equiv \mathbb{Q} \circ (\varSigma_{T}^{\mathbb{Q},\mathbb{P}})^{-1} \;.
\end{equation}
Then, setting $Z=f(\varSigma_{T}^{\mathbb{Q},\mathbb{P}})$ in \eqref{eq:MRF} and using $\varSigma_{T}^{\mathbb{Q},\mathbb{P}} \circ R = - \varSigma_{T}^{\mathbb{P},\mathbb{Q}}$ (see \eqref{eq:duality}), one finds
\begin{equation}
  \mathbb{E}_{\mathbb{Q}}\left[ f(\varSigma_{T}^{\mathbb{Q},\mathbb{P}}) \right] = \mathbb{E}_{\mathbb{P}}\left[ \exp\left(-\varSigma_{T}^{\mathbb{P},\mathbb{Q}}\right) f(-\varSigma_{T}^{\mathbb{P},\mathbb{Q}}) \right]   \;,
\end{equation}
that can be equivalently written
\begin{equation}
    \int_{\mathbb{R}} d\mathcal{P}_{\varSigma_{T}^{\mathbb{Q},\mathbb{P}}}^\mathbb{Q} (\sigma) f(\sigma) = \int_{\mathbb{R}} d\mathcal{P}_{-\varSigma_{T}^{\mathbb{P},\mathbb{Q}}}^\mathbb{P} (\sigma) \exp(\sigma) f(\sigma)
\end{equation}
Due to the arbitrariness of $f$ the previous relation can be recast as constraint on the Radon-Nikodym derivative
\begin{equation}
    \frac{d\mathcal{P}_{\varSigma_{T}^{\mathbb{Q},\mathbb{P}}}^\mathbb{Q}}{d\mathcal{P}_{-\varSigma_{T}^{\mathbb{P},\mathbb{Q}}}^\mathbb{P}} (\sigma) = \exp(\sigma)
\end{equation}
If the two probabilities are absolutely continuous with respect to the Lebesgue measure and therefore admit a density, the previous relation can be recast into a relation between the densities
\begin{equation}
\rho_{\varSigma_{T}^{\mathbb{Q},\mathbb{P}}}^{\mathbb{Q}}\left[\sigma\right]=\exp\left(\sigma\right)\rho_{\varSigma_{T}^{\mathbb{P},\mathbb{Q}}}^{\mathbb{P}}\left[-\sigma\right].
\label{eq:DFR}
\end{equation}

 In a similar way, setting $Z=\exp\left(k\varSigma_{T}^{\mathbb{Q},\mathbb{P}}  \right)$
in \eqref{eq:MRF} we obtain a finite-time fluctuation relation connecting the moment generating functions of $\varSigma_{T}^{\mathbb{Q},\mathbb{P}}$ and $\varSigma_{T}^{\mathbb{P},\mathbb{Q}}$:
\begin{equation}\label{eq:MRF-2}
\mathbb{E}_{\mathbb{Q}}\left[\exp\left(k\varSigma_{T}^{\mathbb{Q},\mathbb{P}}\right)\right]=\mathbb{E}_{\mathbb{P}}\left[\exp\left(-\left(k+1\right)\varSigma_{T}^{\mathbb{P},\mathbb{Q}}\right)\right] \,.
\end{equation}

Moreover, if the two sequences of probability measures parametrized by $T$, $ \mathcal{P}_{\varSigma_{T}^{\mathbb{Q},\mathbb{P}}/T}^\mathbb{Q}$  and $\mathcal{P}_{-\varSigma_{T}^{\mathbb{P},\mathbb{Q}}/T}^\mathbb{P} $, obey a large deviation principle with rate $T$ and rate functions $I_{\mathbb{Q},\mathbb{P}}$ and $I_{\mathbb{P},\mathbb{Q}}$, respectively
\begin{equation}\label{eq:ldestimates}
  \begin{cases}
\mathbb{Q} \left(\varSigma_{T}^{\mathbb{Q},\mathbb{P}} /T\approx \sigma \right) \asymp\exp\left(-TI_{\mathbb{P},\mathbb{Q}}(\sigma)\right)\\
\mathbb{P} \left(\varSigma_{T}^{\mathbb{P},\mathbb{Q}} /T\approx \sigma \right)  \asymp\exp\left(-TI_{\mathbb{Q},\mathbb{P}} (\sigma)\right)
\end{cases}  ,
\end{equation}
one obtains the \emph{duality asymptotic Fluctuation
Relation}
\begin{equation}
I_{\mathbb{Q},\mathbb{P}}\left(\sigma\right)=-\sigma+I_{\mathbb{P},\mathbb{Q}}\left(-\sigma\right).\label{eq:GC1}
\end{equation}

Also, if one can show that asymptotically in time the moment generating functions behave as follows,
\begin{equation}\label{eq:ldlaplace}
 \begin{cases}
\mathbb{E}_{\mathbb{P}}\left[\exp\left(k\varSigma_{T}^{\mathbb{P},\mathbb{Q}}\right)\right]\asymp\exp\left(T\Lambda_{\mathbb{P},\mathbb{Q}}(k)\right)\\
\mathbb{E}_{\mathbb{Q}}\left[\exp\left(k\varSigma_{T}^{\mathbb{Q},\mathbb{P}}\right)\right]\asymp\exp\left(T\Lambda_{\mathbb{Q},\mathbb{P}}(k)\right)
\end{cases},  
\end{equation}
one obtains the \emph{duality asymptotic Fluctuation
Relation} in the Laplace version 
\begin{equation}
\Lambda_{\mathbb{Q},\mathbb{P}}(k)=\Lambda_{\mathbb{P},\mathbb{Q}}(-k-1).\label{eq:GC2}
\end{equation}

In the particular case $\mathbb{Q}=\mathbb{P}$, the relations \eqref{eq:GC1},\eqref{eq:GC2}
become the famous \emph{asymptotic Fluctuation
Relation} for $\varSigma_{T}^{\mathbb{P}}$ \cite{GC95a,Kur98,Gal99,LS99,Mae99,CG08,Sei12,Rol+24} 
\begin{equation}
\begin{cases}
I_{\mathbb{P}}\left(\sigma\right)=-\sigma+I_{\mathbb{P}}\left(-\sigma\right)\\
\Lambda_{\mathbb{P}}(k)=\Lambda_{\mathbb{P}}(-k-1)
\end{cases}.\label{eq:GC3}
\end{equation}

On the mathematical side, the challenge is to rigorously prove the large deviation estimates \eqref{eq:ldestimates} and \eqref{eq:ldlaplace}, especially in the case of non-compact state space. See for instance \cite{BdG15,BGL23,Raq24} for recent results.

\section{General result: Fluctuation Relations (FRs) associated to an arbitrary bijection}
\label{sec:General--transformation}

\subsection{Fluctuation Relations}

In this Section, we show that even in absence of a 
Fluctuation Relation of type~\eqref{eq:GC3} for the scalar entropic functional $\varSigma_{T}^{\mathbb{P}}$~\eqref{eq:AF},
 when $R$ is not an involution, we can anyway
obtain different Fluctuation Relations under the following hypothesis. 
\begin{assumption}[Decomposition of the entropic functional and covariance hypothesis]\label{ass}
There exist (i) a vector-valued functional $A_T$ on the space of trajectories $A_T : \Omega_T \to \mathbb{R}^n$, (ii) a space-time homogeneous vector $w \in \mathbb{R}^n$, and (iii) a space-time homogeneous invertible matrix $\mathcal{R} \in M_n(\mathbb{R})$ such that
\begin{equation}\label{eq:ass1}
   \varSigma_{T}^{\mathbb{P}}=\left\langle w, A_T\right\rangle \; , \quad \text{with}  \quad A_T\left(R\left(X_{\left[0,T\right]}\right)\right)=\mathcal{R}A_T\left(X_{\left[0,T\right]}\right) \;,
\end{equation}
where the bracket $ \left\langle \cdot \, ,\cdot \right\rangle$ denotes the canonical
scalar product in $\mathbb{R}^{n}$.
\end{assumption}

We call the condition on the transformation of $A_T$ \emph{covariance of the observable}.
Note that for a generic path probability $\mathbb{P}$, a triple $(A_T, w, \mathcal{R})$ satisfying Eq.~\eqref{eq:ass1} may not exist. In particular, while writing the entropic functional as the scalar product of two vector-valued quantities can be done for free, requiring the covariance of $A_T$ with respect to $R$ is a highly nontrivial constraint on $\varSigma_{T}^{\mathbb{P}}$. Moreover, when a triple exists it is not necessarily unique, because for instance a space-time homogeneous orthogonal matrix $O$ commuting with $\mathcal{R}$ would be sufficient to construct an alternative triple $( O A_T, O w, \mathcal{R})$. 
Note finally that considering an $n$-dimensional vector observable $A_T$ cannot be easily dismissed in general. Indeed, a decomposition like \eqref{eq:ass1} for $n=1$ corresponds to the assumption that the entropic functional itself has the covariance property with respect to $R$, i.e. 
$\varSigma_{T}^{\mathbb{P}}\circ R=\mathcal{R}\varSigma_{T}^{\mathbb{P}}$ for some number $\mathcal{R}$. From the definition of $\varSigma_{T}^{\mathbb{P}}$ (\eqref{eq:AF} in the case $\mathbb{P}=\mathbb{Q}$) this relation can be rewritten as
$$\ln\left(\frac{d\mathbb{P}_{\left[0,T\right]}\circ R^{2}}{d\mathbb{P}_{\left[0,T\right]}\circ
R}\right)=\mathcal{R}\ln\left(\frac{d\mathbb{P}_{\left[0,T\right]}\circ R}{d\mathbb{P}_{\left[0,T\right]}}\right)$$
and it is often false for arbitrary $R$.
As expected, the relation is instead true for an involution $R^{2}=\mathrm{Id}$, where we find $\mathcal{R}=-1.$

In any case, assuming the validity of Assumption~\ref{ass} one can readily prove nontrivial Fluctuation Relations at finite time for the statistics of $A_T$. 
 
\begin{theorem}[Finite-time FRs]\label{thm:result1}
Under Assumption~\ref{ass} the following Finite-Time Fluctuation Relations hold true, for any $k,a \in \mathbb{R}^n$:
  \begin{equation}\label{eq:beau}
\frac{d\mathcal{P}_{A_T}}{d\mathcal{P}_{\mathcal{R}A_T}}(a)=\exp \left(-\left\langle w,\mathcal{R}^{-1} a\right\rangle \right) \; ,
\end{equation}
    \begin{equation}\label{eq:MRF-2-1-1}
\mathbb{E}_{\mathbb{P}}\Big[\exp\Big\langle k,A_T\Big\rangle \Big]=\mathbb{E}_{\mathbb{P}}\Big[\exp \Big\langle \mathcal{R}^{\dagger} k -
w,A_T\Big\rangle \Big] ,
    \end{equation}
where $\mathcal{P}_{A_T}$ is the law of the stochastic variable $A_T$.
\end{theorem}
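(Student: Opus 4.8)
The plan is to obtain both identities directly from the ``Mother fluctuation relation'' in its $\mathbb{Q}=\mathbb{P}$ form, Eq.~\eqref{eq:MRF-1}, i.e. $\mathbb{E}_{\mathbb{P}}\left[Z\left(X_{\left[0,T\right]}\right)\right]=\mathbb{E}_{\mathbb{P}}\left[\exp\left(-\varSigma_{T}^{\mathbb{P}}\right)Z\left(R\left(X_{\left[0,T\right]}\right)\right)\right]$, by feeding it with two well-chosen functionals $Z$ and then invoking Assumption~\ref{ass} to replace $\varSigma_{T}^{\mathbb{P}}$ by $\left\langle w,A_T\right\rangle$ and $A_T\circ R$ by $\mathcal{R}A_T$. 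No ingredient beyond \eqref{eq:MRF-1} and \eqref{eq:ass1} is needed; the whole argument amounts to a change of variables $a\mapsto\mathcal{R}^{-1}a$ in $\mathbb{R}^n$.

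For the moment generating function relation \eqref{eq:MRF-2-1-1}, I would take $Z\left(X_{\left[0,T\right]}\right)=\exp\left\langle k,A_T\left(X_{\left[0,T\right]}\right)\right\rangle$ for a fixed $k\in\mathbb{R}^n$. The left-hand side of \eqref{eq:MRF-1} is then $\mathbb{E}_{\mathbb{P}}\left[\exp\left\langle k,A_T\right\rangle\right]$. On the right-hand side, the covariance of the observable turns $Z\left(R\left(X_{\left[0,T\right]}\right)\right)$ into $\exp\left\langle k,\mathcal{R}A_T\right\rangle=\exp\left\langle \mathcal{R}^{\dagger}k,A_T\right\rangle$, where $\mathcal{R}^{\dagger}$ is the transpose with respect to the canonical scalar product, while the decomposition $\varSigma_{T}^{\mathbb{P}}=\left\langle w,A_T\right\rangle$ gives $\exp\left(-\varSigma_{T}^{\mathbb{P}}\right)=\exp\left(-\left\langle w,A_T\right\rangle\right)$. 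Merging the two exponentials produces $\mathbb{E}_{\mathbb{P}}\left[\exp\left\langle \mathcal{R}^{\dagger}k-w,A_T\right\rangle\right]$, which is exactly \eqref{eq:MRF-2-1-1}; as $k$ was arbitrary, the identity holds for every $k\in\mathbb{R}^n$.

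For the Radon--Nikodym relation \eqref{eq:beau}, I would instead take $Z\left(X_{\left[0,T\right]}\right)=g\left(A_T\left(X_{\left[0,T\right]}\right)\right)$ with $g:\mathbb{R}^n\to\mathbb{R}$ an arbitrary bounded measurable test function. By the definition \eqref{eq:probobs} of the pushforward, the left-hand side of \eqref{eq:MRF-1} equals $\int_{\mathbb{R}^n}g\,d\mathcal{P}_{A_T}$. On the right-hand side, covariance turns $Z\circ R$ into $g\left(\mathcal{R}A_T\right)$, so that it reads $\mathbb{E}_{\mathbb{P}}\left[\exp\left(-\left\langle w,A_T\right\rangle\right)g\left(\mathcal{R}A_T\right)\right]$; rewriting $A_T=\mathcal{R}^{-1}\left(\mathcal{R}A_T\right)$ inside the exponent using the invertibility of $\mathcal{R}$, and recognising the expectation as an integral against the law $\mathcal{P}_{\mathcal{R}A_T}$ of the variable $\mathcal{R}A_T=A_T\circ R$, one gets $\int_{\mathbb{R}^n}g(a)\exp\left(-\left\langle w,\mathcal{R}^{-1}a\right\rangle\right)d\mathcal{P}_{\mathcal{R}A_T}(a)$. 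Since the resulting equality $\int g\,d\mathcal{P}_{A_T}=\int g(a)\exp\left(-\left\langle w,\mathcal{R}^{-1}a\right\rangle\right)d\mathcal{P}_{\mathcal{R}A_T}(a)$ holds for all bounded measurable $g$, it identifies $\exp\left(-\left\langle w,\mathcal{R}^{-1}a\right\rangle\right)$ as the density $\tfrac{d\mathcal{P}_{A_T}}{d\mathcal{P}_{\mathcal{R}A_T}}(a)$, which is \eqref{eq:beau}.

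I do not expect a genuine obstacle: the argument is a one-line substitution into \eqref{eq:MRF-1}. The only points deserving a word of care are measure-theoretic. First, the Radon--Nikodym derivative in \eqref{eq:beau} is well defined precisely because $\mathbb{P}_{\left[0,T\right]}$ and $\mathbb{P}_{\left[0,T\right]}\circ R$ were assumed equivalent, hence so are their pushforwards $\mathcal{P}_{A_T}$ and $\mathcal{P}_{\mathcal{R}A_T}$ under $A_T$; consequently \eqref{eq:beau} should be read as an equality in $L^1(\mathcal{P}_{\mathcal{R}A_T})$, i.e. for $\mathcal{P}_{\mathcal{R}A_T}$-almost every $a$. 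Second, one should check integrability of the chosen $Z$'s where needed, which is automatic for bounded $g$ in the second part and is implicitly required in \eqref{eq:MRF-2-1-1} at those $k$ for which the moment generating functions are finite.
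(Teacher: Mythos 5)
Your proof is correct and follows essentially the same route as the paper's: both plug $Z=\exp\langle k,A_T\rangle$ and $Z=f(A_T)$ (resp.\ $g(A_T)$) into the Mother fluctuation relation~\eqref{eq:MRF-1}, invoke Assumption~\ref{ass} to replace $\varSigma_T^{\mathbb{P}}$ by $\langle w,A_T\rangle$ and $A_T\circ R$ by $\mathcal{R}A_T$, and then identify the resulting expectation as an integral against $\mathcal{P}_{\mathcal{R}A_T}=\mathcal{P}_{A_T}\circ\mathcal{R}^{-1}$. The only cosmetic difference is that you rewrite the integrand as a function of $\mathcal{R}A_T$ directly, while the paper first expresses both sides as integrals against $\mathcal{P}_{A_T}$ and then performs the change of variables $a\mapsto\mathcal{R}^{-1}a$; your measure-theoretic remarks about equivalence of the pushforwards and a.e.\ interpretation of~\eqref{eq:beau} are accurate.
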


\begin{proof}

Consider a measurable function $f: \mathbb{R}^n \to \mathbb{R}^n$. One has by definition
\begin{equation}
    \int_{\mathbb{R}^n} d\mathcal{P}_{A_T}(a) f(a) = \int_{\Omega} d\mathbb{P}(X_{\left[0,T\right]}) f(A_T(X_{\left[0,T\right]})) = \mathbb{E}_{\mathbb{P}}\left[f(A_T)\right] .
\end{equation}
Setting $Z\left(X_{\left[0,T\right]}\right)= f(A_T(X_{\left[0,T\right]})) $ 
in~(\ref{eq:MRF-1}), and using the covariance hypothesis \eqref{eq:ass1}, we obtain
\begin{align}\label{eq:MRF-1-1}
\mathbb{E}_{\mathbb{P}}\left[f(A_T)\right] &= \mathbb{E}_{\mathbb{P}}\left[\exp\left(-\left\langle w ,A_T \right\rangle \right) f(\mathcal{R}A_T)\right] \nonumber \\
&=  \int_{\mathbb{R}^n} d\mathcal{P}_{A_T}(a) \exp\langle -w, a \rangle f(\mathcal{R} a) \nonumber \\
&=  \int_{\mathbb{R}^n} d\mathcal{P}_{A_T}(\mathcal{R}^{-1} a) \exp\langle -w, \mathcal{R}^{-1} a \rangle f(a) ,
\end{align}
where the existence of $\mathcal{R}^{-1}$ is inherited from the existence of $R^{-1}$. Also, one has $\mathcal{R}\mathbb{R}^n = \mathbb{R}^n = \mathcal{R}^{-1}\mathbb{R}^n$ that is used in the last equality. More in general, the set of possible values of $A_T$ is preserved by $\mathcal{R}$, as a consequence of the covariance assumption and the hypothesis that $R\Omega_T=\Omega_T$ (see Lemma \ref{lem:mother}). Indeed, if $A_T(X_{\left[0,T\right]}) = a$ for some path $X_{\left[0,T\right]}$, then $Y_{\left[0,T\right]} \equiv R^{-1} X_{\left[0,T\right]}$ is another allowed path, because $R\Omega_T=\Omega_T=R^{-1}\Omega_T$, and the covariance assumption \eqref{eq:ass1} ensures that $\mathcal{R}A_T(Y_{\left[0,T\right]}) = a$, so that there is indeed a path where $\mathcal{R}A_T= a$.
Recalling that $\mathcal{P}_{A_T} \circ \mathcal{R}^{-1}= \mathcal{P}_{\mathcal{R}A_T} $, one has the equality
\begin{equation}
    \int_{\mathbb{R}^n} d\mathcal{P}_{A_T}(a) f(a) = \int_{\mathbb{R}^n} d\mathcal{P}_{\mathcal{R}A_T}(a) \exp\langle -w, \mathcal{R}^{-1} a \rangle f(a) ,
\end{equation}
and the arbitrariness of $f$ implies the validity of \eqref{eq:beau}.

The symmetry \eqref{eq:MRF-2-1-1} on the moment generating function of $A_T$, can be obtained setting $Z\left(X_{\left[0,T\right]}\right)=\exp\left(\left\langle k ,A\left(X_{\left[0,T\right]}\right)\right\rangle \right)$
in \eqref{eq:MRF-1}. Explicitly,
\begin{equation}\label{eq:MRF-2-1}
\mathbb{E}_{\mathbb{P}}\Big[\exp\left\langle k ,A_T\right\rangle \Big]=\mathbb{E}_{\mathbb{P}}\Big[\exp\left(-\left\langle w,A_T\right\rangle \right)\exp\left\langle k ,\mathcal{R}A_T\right\rangle \Big],
\end{equation}
and this can then be easily recast as Eq.~\eqref{eq:MRF-2-1-1} using $\left\langle k ,\mathcal{R}A_T\right\rangle = \left\langle \mathcal{R}^\dag k ,A_T\right\rangle$.
\end{proof}

Theorem \ref{thm:result1} is the first important result of this paper and it essentially follows just from the decomposition \eqref{eq:ass1} of the entropic functional. At any finite time $T$, it gives a precise comparison between the probability measure of the vector-valued observable $A_T$ and the probability measure of the transformed observable $\mathcal{R}A_T$. Moreover, through the symmetry of the moment generating function (and consequently of the cumulant generating function) it encodes the hidden constraints on the statistics of $A_T$ (see also Section \ref{subsec:cumulants} for more discussions in this direction).

The relations \eqref{eq:beau} and \eqref{eq:MRF-2-1-1} can translate in the large deviation regime of $A_T$ if it is well-defined. Informally, one could say that
  \begin{equation}
\begin{cases}\label{eq:Ldev}
\mathbb{P} \left( A_T /T \approx a\right) \asymp \exp\left(-TI_{\mathbb{P}} ( a )\right)\\
\mathbb{E}_{\mathbb{P}}\left[\exp\left(\left\langle k,A_T\right\rangle \right)\right]\asymp\exp\left(T\Lambda_{\mathbb{P}}(k)\right)
\end{cases}.
\end{equation}
More precisely, we want to consider the following Assumptions.
\begin{assumption}[Weaker decomposition]\label{ass_weak}
  There exists a triple $(A_T, w, \mathcal{R})$ as in Assumption \ref{ass}, with $A_T$ satisfying \eqref{eq:ass1}, and such that 
  \begin{equation}\label{eq:weakdec}
       \varSigma_{T}^{\mathbb{P}}=\left\langle w, A_T\right\rangle + B_T \; ,
  \end{equation}
  where $B_T$ is a term subleading in $T$, i.e. almost surely
  \begin{equation}\label{eq:subleading}
     \lim_{T\to \infty}\sup_{X_{\left[0,T\right]}}|B_T|/T=0 .
  \end{equation}
\end{assumption}
\begin{assumption}[Large deviation principle]\label{ass:ld}
    Let $A_T: \Omega_T \to \mathbb{R}^n$ be a vector-valued functional in path space. For any $k \in \mathbb{R}^n$ the following limit exists and is finite
   \begin{equation}\label{eq:cumul}
      \Lambda_{\mathbb{P}}(k) = \lim_{T \to \infty} \frac{1}{T}\ln \mathbb{E}_\mathbb{P} \left[\exp\left\langle k,A_T\right\rangle \right] ,
   \end{equation}
   and the function $\Lambda_{\mathbb{P}}(k)$ is differentiable\footnote{An elementary counter-example of this assumption comes for the scalar random variable $A_{T}=T\left(2\mathcal{B}_{\frac{1}{2}}-1\right)$, 
with $\mathcal{B}_{\frac{1}{2}}$ the Bernoulli random variable with
parameter $\frac{1}{2}$, where one can directly show that $ \Lambda_{\mathbb{P}}(k)=\left|k\right|$.}. Alternatively, the limit is finite only on a set $S \subset \mathbb{R}^n$ and the the function $\Lambda(k)$ is essentially smooth\footnote{Recall the definition given in Section \ref{subsec:LD}}.
\end{assumption}

Given these assumptions, that need to be proved in more specific mathematical models describing physically relevant contexts, one arrives at the next Theorem that is one of the main results of this article. 
\begin{theorem}[Asymptotic FRs] \label{thm:result2} 
    If both Assumption~\ref{ass_weak} and Assumption~\ref{ass:ld} are verified, the following Asymptotic Fluctuation Relation hold for the scaled cumulant generating function of the observable $A_T$ 
    \begin{equation}\label{eq:symlaplace}
    \Lambda_{\mathbb{P}}\big(\big(\mathcal{R}^\dag\big)^{-1} k \big)=\Lambda_{\mathbb{P}}\big(k - w\big) . 
    \end{equation}
    Moreover, a large deviation principle exists for the family of probability measures $\mathcal{P}_{A_T/T}$ with speed $T$ and rate function $I_{\mathbb{P}}$ and the rate function satisfies the Asymptotic Fluctuation Relation
    \begin{equation}\label{eq:MR-4}
      I_{\mathbb{P}}\left(\mathcal{R} a\right)=\left\langle w,a \right\rangle +I_{\mathbb{P}}\left(a\right)  .
    \end{equation}
\end{theorem}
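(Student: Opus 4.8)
The plan is to prove the two asymptotic relations in sequence, deriving the rate-function relation \eqref{eq:MR-4} from the scaled-cumulant-generating-function relation \eqref{eq:symlaplace} via the G\"artner–Ellis theorem. First I would establish \eqref{eq:symlaplace}. Start from the finite-time identity \eqref{eq:MRF-2-1} (equivalently \eqref{eq:MRF-2-1-1}), which under the \emph{weaker} decomposition \eqref{eq:weakdec} acquires an extra factor $\exp(-B_T)$: setting $Z=\exp\langle k,A_T\rangle$ in \eqref{eq:MRF-1} and using $A_T\circ R=\mathcal R A_T$ and $\varSigma_T^{\mathbb P}=\langle w,A_T\rangle+B_T$ gives
\begin{equation*}
\mathbb E_{\mathbb P}\big[\exp\langle k,A_T\rangle\big]=\mathbb E_{\mathbb P}\big[\exp(-\langle w,A_T\rangle-B_T)\exp\langle \mathcal R^\dag k,A_T\rangle\big].
\end{equation*}
Taking $\tfrac1T\ln(\cdot)$ and letting $T\to\infty$, the point is that the nuisance term $B_T$ does not affect the limit: by the uniform subleading bound \eqref{eq:subleading}, $e^{-B_T}=e^{o(T)}$ almost surely and uniformly, so $\big|\tfrac1T\ln\mathbb E_{\mathbb P}[e^{-B_T}e^{\langle \mathcal R^\dag k-w,A_T\rangle}]-\tfrac1T\ln\mathbb E_{\mathbb P}[e^{\langle \mathcal R^\dag k-w,A_T\rangle}]\big|\le \tfrac1T\sup|B_T|\to0$. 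Hence Assumption~\ref{ass:ld} yields $\Lambda_{\mathbb P}(k)=\Lambda_{\mathbb P}(\mathcal R^\dag k-w)$ wherever the limit exists; replacing $k$ by $(\mathcal R^\dag)^{-1}k$ (legitimate since $\mathcal R$, hence $\mathcal R^\dag$, is invertible) gives exactly \eqref{eq:symlaplace}. One should also check that the domain $S$ of $\Lambda_{\mathbb P}$ is mapped to itself by $k\mapsto (\mathcal R^\dag)^{-1}k$ composed with the shift, which again follows from the identity and the preservation of the range of $A_T$ under $\mathcal R$ noted in the proof of Theorem~\ref{thm:result1}.

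Next I would obtain the large deviation principle and \eqref{eq:MR-4}. Under Assumption~\ref{ass:ld}, $\Lambda_{\mathbb P}$ is either finite and differentiable on all of $\mathbb R^n$ or essentially smooth on $S$ with $0\in S^\circ$; in either case the G\"artner–Ellis theorem (as recalled in Section~\ref{subsec:LD}) applies and the family $\mathcal P_{A_T/T}$ satisfies an LDP with speed $T$ and convex rate function $I_{\mathbb P}(a)=\sup_{k}\big(\langle k,a\rangle-\Lambda_{\mathbb P}(k)\big)$. It then remains to push the symmetry \eqref{eq:symlaplace} through the Legendre–Fenchel transform. Writing $I_{\mathbb P}(\mathcal R a)=\sup_k\big(\langle k,\mathcal R a\rangle-\Lambda_{\mathbb P}(k)\big)=\sup_k\big(\langle \mathcal R^\dag k, a\rangle-\Lambda_{\mathbb P}(k)\big)$ and substituting $k=(\mathcal R^\dag)^{-1}\ell$ so that $\mathcal R^\dag k=\ell$, one gets $I_{\mathbb P}(\mathcal R a)=\sup_\ell\big(\langle \ell,a\rangle-\Lambda_{\mathbb P}((\mathcal R^\dag)^{-1}\ell)\big)$; now invoke \eqref{eq:symlaplace} to replace $\Lambda_{\mathbb P}((\mathcal R^\dag)^{-1}\ell)$ by $\Lambda_{\mathbb P}(\ell-w)$, then shift $\ell\mapsto \ell+w$:
\begin{equation*}
I_{\mathbb P}(\mathcal R a)=\sup_{\ell}\big(\langle \ell,a\rangle-\Lambda_{\mathbb P}(\ell-w)\big)=\sup_{\ell}\big(\langle \ell+w,a\rangle-\Lambda_{\mathbb P}(\ell)\big)=\langle w,a\rangle+I_{\mathbb P}(a),
\end{equation*}
which is \eqref{eq:MR-4}. (If the domain is a proper subset $S$, the substitutions $k=(\mathcal R^\dag)^{-1}\ell$ and $\ell\mapsto\ell+w$ are bijections of the relevant index sets by the domain-invariance noted above, so the suprema are genuinely equal.)

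I expect the main obstacle to be the handling of the subleading term $B_T$ at the level of the cumulant generating function: one has to argue carefully that an \emph{almost surely} uniformly $o(T)$ correction inside the expectation does not change the exponential rate. The clean way is the two-sided bound $\exp(-\sup_{X_{[0,T]}}|B_T|)\le e^{-B_T}\le \exp(\sup_{X_{[0,T]}}|B_T|)$ pointwise, which factors out of $\mathbb E_{\mathbb P}[\,\cdot\,e^{\langle \mathcal R^\dag k-w,A_T\rangle}]$ and contributes $\pm\tfrac1T\sup|B_T|\to 0$ after taking logs; this needs \eqref{eq:subleading} exactly as stated (uniformly over paths, not merely in probability), so it is worth flagging that the strength of Assumption~\ref{ass_weak} is used precisely here. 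A secondary, more bookkeeping-type point is verifying the hypotheses of G\"artner–Ellis: differentiability or essential smoothness of $\Lambda_{\mathbb P}$ is assumed, but one should note that these properties are consistent with the symmetry \eqref{eq:symlaplace} (an affine change of variables preserves differentiability and steepness), so no contradiction arises and the theorem's hypotheses can indeed be met simultaneously. Everything else is the routine Legendre-duality computation carried out above.
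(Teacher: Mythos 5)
Your proof is correct and follows essentially the same route as the paper: the two-sided bound on $\exp(-B_T)$ via \eqref{eq:subleading} to establish \eqref{eq:symlaplace}, followed by G\"artner--Ellis for the LDP. You additionally spell out the Legendre--Fenchel computation yielding \eqref{eq:MR-4}, which the paper leaves implicit; that is a useful clarification but not a different approach.
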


\begin{proof}
Substituting $Z= \exp\left\langle k,A\left(X_{\left[0,T\right]}\right)\right\rangle $ in \eqref{eq:MRF-1} and using the decomposition \eqref{eq:weakdec} for the scalar entropic functional one finds the equality
\begin{equation}
    \mathbb{E}_{\mathbb{P}}\left[\mathrm{e}^{\left\langle k ,\,A_T\right\rangle }\right]=\mathbb{E}_{\mathbb{P}}\left[\mathrm{e}^{-\left\langle w, \,A_T\right\rangle  - B_T}\,\mathrm{e}^{\left\langle k ,\, \mathcal{R}A_T\right\rangle  }\right],
\end{equation}
valid for any $T\geq 0$. The right-hand side is bounded as follows
\begin{equation}
    \mathbb{E}_{\mathbb{P}}\left[\mathrm{e}^{\left\langle \mathcal{R}^\dag k - w, \, A_T \right\rangle }\right] \mathrm{e}^{-\sup_{X_{\left[0,T\right]}}|B_T|} \leq \mathbb{E}_{\mathbb{P}}\left[\mathrm{e}^{\left\langle \mathcal{R}^\dag k -w, \, A_T \right\rangle  - B_T}\right] \leq \mathbb{E}_{\mathbb{P}}\left[\mathrm{e}^{\left\langle \mathcal{R}^\dag k - w, \,A_T \right\rangle }\right] \mathrm{e}^{\sup_{X_{\left[0,T\right]}}|B_T|} 
\end{equation}
and the symmetry \eqref{eq:symlaplace} emerges when taking the limit $T \to \infty$ of the logarithm divided by $T$, thanks to the condition \eqref{eq:subleading} in Assumption \ref{ass_weak}. The large deviation principle for the law of the observable $A_T/T$ is then obtained by means of the G\"artner-Ellis theorem (see e.g. \cite{DZ98}) thanks to the properties of $\Lambda_{\mathbb{P}}(k)$ given in Assumption \ref{ass:ld}.
\end{proof}

The G\"artner-Ellis theorem is a useful tool that allows us to derive the symmetry on the rate function without assuming \emph{a priori} the first of \eqref{eq:Ldev}. The same could be used (and has been used in the literature) in the involution case, for instance to derive \eqref{eq:GC1} without assuming \eqref{eq:ldestimates} in a ad-hoc way.

We have therefore established the asymptotic fluctuation relations \eqref{eq:symlaplace}-\eqref{eq:MR-4}, that may be rewritten also
 \begin{equation}
\begin{cases}\label{eq:MR}
 I_{\mathbb{P}}\left(a \right)=\left\langle w,\mathcal{R}^{-1} a \right\rangle +I_{\mathbb{P}}\left(\mathcal{R}^{-1} a \right)\\
 \Lambda_{\mathbb{P}}\left( k \right)=\Lambda_{\mathbb{P}}\left(\mathcal{R}^{\dagger} k -w\right) 
\end{cases}.
\end{equation}
Despite formal similarities with previous results appeared in the literature, at the present level of generality this is a new result. 
The key point is that the scalar entropic functional can be rewritten in terms of vector-valued observables that are (covariant) (in the sense of \eqref{eq:ass1}) with some bijection $R$ in trajectory space. This feature in turn hints at some underlying symmetry of the process $\mathbb{P}$ with respect to $R$, even though this is not explicitly requested in the general formalism. 

The following observation further highlights this aspect: if Assumption \ref{ass} is satisfied for the entropic functional constructed via the bijection $R$ with the triple $(A_T,w,\mathcal{R})$, then it is also satisfied for the functional constructed via the bijection $R^2$ with the different triple $(A_T,w +\mathcal{R}^\dag w,\,\mathcal{R}^2)$. Indeed, from the definition of the scalar entropic functional, after substituting the decomposition in terms of $A_T$ one has
\begin{equation*}
     \exp\big(-\left\langle w, A_T\right\rangle\big)  =\frac{d\mathbb{P}_{\left[0,T\right]} \circ R }{d\mathbb{P}_{\left[0,T\right]}} \,,
\end{equation*}
and therefore
\begin{equation*}
     \exp\big(-\left\langle w, A_T \circ R\right\rangle\big)  =\frac{d\mathbb{P}_{\left[0,T\right]} \circ R^2 }{d\mathbb{P}_{\left[0,T\right]}\circ R}  \,.
\end{equation*}
Using the covariance property of $A_T$ with respect to $R$, the left-hand side can be conveniently rewritten
\begin{equation*}
     \exp\left(-\left\langle \mathcal{R}^\dag w, A_T \right\rangle\right)  =\frac{d\mathbb{P}_{\left[0,T\right]} \circ R^2 }{d\mathbb{P}_{\left[0,T\right]}\circ R}  \,,
\end{equation*}
so that putting together the first and the third equation we end up with
\begin{equation}
     \exp\left(-\left\langle w + \mathcal{R}^\dag w, A_T \right\rangle\right)  =\frac{d\mathbb{P}_{\left[0,T\right]} \circ R^2 }{d\mathbb{P}_{\left[0,T\right]}}  \,,
\end{equation}
that on the right-hand side is the definition of the entropic functional induced by $R^2$.
Moreover, the covariance property of $A_T$ with respect to $R$ immediately implies the same property with respect to $R^2$, via the matrix $\mathcal{R}^2$, i.e.
\begin{equation*}   A_T\left(R\left(X_{\left[0,T\right]}\right)\right)=\mathcal{R}A_T\left(X_{\left[0,T\right]}\right) \quad  \Rightarrow  \quad A_T\left(R^2\left(X_{\left[0,T\right]}\right)\right)=\mathcal{R}^2A_T\left(X_{\left[0,T\right]}\right) .
\end{equation*}
This finding can be easily generalized to arbitrary powers $m \in \mathbb{N}, m \geq 1$ (multiple composition), so that, labeling $w_R$ the vector associated to the original bijection, one has the decomposition \eqref{eq:ass1} for the entropic functional induced by $R^m$ with the triple $(A_T,w_{R^m},\mathcal{R}^m)$, where explicitly
\begin{equation}
    w_{R^m}= w_R + \mathcal{R}^\dag w_{R^{m-1}} = \sum_{j=0}^{m-1}(\mathcal{R}^\dag)^j w_R \,.
\end{equation}
Note that for an involution $w_{R^2}=w_{\mathrm{Id}}=0$ so that the previous constraint immediately gives\footnote{See also the discussion in Section \ref{subsec:remarks}} $\mathcal{R}^\dag w_R = - w_R$.

\subsection{Relations between cumulants}  
\label{subsec:cumulants}

We can further explore the consequences of the Theorems \ref{thm:result1} and \ref{thm:result2}, having in mind possible constraints on the cumulants of the observable, or on the response functions, like the Onsager reciprocal relations \cite{dGM84,Str92,Str94}.
For instance, concerning hierarchical constraints on the cumulants, one can consider the case where the bijection $R$ is in fact a one-parameter family of transformations $R(\theta)$. This defines a family of scalar entropic functionals $\varSigma_{T}^{\mathbb{P},\theta}$ and one can assume that all of them satisfy Eq.\eqref{eq:ass1} with the same\footnote{This once again reflects a hidden symmetry in the problem.} observable $A_T$, that obeys the covariance condition with a matrix $\mathcal{R}(\theta)$, and a different vector $w(\theta)$. Focusing on the finite-time results, a consequence of Theorem \ref{thm:result1} is that the cumulant generating function of $A_T$, $\Lambda_{\mathbb{P},T}(k)= \ln \mathbb{E}_\mathbb{P} \left[\exp\left\langle k,A_T\right\rangle \right] $, has the symmetry $\Lambda_{\mathbb{P},T}(k)= \Lambda_{\mathbb{P},T}(\mathcal{R}^\dag (\theta) k -w(\theta)) $ for any $\theta$ and any time $T$.
This fact naturally produces constraints on the cumulants of the observable. Indeed, one can write for any integer $p$, any set of indices $(i_1,\ldots, i_p) \in \{1,2,\ldots,n \}^{p}$ and any $\theta$,
\begin{equation}
    \partial_{k_{i_1}}\ldots \partial_{k_{i_p}}\Lambda_{\mathbb{P},T} \Big|_{0} = \sum_{j_{1}\ldots j_{p}\in\{1,2,\ldots,n \}^{p}} \mathcal{R}^\dag_{j_1 i_1}(\theta) \ldots \mathcal{R}^\dag_{j_p i_p}(\theta) \,  \partial_{\eta_{j_1}}\ldots \partial_{\eta_{j_p}} \Lambda_{\mathbb{P},T} \Big|_{-w(\theta)} .
\end{equation}
Assume that $R({\theta=0})$ is the identity, so that one has also  $ \mathcal{R}(0)=\bbbone_n$,  $\varSigma_T^{\mathbb{P},\theta=0}=0$ and $w({\theta=0})=0$. 
Therefore, since the left-hand side does not depend on $\theta$, it turns out that: 
\begin{align}
 &\sum_{j_{1}\ldots j_{p}\in\{1,2,\ldots,n \}^{p}} \partial_\theta \Big( \mathcal{R}^\dag_{j_1 i_1}(\theta) \ldots \mathcal{R}^\dag_{j_p i_p}(\theta) \Big) \partial_{\eta_{j_1}}\ldots \partial_{\eta_{j_p}} \Lambda_{\mathbb{P},T} \Big|_{-w(\theta)} \nonumber \\
 &\quad - \sum_{j_{1}\ldots j_{p},j_{p+1}\in\{1,2,\ldots,n \}^{p+1}} \mathcal{R}^\dag_{j_1 i_1}(\theta) \ldots \mathcal{R}^\dag_{j_p i_p}(\theta) \, \partial_{\eta_{j_1}}\ldots \partial_{\eta_{j_p}} \partial_{\eta_{j_{p+1}}} \Lambda_{\mathbb{P},T} \Big|_{-w(\theta)} \partial_\theta (w(\theta))_{ j_{p+1}}  =0.
\end{align}
Considering the case $\theta=0$ in the previous formula, and denoting $\mathbb{E}_{\mathbb{P}}^c(A_{T,j_1}\ldots A_{T,j_p})$ the cumulants of the observable $A_T$, we get the relation\footnote{The discussed relations hold true when all the functions involved are regular enough so that all the necessary derivatives exist. Note however that this may not be the case (see e.g. \cite{Por10})}
\begin{align}
 & \sum_{j_{1}\ldots j_{p}\in\{1,2,\ldots,n \}^{p}} \partial_\theta \Big( \mathcal{R}^\dag_{j_1 i_1}(\theta) \ldots \mathcal{R}^\dag_{j_p i_p}(\theta) \Big) \Big|_{0}\mathbb{E}_{\mathbb{P}}^c(A_{T,j_1}\ldots A_{T,j_p}) \nonumber \\
 &\quad - \sum_{j_{1}\ldots j_{p},j_{p+1}\in\{1,2,\ldots,n \}^{p+1}}  \mathcal{R}^\dag_{j_1 i_1}(0) \ldots \mathcal{R}^\dag_{j_n i_n}(0) \partial_\theta (w(\theta)_{ j_{p+1}})\Big|_{0} \mathbb{E}_{\mathbb{P}}^c(A_{T,j_1}\ldots A_{T,j_p}A_{T,j_{p+1}}) =0.
\end{align}

Now, considering $\mathcal{R}(\theta)=\bbbone_n +\theta\mathcal{N}+o\left(\theta\right)$,
with $\mathcal{N}$  the infinitesimal generator of the one-parameter family of transformations,
we obtain a relation between the cumulants of order $p$ and the cumulants of order $p+1$: 
\begin{align}
\sum_{k=1}^{p}\sum_{j_{k}=1}^{n}\mathcal{N}_{i_{k}j_{k}}&\mathbb{E}_{\mathbb{P}}^{c}\left(A_{T,i_{1}}A_{T,i_{2}}...A_{T,i_{k-1}}A_{T,j_{k}}A_{T,i_{k+1}}...A_{T,i_{p}}\right) \nonumber \\
&=\sum_{j_{p+1}=1}^{n}\left.\partial_{\theta}\left(w(\theta)_{j_{p+1}}\right)\right|_{\theta=0}\mathbb{E}_{\mathbb{P}}^{c}\left(A_{T,i_{1}}A_{T,i_{2}}...A_{T,i_{p}}A_{T,j_{p+1}}\right),
\label{eq:BeauG}
\end{align} 
for any integer $p$ and any set of indices $(i_1,\ldots, i_p) \in \{1,2,\ldots,n \}^{p}$.  In particular, the case $p=1$ gives the relations 
\begin{equation}
\sum_{j_{1}=1}^{n}\mathcal{N}_{i_{1}j_{1}}\mathbb{E}_{\mathbb{P}}\left(A_{T,j_{1}}\right)=\sum_{j_{2}=1}^{n}\left.\partial_{\theta}\left(w(\theta)_{j_{2}}\right)\right|_{\theta=0}\mathbb{E}_{\mathbb{P}}^{c}\left(A_{T,i_{1}}A_{T,j_{2}}\right) ,
\label{eq:Beau1}
\end{equation}
for any $i_1 \in \{1,2,\ldots,n \}$, that are constraints connecting the mean and the covariance of the observable.

Analogous relations were found in \cite{Hur+14}, where they considered explicitly the case of rotations, inspired by the previous work \cite{AG07} on response functions. However, these kind of structural properties descend naturally from the FRs and therefore can be generalized to the present setting.

\subsection{Further Remarks}
\label{subsec:remarks}

In the next Section, we explicitly show that the previous result includes as particular cases the so-called `Spatial Fluctuation Relation'~\cite{PRG15} and `Symmetry-Induced Fluctuation Relation'~\cite{MPG20}.  
Before doing this, we can informally come back to the existence of an
asymptotic Fluctuation Relation for the scalar $\varSigma_{T}^{\mathbb{P}}/T$
as in~\eqref{eq:GC3}. By Varadhan contracting lemma~\cite{Var66,Tou09}, we obtain from \eqref{eq:ass1},\eqref{eq:Ldev},
the large deviation principle for $\varSigma_{T}^{\mathbb{P}}/T$ with the following rate function and cumulant generating function: 
\begin{equation}
\label{eq:marre}
  \begin{cases}
I_{\mathbb{P}}^{\Sigma^{\mathbb{P}}}\left(\sigma\right)=\underset{a : \left\langle w, a \right\rangle =\sigma}{\inf} I_{\mathbb{P}}^{A}\left(a\right)\\
\Lambda_{\mathbb{P}}^{\Sigma^{\mathbb{P}}}\left(k \right)= \underset{a}{\sup} \big( k\left\langle w,a\right\rangle -I_{\mathbb{P}}^A\left(a\right)\big)
\end{cases},  
\end{equation}
for $k,\sigma \in \mathbb{R}$.
The second equality also comes from the Legendre-Fenchel transform of $I_{\mathbb{P}}^{\Sigma^{\mathbb{P}}}$ after using the first equality, namely
\begin{align*}
 \Lambda_{\mathbb{P}}^{\Sigma^{\mathbb{P}}}\left(k \right) &=  \underset{\sigma}{\sup} \Big( k\sigma - \underset{a : \left\langle w, a \right\rangle =\sigma}{\inf} I_{\mathbb{P}}^A\left(a\right)\Big)  \\
 &= \underset{\sigma}{\sup} \Big( k\sigma + \underset{a : \left\langle w, a \right\rangle =\sigma}{\sup} \big(- I_{\mathbb{P}}^A\left(a\right)\big)\Big) \\
 &= \underset{\sigma}{\sup} \underset{a : \left\langle w, a \right\rangle =\sigma}{\sup} \Big( k\left\langle w, a \right\rangle - I_{\mathbb{P}}^A\left(a\right)\Big) ,
\end{align*}
so that in the end one optimizes over all possible $a \in \mathbb{R}^n$.
Then, the symmetry \eqref{eq:MR} implies that 
\begin{equation}\label{eq:symmetryScalar}
    \begin{cases}
I_{\mathbb{P}}^{\Sigma^{\mathbb{P}}}\left(\sigma\right)=\underset{a : \left\langle w, a \right\rangle =\sigma}{\inf} \big(\left\langle w,\mathcal{R}^{-1} a\right\rangle +I_{\mathbb{P}}^A\left(\mathcal{R}^{-1} a\right)\big)\\
\Lambda_{\mathbb{P}}^{\Sigma^{\mathbb{P}}}\left(k \right)=\underset{a}{\sup} \left( k \left\langle w, a \right\rangle -\left\langle w,\mathcal{R}^{-1} a \right\rangle -I_{\mathbb{P}}^A\left(\mathcal{R}^{-1} a \right)\right)
\end{cases}.
\end{equation}
But from this we cannot go forward and find a relation of type \eqref{eq:GC3} in general.

However, if the vector $w$ is a (right) eigenvector of\footnote{For instance, this is automatically the case if $\mathcal{R}$ is proportional
to the identity} the matrix $\mathcal{R}^{\dagger}$, i.e. $\mathcal{R}^{\dagger}w=\overline{\alpha}w$ for some $\alpha \in \mathbb{C}$ ($\overline{\alpha}$ is here the complex conjugate of $\alpha$),
then we obtain the following asymptotic Fluctuation Relations for the entropic functional 
\begin{equation}\label{eq:FRscalarAsymp}
   \begin{cases}
I_{\mathbb{P}}^{\varSigma^{\mathbb{P}}}\left(\sigma\right)=\frac{\sigma}{\alpha}+I_{\mathbb{P}}^{\varSigma^{\mathbb{P}}}\left(\frac{\sigma}{\alpha}\right)\\
\Lambda_{\mathbb{P}}^{\varSigma^{\mathbb{P}}}\left(k\right)=\Lambda_{\mathbb{P}}^{\varSigma^{\mathbb{P}}}\left(k\alpha-1\right)
   \end{cases}, 
\end{equation}
which generalize~\eqref{eq:GC3} to the case $\alpha\neq-1$ (in the usual involution case $\alpha=-1$). Indeed, one can write from the first of \eqref{eq:symmetryScalar}
\begin{align*}
   I_{\mathbb{P}}^{\Sigma^{\mathbb{P}}}\left(\sigma\right) &=\underset{a : \left\langle w, \mathcal{R} a \right\rangle =\sigma}{\inf} \big(\left\langle w, a\right\rangle +I_{\mathbb{P}}^A\left( a\right)\big) \\
   &= \underset{a : \alpha\left\langle w, a \right\rangle =\sigma}{\inf} \big(\left\langle w, a\right\rangle +I_{\mathbb{P}}^A\left( a\right)\big) \\
   &= \frac{\sigma}{\alpha} +  \underset{a : \left\langle w, a \right\rangle =\frac{\sigma}{\alpha}}{\inf} I_{\mathbb{P}}^A\left( a\right) ,
\end{align*}
that gives the first of \eqref{eq:FRscalarAsymp}. The second equation also follows easily from the second of \eqref{eq:symmetryScalar}
\begin{align*}
    \Lambda_{\mathbb{P}}^{\Sigma^{\mathbb{P}}}\left(k \right) &=\underset{a}{\sup} \left( k \left\langle w, \mathcal{R} a \right\rangle -\left\langle w, a \right\rangle -I_{\mathbb{P}}^A\left( a \right)\right) \\
    &= \underset{a}{\sup} \left( (k \alpha - 1) \left\langle w,  a \right\rangle  -I_{\mathbb{P}}^A\left( a \right)\right) .
\end{align*}

The following table provides a summary of our fluctuation relations for vector-valued observables and general bijections, and visually compares them to those obeyed by the scalar entropic functional under involutions. Note that our framework can also give new relations for vector-valued observables in the case of involutions (in this case one has $\mathcal{R}^{-1}= \mathcal{R}$ in the right column).

\begin{table}[!h] 

\begingroup
\setlength{\tabcolsep}{4pt}
\renewcommand{\arraystretch}{1.8}
\begin{center}
\begin{tabular}{ |c|c|c| } 
 \hline
 Involution (scalar obs.) &  General Bijection (vector obs.)\\
 \hline
 $\frac{d\mathcal{P}_{\varSigma_{T}^{\mathbb{P}}}}{d\mathcal{P}_{-\varSigma_{T}^{\mathbb{P}}}} (\sigma) = \exp(\sigma) $  & $\frac{d\mathcal{P}_{A_T}}{d\mathcal{P}_{\mathcal{R}A_T}}(a)=\exp \left(-\left\langle w,\mathcal{R}^{-1} a\right\rangle \right)$ \\ $\mathbb{E}_{\mathbb{P}}\Big[\exp\left(k\varSigma_{T}^{\mathbb{P}}\right)\Big]=\mathbb{E}_{\mathbb{P}}\Big[\exp\left(-\left(k+1\right)\varSigma_{T}^{\mathbb{P}}\right)\Big]$ & $\mathbb{E}_{\mathbb{P}}\Big[\exp\Big\langle k,A_T\Big\rangle \Big]=\mathbb{E}_{\mathbb{P}}\Big[\exp \Big\langle \mathcal{R}^{\dagger} k - w,A_T\Big\rangle \Big] $ \\
 $I_{\mathbb{P}}\left(\sigma\right)=-\sigma+I_{\mathbb{P}}\left(-\sigma\right)$ 
& $  I_{\mathbb{P}}\left(a \right)=\left\langle w,\mathcal{R}^{-1} a \right\rangle +I_{\mathbb{P}}\left(\mathcal{R}^{-1} a \right) $   \\ 
 $\Lambda_{\mathbb{P}}(k)=\Lambda_{\mathbb{P}}(-k-1) $  &  $\Lambda_{\mathbb{P}}\left( k \right)=\Lambda_{\mathbb{P}}\left(\mathcal{R}^{\dagger} k -w\right)$ \\ 
 
 \hline
\end{tabular}

\end{center}
\endgroup

\end{table}

A further remark is required at this point about the assumptions and the mathematical structure we consider.  The aim of this paper is to provide a general framework to unify a class of physically important fluctuation relations associated with transformations other than time-reversal. In order to show that the symmetry properties can be proved rigorously under suitable assumptions, for simplicity we choose to make hypotheses that are strong (and probably unnecessarily strong, especially in the large deviation setting). This is because we do not want to obscure the main message of the paper with technical difficulties related to weakening the assumptions. Even in the case of time-reversal, challenges arise when trying to rigorously establish the existence of a large deviation principle for stochastic observables, mainly due to non-compact state space and unbounded components of the entropic functional (see e.g. \cite{BdG15,JPS17,Raq24}). Also, the choice to pass through the G\"artner-Ellis theorem may not be the unique way or the most suitable one to attack the problem, and different approaches, like contraction principles from higher-level (e.g. level 3 or level 2.5) large deviation functionals, may be convenient (see \cite{BGL23} for something in this direction).

The next two sections are dedicated to the analysis of the presented general framework in specific situations. In particular, we focus on two case studies that provide an explicit connection between the seemingly abstract assumptions stated above and the symmetry properties of the underlying processes. This also allows a better comparison with the existing literature \cite{Hur+11,Hur+14,LG14,PRG15,MPG20}. Note, however, that the analyzed models are not exhaustive and one could apply the general results to further situations. For instance, stochastic processes describing quantum systems, like Markovian quantum-jump unravelings \cite{WM10}, would fit into the scheme. In fact, one could think of using the framework developed here to generalize the fluctuation relations derived in \cite{MPG21} for Markovian quantum-jump processes to non-Markovian unravelings \cite{SCP20}.

\section{Case study 1: FRs for Canonical Path Probabilities}
\label{sec:canonical}

\subsection{Results for Generic Canonical Processes}

In this Section, we consider a physical process $X_{t}$ such that its path probability is the Canonical path probability \cite{Gar+09, JS10, CT13, CT15a, CT15b, JS15} associated to an a priori path measure $\mathbb{P}_{\left[0,T\right]}^{0}$
(not necessarily Markovian) and a biasing vector-valued observable $K_T: \Omega_T \to \mathbb{R}^{n}$: 
\begin{equation}\label{eq:cano}
   \mathbb{P}_{\left[0,T\right]}^{\mathrm{cano},s} \equiv\frac{\mathbb{P}_{\left[0,T\right]}^{0}\,  \mathrm{e}^{\left\langle s,K_T \right\rangle}  }{\mathbb{E}_{\mathbb{P}^{0}} \!\left[\mathrm{e}^{\left\langle s,K_T\right\rangle }\right]}, 
\end{equation}
with the biasing vector $s$. This is well-defined for all values of $s \in S\subseteq \mathbb{R}^n$ such that $0< \mathbb{E}_{\mathbb{P}^{0}} \!\left[\mathrm{e}^{\left\langle s,K_T\right\rangle }\right] < \infty$. Such a process has been called canonical in \cite{CT15a} because it is a generalization to nonequilibrium settings of the canonical ensemble of equilibrium statistical mechanics but it appears in the literature also with other names, like tilted ensemble, or $s$-ensemble (due to the parameter usually labeled $s$). In the mathematics literature these processes are also called \emph{penalizations} \cite{RY09}.

Applying our general formalism presented in the last section to this
path probability, the scalar entropic functional \eqref{eq:AF} can be readily obtained and explicitly reads
$$
\varSigma_{T}^{\mathbb{P}^{\mathrm{cano},s}}=\varSigma_{T}^{\mathbb{P}^{0}}-\left\langle s, \, K_T \circ R - K_T \right\rangle .
$$
In order to proceed we resort to the following Assumption.
\begin{assumption}[Symmetry and observable covariance] \label{ass_cano}
The process satisfies these two conditions:\\
$\bullet$ The a-priori process is $R$-symmetric, 
\begin{equation}\label{eq:h1cano} d\mathbb{P}_{\left[0,T\right]}^{0}\left[R\left(X_{\left[0,T\right]}\right)\right]=d\mathbb{P}_{\left[0,T\right]}^{0}\left[X_{\left[0,T\right]}\right],
\end{equation}
so that one has $\varSigma_{T}^{\mathbb{P}^{0}}=0$. \\
$\bullet$  There exists a space-time homogeneous invertible $n \times n$ matrix $\mathcal{R}$ such that, for any $X_{\left[0,T\right]}$,
\begin{equation}\label{eq:hypothese-1-1}
K_T \left(R\left(X_{\left[0,T\right]}\right)\right)=\mathcal{R} K_T\left(X_{\left[0,T\right]}\right) .
\end{equation}
\end{assumption}

These two hypotheses imply the Assumption \ref{ass} with the vector $w$ expressed in terms of the biasing field $s$ as follows (here $\bbbone_n$ is the $n\times n$ identity matrix), 
\begin{equation}
 \label{eq:w-ici}   
w =\left(\bbbone_n-\mathcal{R}^{\dagger}\right) s, 
\end{equation}
and the role of $A_T$ is played by the biasing observable $K_T$. From now on we use this notation for the biasing observable $K_T \equiv A_T$. Therefore, the following result is a consequence of Theorem \ref{thm:result1} when $S=\mathbb{R}^n$.
\begin{corollary}[Finite-time FRs for Canonical Processes] 
Under Assumption \ref{ass_cano}, for all times $T \geq 0$ and for all $a,k,s \in \mathbb{R}^n$, the canonical process \eqref{eq:cano} satisfies the FRs 
    \begin{align}    
\frac{d \mathcal{P}_{A_T}^{\mathrm{cano,s}}}{d \mathcal{P}_{\mathcal{R}A_T}^{\mathrm{cano,s}} }(a) &=\exp \left\langle (\mathcal{R}^\dag - \bbbone_n) s,\mathcal{R}^{-1} a\right\rangle \; , \label{eq:canoVFR1} \\
\mathbb{E}_{\mathbb{P}^{\mathrm{cano},s}}\Big[\exp\Big\langle k,A_T\Big\rangle \Big] &=\mathbb{E}_{\mathbb{P}^{\mathrm{cano},s}}\Big[\exp \Big\langle \mathcal{R}^{\dagger} (k + s) - s, A_T\Big\rangle \Big] , \label{eq:canoVFR2}
    \end{align}
     where $\mathcal{P}_{A_T}^{\mathrm{cano,s}} \equiv \mathbb{P}^{\mathrm{cano,s}} \circ A_T^{-1}$.
\end{corollary}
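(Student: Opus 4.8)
The plan is to deduce this corollary from Theorem~\ref{thm:result1} by verifying that Assumption~\ref{ass_cano} implies Assumption~\ref{ass} with the explicit triple $(A_T,w,\mathcal{R})=(K_T,(\bbbone_n-\mathcal{R}^\dagger)s,\mathcal{R})$. First I would record the entropic functional of the canonical process. Writing the tilted densities against a common $R$-invariant reference measure, the normalizing constant $\mathbb{E}_{\mathbb{P}^0}[\mathrm{e}^{\langle s,K_T\rangle}]$ is a deterministic scalar that cancels in the Radon--Nikodym derivative, so one gets $\varSigma_{T}^{\mathbb{P}^{\mathrm{cano},s}}=\varSigma_{T}^{\mathbb{P}^{0}}-\langle s,\,K_T\circ R-K_T\rangle$, which is exactly the formula already displayed in the text. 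The first bullet of Assumption~\ref{ass_cano} (the $R$-symmetry of $\mathbb{P}^0$) forces $\varSigma_{T}^{\mathbb{P}^0}=0$, and the second bullet ($K_T\circ R=\mathcal{R}K_T$) turns the remaining term into $-\langle s,(\mathcal{R}-\bbbone_n)K_T\rangle=\langle(\bbbone_n-\mathcal{R}^\dagger)s,K_T\rangle$. Hence Assumption~\ref{ass} holds with $A_T=K_T$, $w=(\bbbone_n-\mathcal{R}^\dagger)s$ as in~\eqref{eq:w-ici}, and the same matrix $\mathcal{R}$, which is invertible and space-time homogeneous by hypothesis (and $w$ is space-time homogeneous because $s$ and $\mathcal{R}$ are).

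Once the hypotheses of Theorem~\ref{thm:result1} are in place (and since we take $S=\mathbb{R}^n$, all $k,a\in\mathbb{R}^n$ are admissible), both claimed relations follow by substituting $w=(\bbbone_n-\mathcal{R}^\dagger)s$. For~\eqref{eq:canoVFR1} I would plug this $w$ into~\eqref{eq:beau}: the exponent $-\langle w,\mathcal{R}^{-1}a\rangle$ becomes $-\langle(\bbbone_n-\mathcal{R}^\dagger)s,\mathcal{R}^{-1}a\rangle=\langle(\mathcal{R}^\dagger-\bbbone_n)s,\mathcal{R}^{-1}a\rangle$, while the left-hand side is identified using $\mathcal{P}_{A_T}^{\mathrm{cano},s}\circ\mathcal{R}^{-1}=\mathcal{P}_{\mathcal{R}A_T}^{\mathrm{cano},s}$ (pushforward), together with the observation from the proof of Theorem~\ref{thm:result1} that $\mathcal{R}$ preserves the range of $A_T$. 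For~\eqref{eq:canoVFR2} I would plug the same $w$ into~\eqref{eq:MRF-2-1-1}: $\mathcal{R}^\dagger k-w=\mathcal{R}^\dagger k-s+\mathcal{R}^\dagger s=\mathcal{R}^\dagger(k+s)-s$, which is exactly the exponent in the right-hand side.

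There is no genuinely hard step here: the substantive content sits in Assumption~\ref{ass_cano} and in Theorem~\ref{thm:result1}, both of which may be invoked. The points requiring care are bookkeeping rather than conceptual: (i) noting that the partition function is path-independent and therefore drops out of the entropic-functional computation; (ii) observing that the $R$-symmetry of $\mathbb{P}^0$ is precisely what is needed to make $\varSigma_{T}^{\mathbb{P}^0}$ vanish, without which $\varSigma_{T}^{\mathbb{P}^{\mathrm{cano},s}}$ need not be a linear functional of $K_T$ and Assumption~\ref{ass} would generically fail; and (iii) keeping track of transposes and signs so that $-\langle(\bbbone_n-\mathcal{R}^\dagger)s,\cdot\rangle$ and $\mathcal{R}^\dagger(\cdot)-(\bbbone_n-\mathcal{R}^\dagger)s$ reorganize into the stated forms. (If one wished to drop the assumption $S=\mathbb{R}^n$, the only change would be to restrict the range of $k$ accordingly, exactly as in the essentially-smooth discussion of Section~\ref{subsec:LD}.)
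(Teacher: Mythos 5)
Your proposal is correct and follows exactly the paper's route: the paper also establishes the corollary by computing $\varSigma_T^{\mathbb{P}^{\mathrm{cano},s}}=\varSigma_T^{\mathbb{P}^0}-\langle s,K_T\circ R-K_T\rangle$, noting that the two bullets of Assumption~\ref{ass_cano} reduce this to $\langle(\bbbone_n-\mathcal{R}^\dagger)s,K_T\rangle$ so that Assumption~\ref{ass} holds with $(A_T,w,\mathcal{R})=(K_T,(\bbbone_n-\mathcal{R}^\dagger)s,\mathcal{R})$, and then substituting this $w$ into \eqref{eq:beau} and \eqref{eq:MRF-2-1-1} of Theorem~\ref{thm:result1}. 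The bookkeeping steps you flag (cancellation of the path-independent partition function, the role of $R$-invariance of $\mathbb{P}^0$ in making the decomposition linear in $K_T$, and the sign/transpose rearrangements) are precisely the points that make the deduction go through.
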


Concerning the validity of Assumption \ref{ass:ld}, due to the particular construction of the canonical path measure, one has
\begin{equation}
    \mathbb{E}_{\mathbb{P}^{\mathrm{cano},s}} \left[ \mathrm{e}^{\left\langle k, A_T \right\rangle} \right] = \frac{\mathbb{E}_{\mathbb{P}^0} \left[ \mathrm{e}^{\left\langle k+s, A_T \right\rangle} \right]}{\mathbb{E}_{\mathbb{P}^0} \left[ \mathrm{e}^{\left\langle s, A_T \right\rangle} \right]} ,
\end{equation}
therefore it is sufficient to assume that the quantity
\begin{equation}\label{eq:cumulp0}
    \Lambda_{0}(k) = \lim_{T \to \infty} \frac{1}{T} \ln \mathbb{E}_{\mathbb{P}^0} \left[ \mathrm{e}^{\left\langle k, A_T \right\rangle} \right]
\end{equation}
exists and is finite for the a priori probability. In fact, if $\Lambda_0 (k)$ is finite for any $k$, then $\Lambda_{\mathbb{P}^{cano,s}}(k)$ is also finite and it explicitly reads 
\begin{equation}
\label{eq:jolie}    
\Lambda_{\mathbb{P}^{cano,s}}(k) = \Lambda_0 (k+s) - \Lambda_0 (s).
\end{equation}
Likewise, for the rate function one has
\begin{align}
    I_s(a) = \sup_{k \in \mathbb{R}^d}\Big( \langle k, a \rangle - \Lambda_s(k) \Big) &= \sup_{k \in \mathbb{R}^d} \Big( \langle k +s , a \rangle - \Lambda_0(k + s) \Big) - \langle s, a \rangle + \Lambda_0(s)  \nonumber\\
    &= I_0(a) - \langle s, a \rangle + \Lambda_0(s) ,
    \label{eq:jolieRF} 
\end{align}   
where the lighter notation $\Lambda_s \equiv \Lambda_{\mathbb{P}^{cano,s}}$ and $I_s \equiv I_{\mathbb{P}^{\mathrm{cano},s}} $ has been introduced for convenience.
Applying then the main result Theorem \ref{thm:result2} of the previous section, we obtain the following Corollary.
\begin{corollary}[Asymptotic FRs for Canonical Processes]  
The canonical process \eqref{eq:cano}, under Assumption \ref{ass_cano} and provided \eqref{eq:cumulp0} is finite and differentiable on $\mathbb{R}^n$, satisfies the asymptotic FRs for the observable $A_T/T$
\begin{equation}\label{eq:MR-1}
     \begin{cases}
 I_s\left(a\right)=\left\langle \left(\bbbone_n-\mathcal{R}^{\dagger}\right) s, \, \mathcal{R}^{-1} a\right\rangle +I_s\left(\mathcal{R}^{-1} a\right)\\
\Lambda_s \left(k\right) = \Lambda_s \left(\mathcal{R}^{\dagger}k -\left(\bbbone_n -\mathcal{R}^{\dagger}\right) s\right) 
     \end{cases}.
\end{equation}
\end{corollary}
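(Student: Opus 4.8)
The plan is to verify that the two assumptions of Theorem~\ref{thm:result2} are satisfied under the stated hypotheses, and then simply invoke that theorem with the explicit triple $(A_T, w, \mathcal{R})$ furnished by Assumption~\ref{ass_cano}. First I would recall that, by the computation preceding the Corollary, Assumption~\ref{ass_cano} yields precisely the decomposition of the entropic functional required by Assumption~\ref{ass}, namely $\varSigma_{T}^{\mathbb{P}^{\mathrm{cano},s}}=\left\langle w, A_T\right\rangle$ with $w=(\bbbone_n-\mathcal{R}^{\dagger})s$ and with $A_T\circ R = \mathcal{R}A_T$ (this is just \eqref{eq:hypothese-1-1}). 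Since $B_T\equiv 0$ here, the subleading condition \eqref{eq:subleading} is trivially met, so Assumption~\ref{ass_weak} holds. This is the easy half.

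Second I would check Assumption~\ref{ass:ld}, i.e.\ the existence, finiteness and differentiability of the scaled cumulant generating function $\Lambda_s(k)$. This is where the hypothesis that \eqref{eq:cumulp0} is finite and differentiable on $\mathbb{R}^n$ enters. Using the identity $\mathbb{E}_{\mathbb{P}^{\mathrm{cano},s}}[\mathrm{e}^{\langle k,A_T\rangle}] = \mathbb{E}_{\mathbb{P}^0}[\mathrm{e}^{\langle k+s,A_T\rangle}]/\mathbb{E}_{\mathbb{P}^0}[\mathrm{e}^{\langle s,A_T\rangle}]$ recorded above, one divides by $T$, takes $T\to\infty$, and obtains \eqref{eq:jolie}, namely $\Lambda_s(k)=\Lambda_0(k+s)-\Lambda_0(s)$; finiteness on all of $\mathbb{R}^n$ and differentiability of $\Lambda_s$ are then inherited from the corresponding properties of $\Lambda_0$, since $k\mapsto k+s$ is a smooth bijection of $\mathbb{R}^n$ and $\Lambda_0(s)$ is a finite constant. (If one only had finiteness on a subset $S$, one would instead verify that the shifted essential smoothness conditions hold; but with the full-space hypothesis this is immediate.) Hence Assumption~\ref{ass:ld} is satisfied.

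Third, with both assumptions in place, Theorem~\ref{thm:result2} applies verbatim and gives the symmetry $\Lambda_s((\mathcal{R}^\dag)^{-1}k)=\Lambda_s(k-w)$ together with the large deviation principle for $\mathcal{P}_{A_T/T}$ at speed $T$ with rate function $I_s$ obeying $I_s(\mathcal{R}a)=\langle w,a\rangle + I_s(a)$. Substituting $w=(\bbbone_n-\mathcal{R}^\dag)s$ and rewriting in the equivalent form \eqref{eq:MR} (replacing $k$ by $\mathcal{R}^\dag k$ in the Laplace relation, and $a$ by $\mathcal{R}^{-1}a$ in the rate-function relation) produces exactly the two displayed equations in \eqref{eq:MR-1}. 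As a cross-check one may also derive the rate-function symmetry directly from \eqref{eq:jolieRF}: since $I_s(a)=I_0(a)-\langle s,a\rangle+\Lambda_0(s)$ and $I_0$ satisfies the $R$-symmetry $I_0(\mathcal{R}a)=I_0(a)$ coming from the $R$-invariance of $\mathbb{P}^0$, one gets $I_s(\mathcal{R}a)=I_0(a)-\langle s,\mathcal{R}a\rangle+\Lambda_0(s)=I_s(a)+\langle s,a\rangle-\langle \mathcal{R}^\dag s,a\rangle=I_s(a)+\langle(\bbbone_n-\mathcal{R}^\dag)s,a\rangle$, matching \eqref{eq:MR-1}. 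The only genuine obstacle is the differentiability/essential-smoothness bookkeeping needed to invoke G\"artner--Ellis, and this has been sidestepped by assuming it outright for $\Lambda_0$; everything else is a direct translation of Theorem~\ref{thm:result2} into the canonical language.
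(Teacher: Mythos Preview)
Your proposal is correct and follows exactly the approach the paper takes: verify that Assumption~\ref{ass_cano} together with finiteness/differentiability of $\Lambda_0$ imply Assumptions~\ref{ass_weak} and~\ref{ass:ld}, then invoke Theorem~\ref{thm:result2} with $w=(\bbbone_n-\mathcal{R}^\dagger)s$. The paper states this in one line (``Applying then the main result Theorem~\ref{thm:result2}\ldots''), and you have filled in the details accurately; your cross-check via $I_s(a)=I_0(a)-\langle s,a\rangle+\Lambda_0(s)$ and the $\mathcal{R}$-invariance of $I_0$ is also consistent with the paper's subsequent remark that these symmetry conditions on $\Lambda_0,I_0$ already suffice.
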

Notice that the latter equations can also be written as 
\begin{equation}\label{eq:MR-1-1}
\begin{cases}
I_s \left(\mathcal{R} a\right)=\left\langle s ,\left(\bbbone_n -\mathcal{R}\right) a \right\rangle +I_s\left(a\right)\\
\Lambda_s\left(k\right) = \Lambda_s \left(\mathcal{R}^{\dagger}\left(k +s\right) -s\right) 
\end{cases}.
\end{equation}
These symmetries in a similar set-up are the main result of \cite{PRG15} (relation 5-6) where the name \emph{spatial fluctuation relations} was used and of \cite{MPG20} (relation 12) where instead they were called \emph{symmetry-induced fluctuation relations}, to highlight the fact that $R$ is not necessarily a spatial transformation. However, the mentioned works did not highlight the generality of the processes one can consider, for instance removing the Markovianity constraint. We will comment more on this with our first example. On the technical side, when the cumulant generating function is finite on a proper subset $S \subset \mathbb{R}^n$, as in the second example of this Section, things are a bit more complicated, even though one could still use G\"artner-Ellis for $\Lambda$ essentially smooth. More importantly, when the steepness condition at the boundary fails there is no general result available and large deviation principles have to be proved with ad-hoc techniques, even in the usual case of scalar observables \cite{JPS17}.

As a final remark, we note that the asymptotic result could reasonably apply to processes whose path measure is not exactly \eqref{eq:cano} at finite time but becomes equivalent to it for long times (see for instance the relation between the canonical process and the so-called driven process in \cite{CT15a}).

In the same vein, Assumption \ref{ass_cano} is sufficient but not necessary to obtain the FRs \eqref{eq:MR-1-1} via the relations \eqref{eq:jolie} and \eqref{eq:jolieRF}. Indeed, we see that the symmetry conditions 
\[
\begin{cases}\label{eq:sym}
\Lambda_{0}\left(\mathcal{R}^\dag s\right)=\Lambda_{0}\left(s\right)\\
I_{0}\left(\mathcal{R}a\right)=I_{0}\left(s\right)
\end{cases}, 
\]
 are sufficient to obtain \eqref{eq:jolie}, and are weaker than  Assumption \ref{ass_cano} (i.e. Assumption \ref{ass_cano} implies them but not viceversa).

Finally, in this Canonical Path probability set-up, let us consider a one-parameter family of transformations $\mathcal{R}(\theta)$, all of them satisfying Assumption \ref{ass_cano} with the same observable (independent of $\theta$) and with $w(\theta)$ given as in \eqref{eq:w-ici} for each $\theta$. Recalling the definition of the generator $\mathcal{N}$ through the matrix expansion $\mathcal{R}(\theta)=\bbbone_n +\theta\mathcal{N}+o\left(\theta\right)$, we obtain $\partial_{\theta}w(\theta)|_{\theta=0}=-\mathcal{N}^{\dagger}s$,
and the hierarchy \eqref{eq:BeauG} becomes 
\begin{align}
\label{eq:SuperG}
\sum_{k=1}^{p}\sum_{j_{k}=1}^{n}\mathcal{N}_{i_{k}j_{k}}\mathbb{E}_{\mathbb{P}^{cano,s}}^{c}&\left(A_{T,i_{1}}A_{T,i_{2}}...A_{T,i_{k-1}}A_{T,j_{k}}A_{T,i_{k+1}}...A_{T,i_{p}}\right) \nonumber\\
&= -\sum_{j_{p+1}=1}^{n}\sum_{j_{p+2}=1}^{n}s_{j_{p+2}} \mathcal{N}_{j_{p+2}j_{p+1}}\mathbb{E}_{\mathbb{P}^{cano,s}}^{c}\left(A_{T,i_{1}}A_{T,i_{2}}...A_{T,i_{p}}A_{T,j_{p+1}}\right) ,
\end{align}
for any integer $p$ and any set of indices $(i_1,\ldots, i_p) \in \{1,2,\ldots,n \}^p$.
In particular, the case $p=1$ gives the following relations between the mean and the covariance
\begin{equation}
\label{eq:Super1}
\sum_{j_{1}=1}^{n}\mathcal{N}_{i_{1}j_{1}}\mathbb{E}_{\mathbb{P}^{cano,s}}\left(A_{T,j_{1}}\right)=-\sum_{j_{2}=1}^{n}\sum_{j_{3}=1}^{n}s_{j_{3}}\mathcal{N}_{j_{3}j_{2}}\mathbb{E}_{\mathbb{P}^{cano,s}}^{c}\left(A_{T,i_{1}}A_{T,j_{2}}\right),
\end{equation}
for any $i_1 \in \{1,2,\ldots,n \}$. These relations generalize to arbitrary observables and transformations the hierarchy presented in \cite{Hur+14} (equation (64)) for currents and rotations. Further investigation on their physical relevance is certainly worth and it is ongoing.

\subsection{Example: Canonical processes associated to time-homogeneous Semi-Markov processes}
\label{sec:ex1}

\textbf{Elements of time-homogeneous Semi-Markov processes}

\noindent
To highlight the fact that Markovianity of the process is never used in deriving the results of the previous section, we consider here the explicit example of time-homogeneous Semi-Markov jump processes with finite state space $E$. For a general account of these processes and their relevance in physics applications we refer to \cite{AG08,MNW09}. As in continuous-time Markov chains a trajectory is specified by the sequence of visited configurations $x_i \in E$ and jump times $t_i$. However, in this case the sojourn times are not distributed exponentially thus making the process non-Markovian. The transitions from configuration $x$ to configuration $y$ \emph{after} a sojourn time $t$ are characterized by a semi-Markov kernel $M(x,y,t)$, with $\sum_y\int_0^\infty M(x,y,t)\, dt=1$. The transition matrix $p(x,y)$ from $x$ to $y$ and the \emph{sojourn time} distribution in $M(x,t)$ are then obtained as follows
\begin{equation*}
    p(x,y)= \int_0^\infty M(x,y,t)\, dt ,  \qquad M(x,t) = \sum_y M(x,y,t).
\end{equation*}
For simplicity, in this manuscript we also assume the property of time-direction independence, namely $M(x,y,t)=p(x,y)M(x,t)$. This was shown to be a necessary (but not sufficient) condition for the reversibility of $\mathbb{P}_{[0,T]}^0$ \cite{QW06}.
The probability density of a trajectory $X_{[0,T]}:= (x_0,x_1,...,x_n;t_1,...,t_n,T)$ can be written explicitly
\begin{align}\label{eq:ptraj}
    d \mathbb{P}_{[0,T]}^0 [X_{[0,T]}] = \pi(x_0) \prod_{i=0}^{n-1} M(x_i,x_{i+1},t_{i+1}-t_i) \left(\int_{T-t_n}^{\infty} M(x_n,\tau) d\tau \right) \prod_{j=1}^{n} dt_j \;,
\end{align}
where $\pi(x_0)$ is the probability of the initial condition $x_0$ and we consider $t_0=0$. In writing the previous formula, we assumed that $t_1-t_0$ corresponds to the sojourn time in $x_0$, or equivalently that the system has jumped to the configuration $x_0$ at time $t=0$. Therefore, we do not average over the negative-time histories as done in \cite{MNW09} to take care of the unknown sojourn time in $x_0$. \\
\textbf{Discrete translation on a ring for a semi-Markov Random walk} \\
\noindent
In order to check if the condition \eqref{eq:h1cano} is satisfied, one needs to specify a transformation $R$. For concreteness, we assume the state space to be $E= \{ 1,2, \ldots, L \}$ and take $R$ to be a coordinate transformation $R X_{[0,T]}= (r x_0,r x_1,...,r x_n;t_1,...,t_n,T)$ with
$$r x = x+1 \; \mathrm{MOD} \; L \;,$$
namely we shift all coordinates by $1$ with periodic boundary conditions.
For example, a process that is symmetric under this transformation is the one specified by the following semi-Markov kernel
\begin{equation}
M(x,y,t)=\gamma_{i}(t)\quad  \mathrm{if} \quad y=x+i \; \mathrm{MOD} \; L,
\label{eq:gen}
\end{equation}
with $i \in \{1,2,\ldots,L-1\}$, describing a random walker on a ring possibly jumping on any other site with a rate that depends on the clockwise distance between the departure and arrival sites. This system is translation invariant, with stationary probability given by the uniform measure on the state space. Therefore, the fluctuation relations \eqref{eq:canoVFR1}-\eqref{eq:canoVFR2} are satisfied for any biasing observable that is covariant with respect to $R$, namely, that obeys \eqref{eq:hypothese-1-1}. In the following, we consider the special case of nearest neighbor jumps, that is
 \begin{equation}\label{eq:gen2}
    \gamma_{i}(t)=\gamma_{+}(t)\delta_{i,1}+\gamma_{-}(t)\delta_{i,L-1}. 
 \end{equation}
In this way, we can easily associate an order relation to the set of bonds (site pairs where the jump is allowed) and study counting observables related to jumps in particular subsets.
\\
\textbf{Bidimensional Observable} \\
\noindent
For instance, one can take $A_T$ to be the two-dimensional observable whose components are the number of jumps on even and odd bonds\footnote{We consider the bond $1 \leftrightarrow 2$ to be the first one and it is therefore labeled as odd. We then label the bond $2 \leftrightarrow 3$ as even, and so on and so forth.} (provided the number of sites is even so that an even bond is always followed by an odd bond and there is an equal number of even and odd bonds).
In this case, one has \footnote{Note that $\mathcal{R}$ is an involution,  $\mathcal{R}^2=\bbbone_2$ , despite the fact that the shift $R$ is not an involution.}
\begin{equation}
\varSigma_{T}^{\mathbb{P}^{\mathrm{cano},s}}= \left\langle s, \,  A_T  - \mathcal{R} A_T \right\rangle, \quad\quad \mathcal{R}= 
\begin{pmatrix}
    0 & 1 \\
    1 & 0
\end{pmatrix}.
\end{equation}
The finite-time FRs \eqref{eq:canoVFR1}-\eqref{eq:canoVFR2} therefore explicitly read
 \begin{align}
\frac{d \mathcal{P}^{\mathrm{cano,s}}_{A_T}}{d  \mathcal{P}^{\mathrm{cano,s}}_{\mathcal{R}A_T}}(a_{1},a_{2}) &= \exp \Big( (s_{1}-s_{2})(a_{1}-a_{2}) \Big) \; , \label{eq:canoVFR1-2} \\
\mathbb{E}_{\mathbb{P}^{\mathrm{cano},s}} \Big[\exp\Big(
k_{1} A_{T}^{(1)}+ k_{2} A_{T}^{(2)} \Big) \Big] &=\mathbb{E}_{\mathbb{P}^{\mathrm{cano},s}}\Big[ \exp \Big( (k_{2}+s_{2}-s_{1}) A_{T}^{(1)} + (k_{1}+s_{1}-s_{2}) A_{T}^{(2)} \Big) \Big] , \label{eq:canoVFR2-1}
    \end{align} 
for $k=(k_1,k_2) \in \mathbb{R}^2$ and $a=(a_1,a_2) \in \mathbb{N}^2$.
The relation \eqref{eq:canoVFR1-2} gives a precise quantification of the unbalance between the activity (number of jumps) on even and odd bonds in the canonical process. This unbalance is caused by the biasing vector $s$, that can be interpreted as a physical field breaking the original symmetry of the process $\mathbb{P}^0$. Nevertheless, a reminiscence of the broken symmetry is still present on the statistics of the fluctuating observable $A_T$, as described by the relation on the moment generating function \eqref{eq:canoVFR2-1}.    \\
\textbf{Explicit Large Deviations} \\
\noindent
Moreover, the moment generating function and the cumulant generating function $\Lambda_s\left(k\right)$  can computed exactly for a typical choice of the semi-Markov kernel \eqref{eq:gen}-\eqref{eq:gen2} that is
\begin{equation*}
  \begin{cases}
\gamma_{+}(t)=p\frac{\lambda^{n}t^{n-1}}{\left(n-1\right)!}\exp\left(-\lambda t\right)\\
\gamma_{-}(t)=q\frac{\lambda^{n}t^{n-1}}{\left(n-1\right)!}\exp\left(-\lambda t\right),
\end{cases}  
\end{equation*}
with $p+q=1$,  $n \in \mathbb{N}$,  and $n\geq 1$. The value $n=1$ corresponds to the Markov case. In order not to overshadow the main topic of the paper, we present here the formula for the scaled cumulant generating function and postpone the detailed calculations in Appendix \ref{app_semimarkov}. Explicitly, one has for the symmetric process $\mathbb{P}^0$
\begin{equation}\label{eq:lambdaSemiMarkov}
    \Lambda_0(k)= \lambda \left( (p \mathrm{e}^{k_1} + q \mathrm{e}^{k_2})(q \mathrm{e}^{k_1} + p \mathrm{e}^{k_2}) \right)^{1/2n} - \lambda.
\end{equation}
Some level curves of $\Lambda_0(k)$ are plotted in Figure \ref{Fig:ld} panel (i) to highlight the symmetry under the exchange $k_1 \leftrightarrow k_2$.
This result can be compared with the result in \cite{MNW09} where the tilting is done in terms of the current observable, so that there $k_1=-k_2=k$ and $p$ is always paired with $\mathrm{e}^{k_1}$ while $q$ is always paired with $\mathrm{e}^{-k_1}$ because jumps to $x+1$ always give positive current and jumps to $x-1$ always give negative current, independently whether they are associated with even or odd bonds. The function $\Lambda_0(k)$ is well-defined for any $k \in \mathbb{R}^2$ and it is everywhere differentiable. Therefore, a large deviation principle holds true for the family of probability measures $\mathcal{P}_{A_T/T}^0(a)$, $a = (a_1,a_2) \in \mathbb{R}_+^2$, and the rate function can be computed explicitly (see Appendix \ref{app_semimarkov}),
\begin{align}\label{eq:ratefunctionSemiMarkov}
    I_0(a) &= n(a_1 + a_2)  \ln \left(\frac{n (a_1+a_2)}{\lambda}\right)  + a_1 \ln\left( \sqrt{1 + f(a)} + \frac{\mathrm{sign}(a_1-a_2)}{|p-q|}\sqrt{ f(a)-1} \right) \nonumber\\
    &+a_2 \ln\left( \sqrt{1 + f(a)} - \frac{\mathrm{sign}(a_1-a_2)}{|p-q|}\sqrt{ f(a)-1} \right) - n (a_1 + a_2) + \lambda - \frac{(a_1+a_2)}{2}\ln(2)
\end{align}
where we defined the function $f(a)$ as follows to enhance the readability
\begin{equation}
    f(a)= \sqrt{1 + \frac{(p-q)^2}{4 p^2 q^2}\left(\frac{a_1-a_2}{a_1+a_2} \right)^2 }.
\end{equation}
It is immediate to see that $I_0(a)$ (resp $\Lambda_0(k)$) are  indeed symmetric under the exchange $a_1 \leftrightarrow a_2$ (resp $k_1 \leftrightarrow k_2$):  they fulfill the FRs \eqref{eq:MR-1-1} in the case $s=0$, for any \emph{non-Markovianity degree} $n$. In Figure \ref{Fig:ld} panel (ii) some level curves of $I(a)$ are plotted in the case $n=2$. The expression \eqref{eq:ratefunctionSemiMarkov} represents one of the very few examples of rate function that can be computed explicitly for a non-Markovian dynamics, and therefore it has an interest \emph{per se}, beyond the study of Fluctuation Relations. We analyze it more in detail in the following. In particular, we focus on the dependence on $n$ for fixed $a$, to see whether non-Markovianity enhances or suppresses fluctuations. We promote $n$ to be a positive real variable $x$ for the moment and study the function 
\begin{equation}
    I_0(a,x) = x (a_1+a_2) \ln \left(\frac{x (a_1+a_2)}{\lambda}\right) - x (a_1 + a_2) + g(a),
\end{equation}
where $g(a)$ is any function of $a$ independent of $x$.
The derivative with respect to $x$ reads
\begin{equation}
    \partial_x I_0(a,x) = (a_1+a_2) \ln \left(\frac{x (a_1+a_2)}{\lambda}\right) ,
\end{equation}
so that it is negative for $x< x^* := \lambda/(a_1+a_2)$ and non-negative for $x \geq x^*$. Recalling that we are interested in integer $n \geq 1$, we see that if $x^* <1$, i.e. for $a_1+a_2$ large enough, the rate function is monotonically increasing in $n$. Conversely, for small values of $a_1+a_2$ the rate function is not monotonic: it decreases up to a certain value $\overline{n}$ and it starts increasing for $n> \overline{n}$. Therefore, at least in this case, non-Markovianity suppresses fluctuations towards large values of the observable, while fluctuations towards small values can be enhanced or suppressed with respect to the Markovian case, depending on $n$. This situation is represented in Figure \ref{Fig:ld} panel (iii) for the special case $a_1=a_2$. It is certainly worth studying further how generic this behavior is, also in light of recent progress on the subject. Indeed, large deviation principles for Semi-Markov processes have been investigated recently, both in the physics literature \cite{AG08,MNW09,CH16,SK18,VH20} and in the probability literature \cite{MZ16,Fag17,JJW24}. In particular, a large deviation principle for the joint law of the empirical measure and the empirical flow (level $2.5$) is rigorously proved in \cite{MZ16} in the case of finite configuration space, and in \cite{JJW24} for countable configuration space.

\begin{figure}[ht]
\centering
\vspace*{-0.4cm}
\includegraphics[width=.96\textwidth]{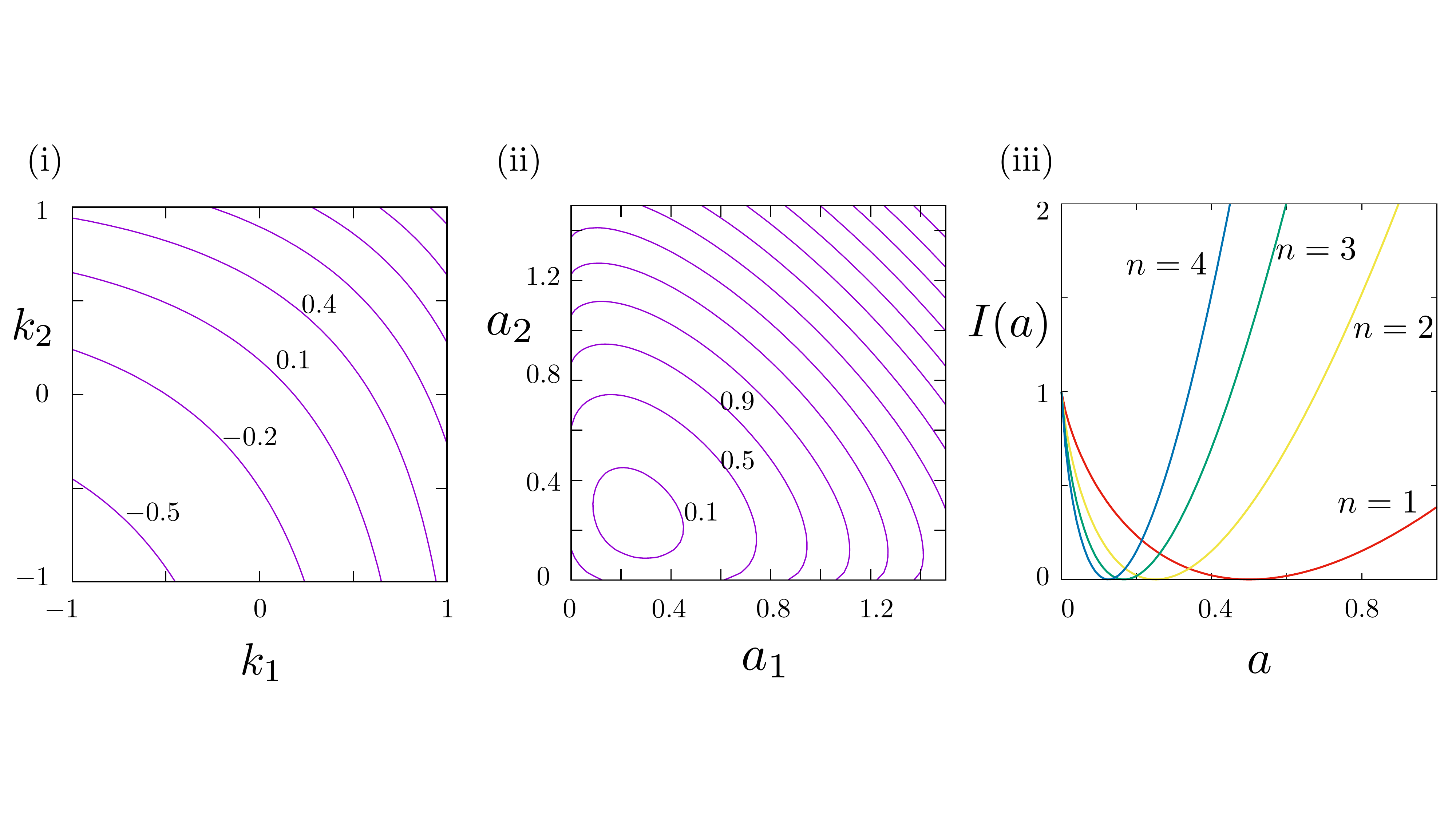}
\vspace*{-1.5cm}
\caption{(i) Level curves of the function $\Lambda_0(k)$, \eqref{eq:lambdaSemiMarkov}, for $p=0.7, q=0.3,\lambda=1$ and $n=2$. The curve on the left bottom corresponds to $\Lambda_0=-0.5$ and the subsequent ones correspond to increments of $0.3$. (ii) Level curves of the function $I_0(a)$, \eqref{eq:ratefunctionSemiMarkov}, for $p=0.7, q=0.3,\lambda=1$ and $n=2$. The closed curve on the bottom left corresponds to $I_0=0.1$ and the other curves are related to increments of $0.4$. (iii) Rate function $I_0$ for $a_1=a_2=a$ and different values of $n$. As before we set $p=0.7,q=0.3,\lambda=1$. One can see the non-monotonic behavior with respect to $n$ for small $a$, like for instance around $a=0.2$.}
\label{Fig:ld} 
\end{figure}

\noindent
\textbf{Fluctuation Relations} \\
\noindent 
Coming back to fluctuation relations, the large deviation properties of the canonical process can be readily inferred from the ones of the symmetric process via the relations \eqref{eq:jolie} and \eqref{eq:jolieRF}.
Indeed, the scaled cumulant generating function for the canonical process reads
\begin{equation}\label{eq:lambdacano}
    \Lambda_s(k) = \lambda \Big( (p \mathrm{e}^{s_1 + k_1} + q \mathrm{e}^{s_2 + k_2})(q \mathrm{e}^{s_1 + k_1} + p \mathrm{e}^{s_2 +k_2}) \Big)^{1/2n} -\lambda \Big( (p \mathrm{e}^{s_1} + q \mathrm{e}^{s_2})(q \mathrm{e}^{s_1} + p \mathrm{e}^{s_2}) \Big)^{1/2n},
\end{equation}
and the rate function 
\begin{equation}\label{eq:Icano}
I_s(a) = I_0(a) - \langle s, a \rangle + \Lambda_0(s).  
\end{equation}    
Therefore, it is easy to check that the asymptotic FRs \eqref{eq:MR-1-1}, taking the form 
\begin{equation}\label{eq:MR-1-1-1}
\begin{cases}
I_s \left(a_{2},a_{1}\right)= \left(s_{1}-s_{2}\right)\left(a_{1}-a_{2}\right) +I_s\left(a_{1},a_{2}\right) \\
\Lambda_s \left( k_{1},k_{2} \right) = \Lambda_s \left( k_{2}+s_{2}-s_{1},k_{1}+s_{1}-s_{2}\right) 
\end{cases},
\end{equation}
hold true for this example. Indeed, concerning the rate function, one can use the symmetry of $I_0$ and  $\Lambda_0$ and write  with \eqref{eq:Icano} 
\begin{equation*}
   I_s \left(a_{2},a_{1}\right) - I_s \left(a_{1},a_{2}\right) = -(s_1 a_2 + s_2 a_1) + (s_1 a_1 + s_2 a_2) = \left(s_{1}-s_{2}\right)\left(a_{1}-a_{2}\right).
\end{equation*}
Also, for the $\Lambda_s(k)$ the symmetry can be checked noticing that when $k_1$ is substituted with $ k_2+s_2-s_1$, the sum $k_1+s_1$ is mapped into $k_2+s_2$, and similarly for $k_2$. Therefore, one has only exchanged the two factors in the first term of \eqref{eq:lambdacano}.

We conclude this example with a remark.
Note that since $\bbbone_2 - \mathcal{R}$ is not invertible, the entropic functional does not really depend on the two components of $A_T$ independently but it reads $\varSigma_{T}^{\mathbb{P}^{\mathrm{cano},s}} = (A_T^{(1)} - A_T^{(2)})(s_1 - s_2)$. Therefore, in this case it is possible to find a fluctuation relation for the scalar contracted observable $ \widetilde{A}_T := A_T^{(1)} - A_T^{(2)}$ (recall the discussion at the end of Section \ref{sec:General--transformation}).
In order to have an entropic functional with a truly vectorial dependence one can enlarge the dimension of the observable and consider for instance a tripartite structure for the bonds (bonds of type $3k, 3k+1, 3k+2$). The matrix $\mathcal{R}$ is the following permutation matrix
$$
\mathcal{R}=\left(\begin{array}{ccc}
0 & 1 & 0\\
0 & 0 & 1\\
1 & 0 & 0
\end{array}\right)
$$
and the functional will depend separately on $A_T^{(1)}-A_T^{(2)}$ and $A_T^{(2)}-A_T^{(3)}$. More in general, one can think of a $N$-fold structure for the bonds and obtain $N-1$ independent components. With this choice of observable, the FRs \eqref{eq:canoVFR1}-\eqref{eq:canoVFR2} then quantify the imbalance between the number of jumps in different kinds of bonds in the canonical process (the original process is symmetric as one can easily recall by setting $s=0$).  Also, we can mention that in the spirit of the weaker decomposition in Assumption \ref{ass_weak}, one could consider an  observable containing a subleading contribution that does not affect the large deviation statistics. For instance the scalar observable $\widetilde{A}_T := A_T^{(1)} - A_T^{(2)}$ in the large time limit would have the same scaled cumulant generating function, and therefore the same symmetry, of the observable $\widetilde{A}_T+ \widetilde{B}_T$ where $\widetilde{B}_T$ is the difference between the number of jumps on even sites and odd sites. Indeed, while the difference between the jumps on even and odd bonds can be arbitrarily large, a jump from an even site is always followed by a jump from an odd site, and therefore the difference is at most one for any time $T$.

\subsection{Example: Langevin equation with harmonic potential and rotations in momentum space}

The discussion of this section is not confined to discrete systems. We study here a second example where the phase space is $\mathbb{R}^4$. 
Consider a particle with unit mass moving in $\mathbb{R}^2$ and experiencing a Langevin dynamics with harmonic potential $V(Q_1,Q_2)= (Q_1^2 + Q_2^2)/2$. The system is described by the following SDE 
\begin{equation}\label{eq:langevin}
    dQ_{i,t} = P_{i,t} dt , \quad dP_{i,t}= - Q_{i,t} dt - \gamma P_{i,t} dt + \sqrt{2\gamma \beta^{-1}} dW_{i,t} , \quad i \in\{1,2\},
\end{equation}
where $Q_{i,t}, P_{i,t}$ are respectively the position and momentum variables at time $t$, and $W_{i,t}$ are standard Wiener processes. The parameters $\gamma, \beta \in \mathbb{R}_+$ represent the damping coefficient and the inverse temperature. The corresponding formal Markov generator is the sum of two independent contributions $L=L^{(1)}+L^{(2)}$, each of them reading
\begin{equation}
    L^{(i)}= p_i \, \partial_{q_i} - q_i \,\partial_{p_i} -\gamma p_i \,\partial_{p_i} + \frac{\gamma}{\beta} \partial^2_{p_i} .
\end{equation} 
Although this generator is not uniformly elliptic (there is a second derivative in the momentum variable but not in the position) the well-posedness of the Langevin equation has been extensively studied and in fact for this harmonic case one can explicitly find the solution (for more information see for instance \cite{Pav14}, Chapter 6).

We choose to study the canonical process \eqref{eq:cano} obtained from \eqref{eq:langevin}
 when biasing with respect to the three-dimensional vector observable 
\begin{equation}\label{eq:3dobs}
A_{T}=\left(\int_{0}^{T}\left(P_{1,t}\right)^{2} dt \,,\int_{0}^{T}\left(P_{2,t}\right)^{2} dt \,,\int_{0}^{T}P_{1,t}P_{2,t}  \,dt\right) .
\end{equation}
In order to find the scaled cumulant generating function $\Lambda_0(k)$ we compute the largest eigenvalue of a tilted generator $L_k$ (see for instance \cite{CT15a} for its construction) 
\begin{equation}\label{eq:tilted}
    L_k = L + k_1 p_1^2 + k_2 p_2^2 + k_3 p_1 p_2
\end{equation}
where $k=(k_1,k_2,k_3) \in \mathbb{R}^3$ is the vector of tilting parameters. The bijection $R \equiv R_\theta$ is chosen to be a rotation of an angle $\theta$ in the 2D momentum space.
The observable $A_T$ transforms under $R_\theta$ as $A_T(R_\theta \omega) = \mathcal{R}_\theta A_T(\omega)$ with 
\begin{equation}
    \mathcal{R}_\theta = 
    \begin{pmatrix}
        \cos^2(\theta) & \sin^2(\theta) & -2 \sin(\theta)\cos(\theta) \\
        \sin^2(\theta) & \cos^2(\theta) & 2 \sin(\theta)\cos(\theta) \\
        \sin(\theta)\cos(\theta) & -\sin(\theta)\cos(\theta) & \cos^2(\theta) - \sin^2(\theta).
    \end{pmatrix}
    \label{eq:Rot}
\end{equation}
Note that this matrix is not orthogonal, namely $\mathcal{R}_\theta^\dag \neq \mathcal{R}_\theta^{-1}$, despite the fact that $\det(\mathcal{R}_\theta)=1$, as one can easily check. In order to find the leading eigenvalue and corresponding eigenvector of the tilted generator \eqref{eq:tilted}, one can take advantage of the results derived in \cite{dBT23} for linear diffusions and quadratic observables (see also the rigorous mathematical analysis developed in \cite{JPS17} for entropic observables). In particular, given a linear SDE with additive noise
\begin{equation}
    d X_t = -M X_t dt + \sigma d W_t,
\end{equation}
whose generator is 
\begin{equation*}
    L= -M x \cdot \nabla  + \frac{1}{2} \sigma \sigma^\dag \Delta ,
\end{equation*}
and a quadratic observable $$A_T = \int_0^T \langle X_t, Q X_t \rangle dt,$$ it was shown that the scaled cumulant generating function can be retrieved from the dominant eigenvalue of a tilted generator $L_k = L + k \langle x, Q x \rangle$
\begin{equation}
    L_k r_k = \Lambda_0(k) r_k, \quad r_k (x)  = \mathrm{e}^{\langle x, B_k x \rangle},
\end{equation}
that, given the Gaussian ansatz for $r_k$, explicitly reads
\begin{equation}
    \Lambda_0(k)= \mathrm{Tr}(D B_k) ,
\end{equation}
where $D=\sigma \sigma^{\dag}$ and the matrix $B_k$ satisfies the algebraic Riccati equation
\begin{equation}\label{eq:RicPRE}
    2 B_k D B_k - M^\dag B_k - B_k M + kQ =0.
\end{equation}
This can be easily checked applying the tilted generator to the  Gaussian ansatz for the right eigenvector. In particular, the term $kQ$ in the Riccati equation arises from the term $k\langle x, Q x \rangle$ in the tilted generator.
This can be immediately adapted to our case. Indeed, the only difference is that, dealing with vector-valued observables, one needs to tilt each component with a different variable. However, the additional biasing term $k_1 p_1^2 + k_2 p_2^2 + k_3 p_1 p_2$ in \eqref{eq:tilted} can be conveniently written as $\langle x, \widetilde{Q}_k x \rangle$, with $x=(q_1,p_1,q_2,p_2)\in \mathbb{R}^4$ and 
\begin{equation}
    \widetilde{Q}_k = 
    \begin{pmatrix}
       0 & 0 & 0 & 0 \\
       0 & k_1 &0 &\frac{k_3}{2} \\
       0 &0 & 0 & 0 \\
       0 &\frac{k_3}{2} & 0 & k_2
    \end{pmatrix} = \begin{pmatrix}
        k_1 & \frac{k_3}{2} \\
        \frac{k_3}{2} & k_2
    \end{pmatrix}
    \otimes
    \begin{pmatrix}
        0 & 0 \\
        0 & 1
    \end{pmatrix} ,
\end{equation}
where the Kronecker product\footnote{The Kronecker product $A \otimes B$ of a $n\times m$ matrix $A$ and a $p \times q$ matrix $B$ is the $pn \times qm$ block matrix $A \otimes B = 
\begin{pmatrix}
 a_{11} B & \ldots & a_{1m} B  \\
 \vdots & \ddots & \vdots \\
 a_{n1} B & \ldots & a_{nm} B
\end{pmatrix}.$} $\otimes$ has been used in the second equality.
Therefore, the algebraic Riccati equation relevant to our problem reads
\begin{equation}\label{eq:Ric}
    2 B_k D B_k - M^\dag B_k - B_k M + \widetilde{Q}_k =0,
\end{equation}
where, in particular,
\begin{align}
    M = &
    \begin{pmatrix}
       0 & -1 & 0 & 0 \\
       1 & \gamma &0 &0 \\
       0 &0 & 0 & -1 \\
       0 &0 & 1 & \gamma
    \end{pmatrix} = \bbbone_2 \otimes 
    \begin{pmatrix}
       0 & -1 \\
       1 & \gamma 
    \end{pmatrix} , \quad 
    \\
    \label{eq:D}
    D = &
    \begin{pmatrix}
       0 & 0 & 0 & 0 \\
       0 & \frac{2\gamma}{\beta} &0 &0 \\
       0 &0 & 0 & 0 \\
       0 &0 & 0 & \frac{2\gamma}{\beta} 
    \end{pmatrix} = \bbbone_2 \otimes 
    \begin{pmatrix}
       0 & 0 \\
       0 & \frac{2\gamma}{\beta} 
    \end{pmatrix} . 
\end{align}
The equation \eqref{eq:Ric} in general has many solutions. The interesting one for us is the one that satisfies $B_0=0$, because $r_0(x) = 1$ has to hold for the original (untilted) Markov generator. 
The equation is solved explicitly in Appendix \ref{app_langevin} for an open set $S$ of values of $k$ containing the origin
\begin{equation}
    S= \left\{ k=(k_1,k_2,k_3) \in \mathbb{R}^3 \Big| \, k_1+k_2 \leq \frac{\gamma \beta}{2} \; \wedge \; k_3^2 - 4k_1 k_2 \leq \frac{\beta^2\gamma^2}{4} - \beta\gamma (k_1 + k_2)\right\}.
\end{equation}
The symbol $\wedge$ indicates here the logical operator \emph{and}, meaning that both inequalities need to be satisfied.
Explicitly, the scaled cumulant generating function reads
\begin{equation}\label{eq:Lambda0_langevin}
    \Lambda_0(k) =  \gamma - \gamma\sqrt{  \left( \frac{1}{2} - \frac{k_1 +k_2}{\beta\gamma}   \right) + \sqrt{\left( \frac{1}{2} - \frac{k_1 +k_2}{\beta\gamma}   \right)^2 - \frac{1}{\beta^2\gamma^2}\Big[ k_3^2 + (k_1 -k_2)^2 \Big]}},
\end{equation}
for any $k \in S$ while it is infinite for $k \notin S$. The second inequality defining the set $S$ corresponds to the positivity of the term in the most internal square root, so that the expression in \eqref{eq:Lambda0_langevin} gives indeed a real number. It is also checked in Appendix \ref{app_langevin} that $\Lambda_0(k)$ is steep at the boundary, and therefore essentially smooth. This guarantees that G\"artner-Ellis theorem applies.

We now check that the symmetry holds true, namely $\Lambda_0(k)= \Lambda_0(\mathcal{R}_\theta^\dag k)$ for any $\theta$ and $k \in S$. This can be done noting that the two quantities $C_1(k)= k_1 + k_2$ and $C_2(k)= (k_1 - k_2)^2 + k_3^2$ are invariant under the action of $\mathcal{R}_\theta^\dag $. Indeed, given $\overline{k}= \mathcal{R}_\theta^\dag k$ one has
\begin{align*}
    &\overline{k}_1 = \cos^2(\theta)   k_1 + \sin^2(\theta) k_2 + \sin(\theta)\cos(\theta) k_3 ,\\
    &\overline{k}_2 = \sin^2(\theta)   k_1 + \cos^2(\theta) k_2 - \sin(\theta)\cos(\theta) k_3 ,\\
    &\overline{k}_3 = - \sin(2\theta)  k_1 + \sin(2\theta) k_2 + \cos(2\theta) k_3 ,
\end{align*}
and therefore one can easily verify that $\overline{k}_1 + \overline{k}_2 = k_1 + k_2$ and $(\overline{k}_1 - \overline{k}_2)^2 + \overline{k}_3^2 = (k_1 - k_2)^2 + k_3^2$. 
Therefore, $\Lambda_0(k)$ is invariant because the $k$-dependence can be written only in terms of $C_1$ and $C_2$. Also, the conditions defining $S$ only depend on $k$ through $C_1$ and $C_2$, so that also the set of allowed $k$ values is mapped into itself by $\mathcal{R}_\theta$ for any $\theta$, i.e. $\mathcal{R}_\theta^\dag k \in S$ if $k \in S$. 

The rate function can also be computed explicitly (see Appendix \ref{app_langevin}) and reads
\begin{equation}\label{eq:Ilangevin}
    I_0(a) = -\gamma + \frac{\gamma \beta}{4}(a_1+a_2) + \frac{\gamma}{4\beta}\, \frac{a_1+a_2}{a_1 a_2 - a_3^2} , \qquad a_1,a_2 >0 ,\quad a_3^2 < a_1 a_2.
\end{equation}
As for the cumulant generating function, in order to check the symmetry $I(\mathcal{R}_\theta a)= I(a)$ it is useful to notice that $I(a)$ is written in terms of invariant quantities. In fact, one can easily check that both $a_1 + a_2$ and $a_1 a_2 - a_3^2$ are invariant under the action of $\mathcal{R}_\theta$. Since $\mathcal{R}_\theta \neq \mathcal{R}_\theta^\dag$, the second quantity would not be invariant under $\mathcal{R}_\theta^\dag$ and one has to be careful in using the right symmetry condition for the rate function and the cumulant generating function.

We now consider the canonical process constraining the possible values of the field $s$ to be in the set $S$. For a fixed $S$, we then know that the relation \eqref{eq:jolieRF}  $\Lambda_s(k) = \Lambda_0 (k+s) - \Lambda_0 (s)$ holds true for those values of $k$ in a set $\widetilde{S}_s \equiv \{ k | \, (k+s) \in S \}$. Therefore, in this domain, the scaled cumulant generating function $\Lambda_s(k)$
\begin{align}
    \Lambda_s(k)=& - \gamma\sqrt{  \left( \frac{1}{2} - \frac{C_1(k+s)}{\beta\gamma}    \right) + \sqrt{\left( \frac{1}{2} - \frac{C_1(k+s)}{\beta\gamma}    \right)^2 - \frac{C_2(k+s)}{\beta^2\gamma^2}}} \nonumber \\
    &+ \gamma\sqrt{  \left( \frac{1}{2} - \frac{C_1(s)}{\beta\gamma}    \right) + \sqrt{\left( \frac{1}{2} - \frac{C_1(s)}{\beta\gamma}    \right)^2 - \frac{C_2(s)}{\beta^2\gamma^2}}}   
\end{align}
satisfies the FR \eqref{eq:MR-1-1} that is inherited from the symmetry of $\Lambda_0$. Explicitly,
\begin{equation}
    \Lambda_s(k) = \Lambda_s(\overline{k}+\overline{s}-s),
\end{equation}
where $\overline{k}$ is the vector of components $\overline{k}_1,\overline{k}_2,\overline{k}_3$ and $\overline{s}$ reads in the same way when $k$ is substituted with $s$. 

For a continuous family of transformations we can also look at the relations on cumulants. 
The infinitesimal generator of  $\mathcal{R}\left(\theta\right)$ given in  \eqref{eq:Rot} is in this case 
\[
\mathcal{N}=\left(\begin{array}{ccc}
0 & 0 & -2\\
0 & 0 & 2\\
1 & -1 & 0
\end{array}\right)
\]
and for example, the $p=1$ relation \eqref{eq:Super1} reads
\begin{equation}
\sum_{j_{1}=1}^{3}\mathcal{N}_{i_{1}j_{1}}\mathbb{E}_{\mathbb{P}^{cano,s}}\left(A_{T,j_{1}}\right)=-\sum_{j_{2}=1}^{3}\sum_{j_{3}=1}^{3}s_{j_{3}}\mathcal{N}_{j_{3}j_{2}}\mathbb{E}_{\mathbb{P}^{cano,s}}^{c}\left(A_{T,i_{1}}A_{T,j_{2}}\right),
\end{equation}
for any $ i_1 \in \{1,2,3 \} $.
Explicitly, e.g. for $i_1=1$ we have
\begin{equation*}
{E}_{\mathbb{P}^{cano,s}}\left(A_{T,3}\right)= (s_2-s_1){E}_{\mathbb{P}^{cano,s}}^c \left(A_{T,1}A_{T,3}\right) + \frac{s_3}{2} \Big( {E}^c_{\mathbb{P}^{cano,s}} \left((A_{T,1})^2\right) - {E}^c_{\mathbb{P}^{cano,s}} \left(A_{T,1}A_{T,2}\right) \Big),
\end{equation*}
that for the specific observable considered here means
\begin{align}
   {E}_{\mathbb{P}^{cano,s}}\left(\int_0^T P_{1,t}P_{2,t} dt\right) =& (s_2-s_1){E}_{\mathbb{P}^{cano,s}}^c \left(\int_0^T P^2_{1,u} du\int_0^T P_{1,t}P_{2,t} dt\right) \nonumber\\
   &+ \frac{s_3}{2}  {E}^c_{\mathbb{P}^{cano,s}} \left(\int_0^T P^2_{1,u} du \int_0^T P^2_{1,t} dt\right) \nonumber\\
   &- \frac{s_3}{2} {E}^c_{\mathbb{P}^{cano,s}} \left(\int_0^T P^2_{1,u} du \int_0^T P^2_{2,t} dt\right) .
\end{align}
The physical content of these kind of relations needs to be further explored and better understood with more examples.

\section{Case study 2: FRs for non-degenerate diffusion processes associated to a time-local transformation}
\label{sec:Markovian}

\subsection{General Setting and Results}
In this Section, we consider a Markov process $X_{t}$ with state space $E= \mathbb{R}^n$ solving the following SDE:
\begin{equation}\label{eq:diff}
dX_{t}=F(X_{t})dt+ \sigma(X_{t})\circ dW_t,
\end{equation}
where $F: \mathbb{R}^n \to \mathbb{R}^n$ is a vector field, $\sigma$ is a $n\times m$ matrix function and $W_t$ is a $m$-dimensional vector of independent Wiener processes. $F$ and $\sigma$ are chosen such that a strong solution exists (see for instance \cite{Kha12} Theorems 3.4 and 3.5 for sufficient conditions) but are arbitrary otherwise. The symbol $\circ$ indicates that the Stratonovich-Fisk convention \cite{Fis63,Str68} is used. The initial condition is sampled according to a probability density $\varrho(x)$. The explicit form of the generator is then 
\begin{equation}
L=\widehat{F} \cdot \nabla+\frac{1}{2}\nabla \cdot D \, \nabla,\label{eq:gpd}
\end{equation}
with the covariance $D \equiv \sigma \sigma^\dag$ and the modified drift $\widehat{F}: \mathbb{R}^n \to \mathbb{R}^n$ whose components explicitly read
\begin{equation}\label{eq:moddrift}
\widehat{F}_i(x)=F_i(x)- \widetilde{F}_i(x), \quad \widetilde{F}_i(x)\equiv \frac{1}{2}\sum_{jk}\sigma_{ik}(x)\frac{\partial \sigma_{jk}(x)}{\partial x_j} \; .
\end{equation}
The non-degeneracy condition means that the matrix $D(x)$ is strictly positive ($D(x)>0$), and therefore invertible. The diffusion process studied in the previous example \eqref{eq:D} does not belong to this class.

We consider in the following the particular case of bijections $R$ that act pointwise in time, namely the following assumption is true:
\begin{assumption}[Local transformation]\label{ass:local}
 There exists an invertible map $r : \mathbb{R}^n \to \mathbb{R}^n$ such that $\left(R\left(X_{\left[0,T\right]}\right)\right)_{t}=r\left(X_{t}\right)$ for any $t \in [0,T]$.    
\end{assumption}

In this case, the transformed process $r(X_t)$ satisfies a modified SDE and the path measure $\mathbb{P}_{\left[0,T\right]}\circ R$ also describes a Markovian diffusion process. In particular, the following lemma holds true \cite{CT25+}: 
\begin{lemma}[Modified diffusion]  
 Given a transformation $R$ and a map $r$ satisfying Assumption \ref{ass:local}, the path measure $\mathbb{P}_{\left[0,T\right]}\circ R$ also describes a Markovian diffusion process, with drift $F^{R}$ and diffusion matrix $\sigma^{R}$ given by
\begin{equation}\label{eq:imdrift-1-1}
\begin{cases}
 F^{R}(x)\equiv\left(J_{r}F\right)\left(r^{-1}\left(x\right)\right)\\
 \sigma^{R}(x) \equiv \left(J_{r}\sigma\right)\left(r^{-1}\left(x\right)\right)
\end{cases},
\end{equation}
where $J_{r}(x)$ is the jacobian matrix of the $r$ transform: 
\begin{equation}\label{eq:Jac-2}
\left(J_{r}(x)\right)_{ij}\equiv\frac{\partial r^{i}}{\partial x^{j}}(x) \;. 
\end{equation}
In particular, the modified covariance matrix $D^{R} \equiv \sigma^{R}(\sigma^{R})^\dag$ reads
\begin{equation}\label{eq:newcov}
    D^{R}(x) =\left(J_{r} D J_{r}^{\dag}\right)\left(r^{-1}\left(x\right)\right)  \;.
\end{equation}
\end{lemma}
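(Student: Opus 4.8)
The plan is to obtain the transformed coefficients directly from the Stratonovich chain rule, which is precisely why the Stratonovich-Fisk convention was adopted in \eqref{eq:diff}: under a time-independent, invertible change of variables a Stratonovich SDE transforms exactly as in ordinary calculus, with no It\^o-type correction term. By Assumption~\ref{ass:local} a trajectory distributed according to $\mathbb{P}_{[0,T]}\circ R$ is of the form $Y_{[0,T]}$ with $Y_t=r(X_t)$ and $X_{[0,T]}$ distributed according to $\mathbb{P}_{[0,T]}$, so it suffices to identify the SDE solved by $Y_t$. Since $r$ does not depend on time, the chain rule gives $dY_t=J_r(X_t)\circ dX_t$, with $J_r$ the Jacobian \eqref{eq:Jac-2}.

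Inserting the SDE \eqref{eq:diff} for $dX_t$ and then rewriting the coefficients in terms of $Y_t$, i.e. using $X_t=r^{-1}(Y_t)$, one gets
\begin{equation*}
dY_t=J_r(X_t)F(X_t)\,dt+J_r(X_t)\sigma(X_t)\circ dW_t=\bigl(J_rF\bigr)\!\left(r^{-1}(Y_t)\right)dt+\bigl(J_r\sigma\bigr)\!\left(r^{-1}(Y_t)\right)\circ dW_t ,
\end{equation*}
which is again a Stratonovich SDE driven by the same Wiener process $W_t$, with drift $F^R$ and diffusion matrix $\sigma^R$ as claimed in \eqref{eq:imdrift-1-1}. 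The covariance formula \eqref{eq:newcov} then follows at once from $D^R=\sigma^R(\sigma^R)^\dagger$ and $D=\sigma\sigma^\dagger$. The process $Y_t$ is Markov because $X_t$ is and $r$ acts pointwise in time, hence introduces no memory; moreover non-degeneracy is preserved, since $r$ being a diffeomorphism makes $J_r$ invertible, so $D^R=J_rDJ_r^\dagger>0$ whenever $D>0$, keeping us inside the class considered in this section.

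The computation itself is a one-liner; the point where care is needed is regularity and well-posedness. For the chain rule $dY_t=J_r(X_t)\circ dX_t$ to be rigorously valid $r$ should be at least $C^2$ (so that $\int_0^t J_r(X_s)\circ dW_s$ is well defined and the change-of-variables identity holds), and to conclude that the transformed equation genuinely defines a diffusion process one must check that $F^R$ and $\sigma^R$ inherit the hypotheses ensuring existence of a unique, non-explosive strong solution. This holds when $r$ is a global $C^2$-diffeomorphism of $\mathbb{R}^n$ and $F,\sigma$ are as in \eqref{eq:diff}, since then $(J_rF)\circ r^{-1}$ and $(J_r\sigma)\circ r^{-1}$ are again locally Lipschitz with suitable growth; the detailed verification is the content of \cite{CT25+}. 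Finally, one should note that phrasing the statement at the level of the Stratonovich SDE is essential: in terms of the It\^o-type modified drift $\widehat F$ of \eqref{eq:moddrift} and the generator \eqref{eq:gpd} an extra contribution coming from $\widetilde F$ would appear, and the transformation would no longer be the clean conjugation by $J_r$.
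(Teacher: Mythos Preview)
Your proof is correct and follows essentially the same approach as the paper: apply the Stratonovich chain rule to $Y_t=r(X_t)$, substitute \eqref{eq:diff}, and re-express the coefficients through $X_t=r^{-1}(Y_t)$ to read off $F^R$, $\sigma^R$, and hence $D^R$. The paper's proof is more terse and omits your additional remarks on Markovianity, preservation of non-degeneracy, and the regularity caveats, but the core computation is identical.
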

\begin{proof}
    We always assume the Stratonovich convention so that the standard rules of calculus apply. In particular, defining $X_t' \equiv r(X_t)$ one has $d X_t'= J_r(X_t) \circ d X_t$. Then, one can use the SDE \eqref{eq:diff} to substitute $d X_t$, obtaining
    \begin{equation*}
        d X_t'= J_r(X_t) F(X_t) dt +  J_r(X_t) \sigma(X_{t})\circ dW_t.
    \end{equation*}
    Finally, since $r$ is invertible, one can use $X_t= r^{-1}(X_t')$ in the right hand side of the equation and find an SDE for $X_t'$. The resulting drift and diffusion matrix are the ones stated in \eqref{eq:imdrift-1-1}. The formula \eqref{eq:newcov} for the covariance matrix then follows from the definition $D \equiv \sigma \sigma^\dag$.
\end{proof}
Given the previous result, the problem of comparing the two path measures $\mathbb{P}_{\left[0,T\right]}$ and $\mathbb{P}_{\left[0,T\right]}\circ R$ reduces to comparing the path measures of two different diffusion processes. First of all, we should assume the two measures to be absolutely continuous with respect to each other, so that the Radon-Nikodym derivative (and therefore the scalar entropic functional) defined in Eq.~\eqref{eq:AF} exists.
By Girsanov lemma \cite{SV79}, in the case where $D(x)>0$ (that is our set-up), the scalar entropic functional exists if and only if the covariance $D$ remains unchanged, i.e. $D^R=D$, or equivalently for any $x$
\begin{equation}\label{eq:Dinvariant}
D\left(r\left(x\right)\right)\equiv\left(J_{r} D J_{r}^\dag\right)(x).
\end{equation}
Moreover, since we assumed a strong solution exists to the SDE \eqref{eq:diff}, in this case one can compute explicitly $\varSigma_{T}^{\mathbb{P}}$ by means of the Cameron-Martin-Girsanov theorem \cite{SV79}. Since the modified diffusion process has a generator of the form 
\begin{align}
    L^{R} &= \widehat{F}^R \cdot \nabla + \frac{1}{2} \nabla \cdot D^{R}\, \nabla \nonumber\\
    &= \widehat{F}^R \cdot \nabla + \frac{1}{2} \nabla \cdot D \nabla \;,
\end{align}
where $\widehat{F}^R$ is the modified drift obtained as in \eqref{eq:moddrift} with $F^R$ replacing $F$ and $\sigma^R(x) \equiv \left(J_r \sigma \right)\left(r^{-1}(x)\right)$ replacing $\sigma(x)$, one has that $L-L^R = (\widehat{F}^R-\widehat{F}) \cdot \nabla$ and therefore (see for instance Appendix A in \cite{CT15a})
\begin{align}\label{eq:girsanov}
    \varSigma_{T}^{\mathbb{P}}=& -\int_{0}^{T} \left\langle \left(\widehat{F}^R-\widehat{F}\right)(X_{t}),D^{-1}(X_{t})\circ dX_{t}\right\rangle + \ln\left(\frac{\varrho(X_0)}{\varrho(rX_0)}\right) \nonumber \\
    & + \frac{1}{2}\int_{0}^{T} \Big( \left\langle  \left(\widehat{F}^R -\widehat{F}\right)(X_{t}),D^{-1}(X_{t})\left(\widehat{F}^R+\widehat{F}\right)(X_{t})\right\rangle + \nabla \cdot \left(\widehat{F}^R- \widehat{F}\right)(X_{t}) \Big) dt \;,
\end{align}

In general, it may be not possible to recast this expression in a way that satisfies Assumption \ref{ass}. In order to proceed, we make further hypotheses on the structure of the process.

\begin{assumption}[Symmetry properties]\label{ass:symmetry} The following properties hold true:
\begin{itemize}
    \item[(i)] the entropic functional exists, i.e. one has \eqref{eq:Dinvariant} 
    \item[(ii)] the probability density $\varrho(x)$ of the initial condition is invariant under the transformation $r$, i.e $\varrho(x)=\varrho(rx)$ for any $x \in \mathbb{R}^d$
    \item[(iii)] the vector field $F(x)$ is the sum of a constant term $\overline{F}$ and a symmetric vector field $F_S(x)$, such that $F_S(x)= J_r F_S (r^{-1}(x))$
    \item[(iv)] the modification of the drift is such that $\widetilde{F}^R(x)=\widetilde{F}(x)$, namely for all $i$ 
    $$ \sum_{jk}\sigma^R_{ik}(x)\frac{\partial \sigma^R_{jk}(x)}{\partial x_j} = \sum_{jk}\sigma_{ik}(x)\frac{\partial \sigma_{jk}(x)}{\partial x_j}\,,$$
    and it is symmetric, i.e. $\widetilde{F}(x)= J_r \widetilde{F}(r^{-1}(x))$
    \item[(v)]  the covariance satisfies $D(r(x))= D(x)$
    \item[(vi)] the jacobian is space-time homogeneous $J_r(x) \equiv J_r$ (this is the case when $r$ is some affine transformation)
\end{itemize}
\end{assumption}

This assumption seems in fact very restrictive. Note however that we use it to treat multiplicative noise. In the case of additive noise, when $\sigma$ is a constant, the modified drift equals the original drift, so that $\widetilde{F}(x)=\widetilde{F}^R(x)=0$ and \emph{(iv)} is trivially satisfied. Also, \emph{(v)} is trivially satisfied because $D$ is constant. Therefore, in the additive case one assumes some symmetry in the drift (allowing for a constant asymmetric part) and in the initial condition, while for general multiplicative noise more constraints on the noise matrix are needed. Indeed, the conditions on $\widehat{F}^R(x)$ and $\widetilde{F}(x)$ are in fact conditions on $\sigma^R(x)$ and $\sigma(x)$. In particular, they imply that $\widehat{F}^R(x)-\widehat{F}(x)= F^R(x)-F(x)$ and that $\widehat{F}^R(x)+\widehat{F}(x)$ is symmetric. We will consider later an example when these conditions are all verified and still the dynamics is nontrivial.

We can now state the main result of this Section.

\begin{theorem}[Finite time FRs for diffusions]
Let $\mathbb{P}_{\left[0,T\right]}$ be the path measure associated with the solution of the nondegenerate SDE \eqref{eq:diff}. Let $R$ be a bijection in the space of trajectories $\Omega_T=C([0,T], \mathbb{R}^n)$ that satisfies Assumption \ref{ass:local}. The vector field $F$ and the diffusion matrix $\sigma$ satisfy Assumption \ref{ass:symmetry}. Then, the Fluctuation Relations \eqref{eq:MRF-2-1-1} and \eqref{eq:beau} hold true with the following triple $(A_T, w, \mathcal{R})$
    $$\begin{cases}
     A_T=\int_{0}^{T}  D^{-1}(X_{t})\circ \Big(dX_{t} - \frac{1}{2} S(X_t) dt \Big) \\
     w = (\bbbone -J_r) \overline{F} \\
     \mathcal{R} = (J_r^\dag)^{-1}
    \end{cases}$$
where $S(x)$ is the vector field with components $ S_i(x)= 2 F_{S,i}(x) - \sum_{jk}\sigma_{ik}(x)\frac{\partial \sigma_{jk}(x)}{\partial x_j}$. Explicitly, they are
\begin{equation}\label{eq:diffFR1}
\frac{d\mathcal{P}_{A_T}}{d\mathcal{P}_{(J_r^\dag)^{-1} A_T}}(a)=\exp \Big\langle (\bbbone -J_r)\overline{F},(J_r^\dag) a \Big\rangle  \; ,
\end{equation}
\begin{equation}\label{eq:diffFR2}
\mathbb{E}\Big[\exp\Big\langle k,A_T\Big\rangle \Big]=\mathbb{E}\Big[\exp \Big\langle (J_r)^{-1} k -
(\bbbone-J_r)\overline{F}, \,A_T\Big\rangle \Big] .
\end{equation}
\end{theorem}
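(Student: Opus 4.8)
The plan is to start from the explicit Cameron--Martin--Girsanov formula \eqref{eq:girsanov} for $\varSigma_{T}^{\mathbb{P}}$ and to collapse it, using the clauses of Assumption~\ref{ass:symmetry}, into the single scalar product $\langle w,A_T\rangle$ demanded by Assumption~\ref{ass}; Theorem~\ref{thm:result1} then delivers \eqref{eq:diffFR1}--\eqref{eq:diffFR2} for free. By Assumption~\ref{ass:local} and the Modified diffusion lemma, $\mathbb{P}_{[0,T]}\circ R$ is the law of a diffusion with drift $F^{R}$ and covariance $D^{R}$, and Assumption~\ref{ass:symmetry}(i), which is exactly \eqref{eq:Dinvariant}, yields $D^{R}=D$, so that (non-degeneracy being in force) the two path measures are mutually absolutely continuous and \eqref{eq:girsanov} holds. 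The first step is to simplify the drift difference appearing there: by part~(iv) one has $\widetilde{F}^{R}=\widetilde{F}$, hence $\widehat{F}^{R}-\widehat{F}=F^{R}-F$, and writing $F=\overline{F}+F_{S}$ and using part~(iii) together with the space-time homogeneity of the Jacobian (part~(vi)),
\[
F^{R}(x)-F(x)=J_{r}\overline{F}+J_{r}F_{S}\!\left(r^{-1}(x)\right)-\overline{F}-F_{S}(x)=(J_{r}-\bbbone)\overline{F}=:v ,
\]
a \emph{constant} vector. Therefore $\nabla\cdot(\widehat{F}^{R}-\widehat{F})=0$, while part~(ii) kills the boundary term $\ln(\varrho(X_{0})/\varrho(rX_{0}))$, and \eqref{eq:girsanov} reduces to
\[
\varSigma_{T}^{\mathbb{P}}=-\int_{0}^{T}\big\langle v,\,D^{-1}(X_{t})\circ dX_{t}\big\rangle+\frac12\int_{0}^{T}\big\langle v,\,D^{-1}(X_{t})\big(\widehat{F}^{R}+\widehat{F}\big)(X_{t})\big\rangle\,dt .
\]

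Next I would extract $A_T$ from this integrand. From \eqref{eq:moddrift} and the definition of $S$ in the statement one has $\widehat{F}=\overline{F}+\tfrac12 S$, so $\widehat{F}^{R}+\widehat{F}=2\widehat{F}+v=2\overline{F}+S+v$; splitting the second integral accordingly produces precisely $A_T=\int_{0}^{T}D^{-1}(X_{t})\circ\big(dX_{t}-\tfrac12 S(X_{t})\,dt\big)$ together with a residual term,
\[
\varSigma_{T}^{\mathbb{P}}=-\langle v,A_T\rangle+\frac12\int_{0}^{T}\big\langle v,\,D^{-1}(X_{t})\,(v+2\overline{F})\big\rangle\,dt .
\]
The crux is to show that the residual integral vanishes identically. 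Combining part~(i) with part~(v) gives the pointwise matrix identity $J_{r}\,D(x)\,J_{r}^{\dagger}=D(r(x))=D(x)$, equivalently $J_{r}^{\dagger}D^{-1}(x)J_{r}=D^{-1}(x)$ for every $x$. Writing $v=(J_{r}-\bbbone)\overline{F}$ and $v+2\overline{F}=(J_{r}+\bbbone)\overline{F}$ and expanding the bilinear form into four terms, the term quadratic in $J_{r}\overline{F}$ and the term quadratic in $\overline{F}$ cancel by this invariance, while the two mixed terms cancel by the symmetry of $D^{-1}$. Hence $\varSigma_{T}^{\mathbb{P}}=-\langle v,A_T\rangle=\langle(\bbbone-J_{r})\overline{F},A_T\rangle=\langle w,A_T\rangle$ with $w=(\bbbone-J_{r})\overline{F}$.

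Then I would check the covariance of $A_T$. Since $(R X_{[0,T]})_{t}=r(X_{t})$ and $J_{r}$ is constant, the Stratonovich chain rule gives $d(r(X_{t}))=J_{r}\circ dX_{t}$; moreover $D(r(x))=D(x)$ (part~(v)), and from the definition of $S$ together with the symmetry of $F_{S}$ (part~(iii)) and of $\widetilde{F}$ (part~(iv)) one obtains $S(r(x))=J_{r}S(x)$. Substituting these into $A_T$ evaluated along $R X_{[0,T]}$ yields $\int_{0}^{T}D^{-1}(X_{t})\,J_{r}\circ\big(dX_{t}-\tfrac12 S(X_{t})\,dt\big)$, and the rearranged identity $D^{-1}(x)J_{r}=(J_{r}^{\dagger})^{-1}D^{-1}(x)$ lets one pull the constant matrix out of the integral, so that $A_T(R X_{[0,T]})=(J_{r}^{\dagger})^{-1}A_T(X_{[0,T]})=\mathcal{R}A_T(X_{[0,T]})$. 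The triple $\big(A_T,\,(\bbbone-J_{r})\overline{F},\,(J_{r}^{\dagger})^{-1}\big)$ therefore satisfies Assumption~\ref{ass}, Theorem~\ref{thm:result1} applies verbatim, and \eqref{eq:beau}, \eqref{eq:MRF-2-1-1} specialise to \eqref{eq:diffFR1}, \eqref{eq:diffFR2} once $\mathcal{R}^{-1}=J_{r}^{\dagger}$ and $\mathcal{R}^{\dagger}=(J_{r})^{-1}$ are inserted.

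The main obstacle is the vanishing of the residual quadratic term in the second step: unlike the other simplifications, each of which follows from a single clause of Assumption~\ref{ass:symmetry}, this cancellation requires feeding parts~(i) and~(v) jointly into the conjugation identity for $D^{-1}$ and then exploiting the symmetry of $D^{-1}$, and it is exactly this interplay that pins down the form of $A_T$ and explains the need for the $-\tfrac12 S\,dt$ correction inside the stochastic integral. A secondary technical point is justifying the Stratonovich change of variables inside the stochastic integral and the use of Cameron--Martin--Girsanov underlying \eqref{eq:girsanov}; but non-degeneracy of $D$ together with the assumed existence of a strong solution places us squarely in the regime where this machinery is standard, so the remainder is careful bookkeeping.
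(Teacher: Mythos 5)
Your proof is correct and follows essentially the same route as the paper's: simplify $\widehat{F}^R-\widehat{F}$ to the constant $(J_r-\bbbone)\overline F$, kill the boundary and divergence terms, split $\widehat{F}^R+\widehat{F}$ into $(J_r+\bbbone)\overline F+S$, show the $\overline F$-bilinear piece vanishes via $J_r^\dag D^{-1}J_r=D^{-1}$ together with symmetry of $D^{-1}$, and then verify the covariance of $A_T$ from $D(r(x))=D(x)$, $S(r(x))=J_rS(x)$ and $D^{-1}J_r=(J_r^\dag)^{-1}D^{-1}$. The only differences from the paper are cosmetic (naming $v$, expanding $\widehat{F}^R+\widehat{F}$ as $v+2\overline F+S$ before regrouping).
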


\begin{proof}
    Due to Assumption \ref{ass:symmetry}, the vector fields $F(x)$ and $F^R(x)$ read
    \begin{align*}
        F(x) &= \overline{F} + F_S(x)  \\
        F^R(x) &= J_r \overline{F} + F_S(x) \,,
    \end{align*}
    so that the difference turns out to be constant $(F^R- F)(x)= (\widehat{F}^R- \widehat{F})(x)= (J_r - \bbbone)\overline{F}$
    and the last term in Eq. \eqref{eq:girsanov} is vanishing. Consider now the first term in the second line of \eqref{eq:girsanov}. The sum $(\widehat{F}^R+\widehat{F})(x)$ consists of a constant term $(J_r + \bbbone)\overline{F}$ plus the term $S(x)$ which is symmetric.
    We now show that the constant term does not contribute. Indeed, with our assumptions $J_r D (x) J_r^\dag =D(x) = J_r^{-1} D (x) (J_r^\dag)^{-1}$ and therefore one also has $D^{-1}(x)=  J_r^\dag D^{-1}(x) J_r$. Computing the scalar product, we consequently get
    \begin{align*}
       &\left\langle  (J_r - \bbbone)\overline{F} ,D^{-1}(X_{t}) (J_r + \bbbone)\overline{F} \right\rangle \\
       =& \left\langle  J_r \overline{F} ,D^{-1}(X_{t}) \, J_r  \overline{F} \right\rangle  - \left\langle  \overline{F} , D^{-1}(X_{t})  \,\overline{F} \right\rangle \\
       =& \left\langle  \overline{F} , \Big( J_r^\dag D^{-1}(X_{t}) J_r -  D^{-1}(X_{t}) \Big)\overline{F} \right\rangle = 0 \,.
    \end{align*}
    As a result, the entropic functional can be written as follows
    \begin{equation*}
       \varSigma_{T}^{\mathbb{P}}=  \left\langle (\bbbone -J_r) \overline{F}, \int_{0}^{T}  D^{-1}(X_{t})\circ \Big(dX_{t} - \frac{1}{2} S(X_t) dt \Big) \right\rangle \,,
       \label{eq:GenStefano}
    \end{equation*}
    thus reading as in Eq. \eqref{eq:ass1}, for instance with 
    $$\begin{cases}
     A_T\left(X_{\left[0,T\right]}\right)=\int_{0}^{T}  D^{-1}(X_{t})\circ \Big(dX_{t} - \frac{1}{2} S(X_t) dt \Big) \\
     w= (\bbbone -J_r) \overline{F} 
    \end{cases}.$$ 
    In order to conclude the proof, one has to check that this observable $A_T\left(X_{\left[0,T\right]}\right)$ does indeed satisfy covariance property given in Eq. \eqref{eq:ass1}. This is readily done, putting together the different properties that follow from Assumption \ref{ass:symmetry}. In particular, one has
    \begin{align*}
        &D^{-1}(r(X_t))= D^{-1}(X_t), \quad  d r(X_t)= J_r dX_t, \\
        &S(r(X_t))= J_r S(X_t),\quad  D^{-1}(X_t) J_r= (J_r^\dag)^{-1} D^{-1}(X_t) ,
    \end{align*}
    so that in the end $A\left(R X_{\left[0,T\right]}\right)= (J_r^\dag)^{-1} A\left(X_{\left[0,T\right]}\right)$.    
\end{proof}
We can now make a few remarks on the result. \\
(I) First of all, if $r$ is an involution, the drift $F$ can be always decomposed as $F=F_A + F_S$ with
\begin{equation}
    \begin{cases}
     F_{S}=\frac{F+ F^R}{2}\\
     F_{A}=\frac{F- F^R}{2}
    \end{cases},
\end{equation}
where $F_{S}$ is symmetric under $r$ (i.e. $ J_r F_{S}(r^{-1}x)=F_{S}(x)$)
and $F_{A}$ is anti-symmetric under $r$ (i.e. $J_r F_{A}(r^{-1}x)=-F_{A}(x)$). Therefore, in this case, the only assumption one needs on $F$ is that $F_A$ is a constant. \\
(II) The decomposition of the entropic functional is not unique. For instance, one could use
$$\begin{cases}
     A_T\left(X_{\left[0,T\right]}\right)=\int_{0}^{T} (\bbbone -J_r^\dag) D^{-1}(X_{t})\circ \Big(dX_{t} - \frac{1}{2} S(X_t) dt \Big) \\
     w=  \overline{F} 
\end{cases},$$ 
and everything holds true with the same $\mathcal{R}= (J_r^\dag)^{-1} $. Also, if the noise is additive, i.e. $D(x)\equiv D$, one can write
$$\begin{cases}
     A_T\left(X_{\left[0,T\right]}\right)= X_T -X_0 - \frac{1}{2}\int_{0}^{T} S(X_t) dt  \\
     w= D^{-1} (\bbbone -J_r^\dag) \overline{F} 
\end{cases}.$$
\\ 
(III) Despite the lack of uniqueness, the freedom in the choice of the observable $A_T$ is little, contrary to the canonical path probability set-up of the previous section, where the Assumption \ref{ass_cano} was the only restriction.  
\\
(IV) The proof is very similar to the one presented for the \emph{spatial fluctuation relations} of \cite{PRG15} and essentially based on the Cameron-Martin-Girsanov formula plus symmetry assumptions on the process. However, we consider here a more general situation and we are able to treat also a class of diffusions with multiplicative noise.

Concerning the behavior for long times, it is more challenging in this case to find sufficient conditions such that Assumption \ref{ass:ld} is valid and therefore our main result Eq. \eqref{eq:MR} holds true. Based on the finite-time result and heuristic considerations about large deviations, the expected \emph{Asymptotic Fluctuation Relation} should read
\begin{equation}\label{eq:MR-2}
\begin{cases}
 I_{\mathbb{P}}\left(a\right)=\left\langle (\bbbone -J_r) \overline{F},J_{r}^\dag a\right\rangle +I_{\mathbb{P}}\left(J_{r}^\dag a\right)\\
 \Lambda_{\mathbb{P}}\left(k\right)=\Lambda_{\mathbb{P}}\left(J_{r}^{-1} k- (\bbbone -J_r) \overline{F}\right) 
\end{cases},
\end{equation}
or equivalently,
\begin{equation}\label{eq:MR-2-1}
\begin{cases}
I_{\mathbb{P}}\Big((J_r^\dag)^{-1} a\Big)=\Big\langle (\bbbone -J_r) \overline{F}, a\Big\rangle +I_{\mathbb{P}}\big(a\big)\\ \Lambda_{\mathbb{P}}\left(J_{r} k\right)=\Lambda_{\mathbb{P}}\left(k - (\bbbone -J_r) \overline{F}\right) 
\end{cases}.
\end{equation} 
For a system with compact state space, as done in \cite{LS99} in the involution case, starting from the Girsanov formula one can show that the limit 
\begin{equation}
    \Lambda_{\mathbb{P}}(k) = \lim_{t \to \infty} \frac{1}{T} \ln \mathbb{E}_{\mathbb{P}}\left[\exp\Big\langle k,A_T\Big\rangle \right]
\end{equation}
exists for $k$ in some open ball of $\mathbb{R}^n$ containing the origin, using Perron-Frobenius type arguments. Then, G\"artner-Ellis theorem gives the large deviation principle for the family of probabilities $\mathcal{P}_{A_T}=\mathbb{P} \circ A_T^{-1}$ with rate functional $I_{\mathbb{P}}$, thus completing the requirements of Assumption \ref{ass:ld}. In such a case the \emph{Asymptotic Fluctuation Relation} \eqref{eq:MR-2}-\eqref{eq:MR-2-1} for $A_T$ can be proved rigorously.

When the state space is not compact, like for instance $\mathbb{R}^n$ in the present case, the previous strategy in general fails, as pointed out in \cite{JPS17}, because the observable and even the temporal boundary terms are typically unbounded, so that one has to resort to ad hoc methods. For instance, in \cite{BGL23} the authors proved a large deviation principle for a suitable modification of the entropic functional (removing unbounded contributions), using the contraction technique from a so-called \emph{level 3} large deviation result \cite{BGL23,Var84}. According to the general Donsker-Varadhan theory for the long-time asymptotics of Markov processes \cite{DV75a,DV75b,DV76,DV83}, one can define the empirical process as the map\footnote{The empirical process is usually denoted $R_T$ in the literature, but we choose here a different letter so as to avoid confusion with the transformation $R$.} $\zeta_T: C([0,T],\mathbb{R}^n) \to \mathcal{P}$, 
\begin{equation}
    \zeta_T(X) := \frac{1}{T}\int_0^T \delta_{\theta_t X^T} dt
\end{equation}
where $\theta_t$ is the time-translation operator, $\mathcal{P}$ is the set of translation-invariant probabilities on $D(\mathbb{R}, \mathbb{R}^n)$ and $(X^T)_t := X_{t- \lfloor t/T \rfloor T}$
is the $T$-periodization of $X_t$. The level 3 large deviation principle is a large deviation principle pertaining to the statistics of the empirical process and lower level large deviation principle should then be obtained by contraction. More similar to the general strategy mentioned before in the paper is the approach of \cite{BdG15} and \cite{Raq24} where the authors relate the scaled cumulant generating function to the dominant eigenvalue of a differential operator (a modified generator) and by means of a careful analysis of the domain issues they can reduce the problem to the one where (some variation of) the G\"artner-Ellis theorem applies.

\subsection{Example: Rotationally-invariant multiplicative noise}

To illustrate the novelty of our result we present here a simple example of a diffusion with \emph{multiplicative} noise that satisfies the FRs.
Let us consider the SDE \eqref{eq:diff} for a two-dimensional stochastic process $X_t \in \mathbb{R}^2$ with initial condition $X_0=0$. We choose the linear transformation $r_\theta (x) \equiv U_\theta x$ as a rotation parametrized by the angle $\theta$
\begin{equation}
U_\theta = 
\begin{pmatrix}
\cos(\theta)  &-\sin(\theta) \\
\sin(\theta)   & \cos(\theta) 
\end{pmatrix}.
\end{equation}
In this case one has $J_r = U_\theta$ independently of $x$, that is point $(vi)$ in Assumption \ref{ass:symmetry}. The condition $X_0=0$ implies that point $(ii)$ in Assumption \ref{ass:symmetry} holds true as well.
To satisfy the other symmetry constraints, we want the diffusion matrix to be rotationally invariant, but at the same time we want to allow for a multiplicative noise in order to make the example more interesting. A possibility is to choose 
\begin{equation}
    \sigma(x) = f(|X_t|^2) 
    \begin{pmatrix}
\cos(\lambda)  &-\sin(\lambda) \\
\sin(\lambda)   & \cos(\lambda) 
    \end{pmatrix}
\end{equation}
where $\lambda$ is a real parameter and $f: \mathbb{R}_+ \to \mathbb{R}$ is a scalar function of $|X_t|^2$ such that $f(x)\neq 0$ for all $x \in \mathbb{R}_+$. In this case, the covariance matrix explicitly reads
\begin{equation*}
    D(X_t) = f^2(|X_t|^2) \bbbone_2 ,
\end{equation*}
and one immediately has $D(r(X_t))=J_r D J_r^\dag (X_t)=D(X_t)$, so that the points $(i)$ and $(v)$ of Assumption \ref{ass:symmetry} hold true. We also assume the drift to be $F(X_t) = \overline{F} + F_S(X_t)$ in order to cope with point $(iii)$. The modified drift $\widehat{F}$ can also be computed readily from \eqref{eq:moddrift} (we use here the notation $\sigma_{ij}(x)= f(|x|^2)\sigma_{ij}$)
\begin{align}
    \widehat{F}_1(x) &= F_1(x) - \frac{1}{2} f(|x|^2) \Big( \frac{\partial f(|x|^2)}{\partial x_1} (\sigma_{11}^2 + \sigma_{12}^2) + \frac{\partial f(|x|^2)}{\partial x_2} (\sigma_{11}\sigma_{21} + \sigma_{12}\sigma_{22})  \Big) \nonumber \\
    &= F_1(x) - \frac{1}{2} f(|x|^2) \frac{\partial f(|x|^2)}{\partial x_1} \;, \\
    \widehat{F}_2(x) &= F_2(x) - \frac{1}{2} f(|x|^2) \Big( \frac{\partial f(|x|^2)}{\partial x_1} (\sigma_{21}\sigma_{11} + \sigma_{22}\sigma_{12}) + \frac{\partial f(|x|^2)}{\partial x_2} (\sigma_{21}^2 + \sigma_{22}^2)  \Big) \nonumber \\
    &= F_2(x) - \frac{1}{2} f(|x|^2)  \frac{\partial f(|x|^2)}{\partial x_2} \;.
\end{align}
More compactly, one can write $\widehat{F}(X_t) = \overline{F} + F_S(X_t) - \left(f\cdot f'\right) (|X_t|^2) X_t$. We need to compute the drift in the rotated process as well. It is $F^R(x)= J_r \overline{F} + F_S(x)$. The diffusion matrix reads \eqref{eq:imdrift-1-1}
$$\sigma^R(x)= J_r \sigma(r^{-1}x)= J_r \sigma(x) = f(|x|^2)
\begin{pmatrix}
\cos(\lambda +\theta)  &-\sin(\lambda + \theta) \\
\sin(\lambda + \theta)   & \cos(\lambda + \theta) 
\end{pmatrix} \;.
$$
Since $\sigma^R(x)$ has the same structure of $\sigma(x)$, the only difference being a shift of the argument of the trigonometric functions $\lambda \to \lambda+\theta$ the calculation of the modified drift is performed in the same way and the result is $\widehat{F}^R(X_t)= F^R(X_t) -\left(f\cdot f'\right) (|X_t|^2) X_t$. Therefore, all the conditions of Assumption \ref{ass:symmetry} are satisfied and the FRs \eqref{eq:diffFR1} and \eqref{eq:diffFR2} hold true for the observable 
$$A_T = \int_{0}^{T}  f^{-2}(|X_{t}|^2)\circ \Big(dX_{t} - F_S(X_t) + \frac{1}{2} \left(f\cdot f'\right) (|X_t|^2) X_t dt \Big). $$ In particular, recalling that $(J_r^\dag)^{-1}= (U_\theta^\dag)^{-1}= U_\theta$, one has the Fluctuation Relations \eqref{eq:diffFR1}-\eqref{eq:diffFR2}
\begin{equation}
\frac{d\mathcal{P}_{A_T}}{d\mathcal{P}_{U_\theta A_T}}(a)=\exp \Big\langle (U_\theta - \bbbone)\overline{F},U_\theta^\dag a \Big\rangle  \; ,
\end{equation}
\begin{equation}
\mathbb{E}\Big[\exp\Big\langle k,A_T\Big\rangle \Big]=\mathbb{E}\Big[\exp \Big\langle U_\theta^\dag k -
(\bbbone-U_\theta)\overline{F}, \,A_T\Big\rangle \Big] .
\end{equation}

\section*{Acknowledgments}

This work was funded under the Horizon Europe research and innovation programme through the MSCA project ConNEqtions, n. 101056638.  S.M.  also gratefully acknowldges financial support from the Italian Ministry of University and Research and Next Generation EU through the PRIN 2022 project ONES,  CUP:D53C24003430001,  and from the European Research Council through the ERC StG MaTCh,  grant agreement n. 101117299.
S.M. acknowledges the collaboration with Juan P. Garrahan and Carlos P\'erez-Espigares on related projects. The work of S.M. has been carried out under the auspices of the GNFM of INdAM. 
R.C. is supported by the project RETENU ANR-20-CE40 of the French National Research Agency (ANR). 
R.C. wants to pay tribute to Krzysztof Gawedzki, who left us 2 years ago. It is with him that R.C. began to be interested in Fluctuation Relations. 

\paragraph{Data Availability Statement.} No datasets were generated or analysed during the current study.

\paragraph{Conflicts of Interest.} The authors have no conflicts of interest to declare that are relevant to the content of this article.

\appendix

\section{Calculations in the semi-Markov process example}
\label{app_semimarkov}

Recall the probability density of a trajectory given in \eqref{eq:ptraj} and substitute the assumptions on the semi-Markov kernel 
\begin{equation}\label{eq:rw}
    M(x,y,t) = 
    \begin{cases}
        p M(t) \quad  \mathrm{if} \quad  y = x+1 \; \mathrm{MOD} \; L \\
        q M(t) \quad  \mathrm{if} \quad  y = x-1 \; \mathrm{MOD} \; L \\
        0 \quad\quad\quad  \mathrm{otherwise}
    \end{cases}, 
\end{equation}
where $p,q >0, p+q=1$ and $M(t)= \frac{\lambda^{n}t^{N-1}}{\left(n-1\right)!}\exp\left(-\lambda t\right)$. Explicitly, for a trajectory with $N$ jumps, one has
\begin{align}
    d \mathbb{P}_{[0,T]}^0 [X_{[0,T]}] = \pi(x_0) \prod_{i=0}^{N-1} p(x_i,x_{i+1}) M(t_{i+1}-t_i) \left(\int_{T-t_N}^{\infty} M(\tau) d\tau \right) \prod_{j=1}^{N} dt_j \;.
\end{align}

If $A_T$ is an observable counting the number of jumps of a certain type, each term $\mathrm{e}^{\left\langle k, A_T (X_{[0,T]}) \right\rangle} d \mathbb{P}_{[0,T]}^0 [X_{[0,T]}]$ will be factorized into a time-dependent term that involves the details about the jump times (and therefore the sojourn times) and a time-independent term that contains the details about the visited configurations and the different types of jumps occurred. For a fixed number of jumps, and a fixed set of jump times, one can sum over the possible sequences of visited configurations. For instance, for any even initial configuration $x_0$, one has a total of four possible two-jump trajectories with the following weights:
\begin{align}\label{eq:trees}
    &x_0  \overset{p \mathrm{e}^{k_1}}{\longrightarrow} x_0  + 1 \overset{q \mathrm{e}^{k_1}}{\longrightarrow} x_0 \,,\\
    &x_0  \overset{q \mathrm{e}^{k_2}}{\longrightarrow} x_0  - 1 \overset{p \mathrm{e}^{k_2}}{\longrightarrow} x_0 \,,\\
    &x_0  \overset{p \mathrm{e}^{k_1}}{\longrightarrow} x_0  + 1 \overset{p \mathrm{e}^{k_2}}{\longrightarrow} x_0 + 2 \,,\\
    &x_0  \overset{q \mathrm{e}^{k_2}}{\longrightarrow} x_0  - 1 \overset{q \mathrm{e}^{k_1}}{\longrightarrow} x_0 -2  \,.
\end{align}
For odd initial configurations, one finds the same expressions upon exchanging $k_1$ with $k_2$. This is because a jump $x \to x+1$ will count as a even-bond jump (odd-bond jump), and therefore keep a factor $\mathrm{e}^{k_1}$ ($\mathrm{e}^{k_2}$), if $x$ is even (odd). Summing over, one finds 
\begin{equation*}
   \sum_{x_0 \,\text{even}} \pi(x_0) \Big( pq \mathrm{e}^{2 k_1} + pq \mathrm{e}^{2 k_2} + (p^2 + q^2)\mathrm{e}^{k_1+k_2} \Big) = \frac{1}{2} \eta_{(k_1,k_2)} \;,
\end{equation*}
where we defined the function $\eta_{(k_1,k_2)} \equiv (p \mathrm{e}^{k_1} + q \mathrm{e}^{k_2})(q \mathrm{e}^{k_1} + p \mathrm{e}^{k_2})$ and we used the fact that $\sum_{x_0 \,\text{even}} \pi(x_0)=1/2$ by symmetry. The expression is symmetric under the exchange $k_1 \leftrightarrow k_2$ and therefore the contribution of odd $x_0$ reads the same. When considering the trajectories with three jumps one can continue by adding a further jump from the $N=2$ cases and each ending configuration in \eqref{eq:trees} can further jump right ($p\mathrm{e}^{k_1}$)  or left ($q\mathrm{e}^{k_2}$). When summing over, one gets an overall factor $p\mathrm{e}^{k_1} + q\mathrm{e}^{k_2}$ for the even $x_0$ case and a factor $p\mathrm{e}^{k_2} + q\mathrm{e}^{k_1}$ for odd $x_0$. Since, as we already saw, the factor $\sum_{x_0 \,\text{even}} \pi(x_0) = \sum_{x_0 \,\text{odd}} \pi(x_0)= 1/2$ is the same for the two cases, one has a global factor
\begin{equation*}
    \frac{1}{2}\eta_{(k_1,k_2)} \Big( p\mathrm{e}^{k_1} + q\mathrm{e}^{k_2} + p\mathrm{e}^{k_2} + q\mathrm{e}^{k_1} \Big) = \frac{\mathrm{e}^{k_1} + \mathrm{e}^{k_2}}{2} \, \eta_{(k_1,k_2)}.
\end{equation*}
More in general, repeating the same reasoning, one finds that for $N=2\ell$ the structural factor reads $\eta_{(k_1,k_2)}^\ell$, i.e. $\eta$ to the power $\ell$, while for $N=2\ell +1$ the factor is $\frac{\mathrm{e}^{k_1} + \mathrm{e}^{k_2}}{2}\eta_{(k_1,k_2)}^\ell$.
Therefore, one can write the moment generating function as follows 
\begin{equation}
    \mathbb{E}_{\mathbb{P}^0} \left[ \mathrm{e}^{\left\langle k, A_T \right\rangle} \right] = \sum_{\ell=0}^\infty f_{2\ell}(T) \eta_{(k_1,k_2)}^\ell + \frac{\mathrm{e}^{k_1}+\mathrm{e}^{k_2}}{2} \sum_{\ell=0}^\infty f_{2\ell+1}(T) \eta_{(k_1,k_2)}^\ell,
\end{equation}
where the function $f_N(T)$ depends on the number of jumps $N$ and reads
\begin{align}
    f_N(T)= \frac{\mathrm{e}^{-\lambda T} (\lambda T)^{nN}}{\Gamma(n)^N} \sum_{j=0}^{n-1}\frac{\lambda^j T^j}{j!} \int_0^1 \tau^{Nn-1} (1-\tau)^j d\tau \, I_{N-1}I_{N-2} \ldots I_1,
\end{align}
with $$I_j= \int_0^1 x^{jn-1}(1-x)^{n-1} dx$$ and $\Gamma$ is the Euler gamma function.
For $n=1$, that is the Markovian case, one has $I_j=1/j$, $\Gamma(n)=1$ and therefore $f_N(T)= \mathrm{e}^{-\lambda T} (\lambda T)^{N}/N!$. This allows to easily sum the series 
\begin{equation}
    (n=1): \qquad \mathbb{E}_{\mathbb{P}^0} \left[ \mathrm{e}^{\left\langle k, A_T \right\rangle} \right] = \mathrm{e}^{-\lambda T} \big[ \cosh(\lambda T \sqrt{\eta}) + \frac{\mathrm{e}^{k_1}+\mathrm{e}^{k_2}}{2\sqrt{\eta}}\sinh(\lambda T \sqrt{\eta}) \big].
\end{equation}
The non Markovian case $n=2$ is slightly more difficult but can be done similarly. One can see that $$
I_j= \frac{1}{(2j)(2j+1)} \quad \text{and} \quad \sum_{j=0}^{1}\frac{\lambda^j T^j}{j!} \int_0^1 \tau^{2N-1} (1-\tau)^j d\tau = \frac{2N + 1 + \lambda T}{2N(2N+1)}, $$
and therefore
\begin{equation}
     f_N(T)= \mathrm{e}^{-\lambda T}  \left( \frac{ (\lambda T)^{2N}}{(2N)!} + \frac{ (\lambda T)^{2N+1}}{(2N+1)!}  \right) .
\end{equation}
This in turn allows to write the moment generating function as
\begin{equation}
    (n=2): \quad \mathbb{E}_{\mathbb{P}^0} \left[ \mathrm{e}^{\left\langle k, A_T \right\rangle} \right] = \mathrm{e}^{-\lambda T} \sum_{\ell=0}^\infty \eta^\ell \big[ \frac{ (\lambda T)^{4\ell}}{(4\ell)!} + \frac{ (\lambda T)^{4\ell+1}}{(4\ell+1)!}  + \frac{\mathrm{e}^{k_1}+\mathrm{e}^{k_2}}{2} \big( \frac{ (\lambda T)^{4\ell+2}}{(4\ell+2)!} + \frac{ (\lambda T)^{4\ell+3}}{(4\ell+3)!}  \big)  \big] .
\end{equation}
This expression can also be summed recalling that
\begin{align}
    &\sum_{\ell=0}^\infty \frac{ x^{4\ell}}{(4\ell)!} = \frac{\cosh(x)+\cos(x)}{2} \qquad \qquad \sum_{\ell=0}^\infty \frac{ x^{4\ell+1}}{(4\ell+1)!} = \frac{\sinh(x)+\sin(x)}{2} \nonumber \\
    & \sum_{\ell=0}^\infty \frac{ x^{4\ell+2}}{(4\ell+2)!} = \frac{\cosh(x)-\cos(x)}{2} \qquad \; \sum_{\ell=0}^\infty \frac{ x^{4\ell+3}}{(4\ell+3)!} = \frac{\sinh(x)-\sin(x)}{2} \nonumber
\end{align}
For a generic $n$ the integral $I_j$ reads as follows
\begin{equation}
    I_j = \frac{(n-1)!}{nj(nj+1)\ldots (nj+n-1)}
\end{equation}
and one also has\footnote{These are instances of the formula $I(a,b) \equiv \int_0^1 x^{a-1} (1-x)^b dx = \frac{b!}{a(a+1)\ldots (a+b)}$, for $a,b \in \mathbb{N}, a\geq 1,b \geq 0$, that can be proved recursively using integration by parts. Indeed, one has the relation $I(a,b)= \frac{b}{a}I(a+1,b-1)$ that can be iterated until reaching $I(a+b,0)=\frac{1}{a+b}$.  }
\begin{equation}
    \sum_{j=0}^{n-1}\frac{\lambda^j T^j}{j!} \int_0^1 \tau^{nN-1} (1-\tau)^j d\tau = \frac{1}{nN} + \frac{\lambda T}{nN(nN + 1)}+ \ldots+ \frac{\lambda^{n-1}T^{n-1}}{nN(nN+1)\ldots (nN+n-1)}.
\end{equation}
Therefore, 
\begin{equation}
   f_N(T)= \mathrm{e}^{-\lambda T}\sum_{j=0}^{n-1}\frac{(\lambda T)^{n N+j}}{(n N + j)!}
\end{equation}
and the moment generating function reads
\begin{equation}
   \mathbb{E}_{\mathbb{P}^0} \left[ \mathrm{e}^{\left\langle k, A_T \right\rangle} \right] = \mathrm{e}^{-\lambda T} \sum_{\ell=0}^\infty \left(  \eta_{(k_1,k_2)}^\ell \sum_{j=0}^{n-1}\frac{(\lambda T)^{2\ell n+j}}{(2 \ell n + j)!} + \frac{\mathrm{e}^{k_1}+\mathrm{e}^{k_2}}{2}\eta_{(k_1,k_2)}^\ell  \sum_{j=0}^{n-1}\frac{(\lambda T)^{(2\ell +1) n+j}}{(2 \ell n + n + j)!} \right).
\end{equation}
The series can be summed exactly using the following formula, for $j \in \{0,1,\ldots, 2n-1\}$,
\begin{equation}
    \sum_{\ell =0}^\infty \frac{x^{2n\ell+j}}{(2n\ell +j)!} = \frac{1}{2n}\sum_{p=0}^{2n-1} \mathrm{e}^{-i\frac{\pi}{n}p j} \mathrm{e}^{x \, \mathrm{e}^{i\pi p/n}} 
\end{equation}
and the result is
\begin{equation}
    \mathbb{E}_{\mathbb{P}^0} \left[ \mathrm{e}^{\left\langle k, A_T \right\rangle} \right] = \frac{\mathrm{e}^{-\lambda T}}{2n} \sum_{p=0}^{2n -1} \mathrm{e}^{\lambda T \eta^{1/(2n)} \, \mathrm{e}^{i\pi p/n}} \sum_{j=0}^{n-1} \mathrm{e}^{-i\frac{\pi}{n}p j}  \left( \frac{1}{\eta^{\frac{j}{2n}}} + \frac{\mathrm{e}^{k_1}+\mathrm{e}^{k_2}}{2} \frac{(-1)^p}{\eta^{\frac{n+j}{2n}}}\right).
\end{equation}
Among the many time-dependent exponentials in the sum, the leading one is the term corresponding to $p=0$, so that one can compute the scaled cumulant generating function as presented in Eq.\eqref{eq:lambdaSemiMarkov}
\begin{align*}
    \Lambda_0(k) \equiv \lim_{T \to \infty} \frac{1}{T}\ln \mathbb{E}_{\mathbb{P}^0} \left[ \mathrm{e}^{\left\langle k, A_T \right\rangle} \right] &= \lambda (\eta_{(k_1,k_2)})^{\frac{1}{2n}} - \lambda \\
    &= \lambda \left( (p \mathrm{e}^{k_1} + q \mathrm{e}^{k_2})(q \mathrm{e}^{k_1} + p \mathrm{e}^{k_2}) \right)^{1/2n} - \lambda .
\end{align*}

One can also compute explicitly the rate function, that is the Legendre transform of $\Lambda_0$
\begin{equation}
    I_0(a)= \sup_{k \in \mathbb{R}^d}\Big( \left\langle k, a \right\rangle - \Lambda_0(k) \Big).
\end{equation}
Since $\Lambda_0$ is smooth, in order to find the supremum we look for stationary points of the gradient and further inspect the Hessian matrix to understand the nature of these stationary points.
The condition on the gradient for a stationary point $k^*$ reads
\begin{align*}
    &\partial_1 \Big( \left\langle k, a \right\rangle - \Lambda_0(k) \Big) = a_1 - \partial_1 \Lambda_0(k) = a_1 - \frac{\lambda}{2n} \eta^{1/2n - 1} \partial_1 \eta \overset{k=k^*}{=} 0 \,,\\
    &\partial_2 \Big( \left\langle k, a \right\rangle - \Lambda_0(k) \Big) = a_2 - \partial_2 \Lambda_0(k) = a_2 - \frac{\lambda}{2n} \eta^{1/2n - 1} \partial_2 \eta \overset{k=k^*}{=} 0 \,.
\end{align*}
From the explicit expression of $\eta$ one finds that $\partial_1 \eta + \partial_2 \eta = 2 \eta$, for any $k$, so that summing the two equations one has
\begin{equation}\label{eq:sum}
    a_1 + a_2 = \frac{\lambda}{n} \eta^{1/2n} \Big\vert_{k^*} ,
\end{equation}
and therefore also
\begin{equation}\label{eq:der}
    \partial_1 \eta \big\vert_{k^*}  = \eta \big\vert_{k^*}  \frac{2 a_1}{a_1 + a_2} , \qquad \partial_2 \eta \big\vert_{k^*}  = \eta \big\vert_{k^*}  \frac{2 a_2}{a_1 + a_2}.
\end{equation}
Recalling the explicit expression of $\eta$ and introducing the variables $x= \mathrm{e}^{k_1^*}$ and $y=\mathrm{e}^{k_2^*}$ one has from \eqref{eq:sum}
\begin{equation}
    (p^2 + q^2) xy + pq (x^2 + y^2) = \left( \frac{n}{\lambda} (a_1 + a_2) \right)^{2n}.
\end{equation}
Also, subtracting the two equations in \eqref{eq:der}, a second expression relating $x$ and $y$ reads
\begin{equation}
    pq (x^2 - y^2) = \frac{a_1-a_2}{a_1+a_2} \left( \frac{n}{\lambda} (a_1 + a_2) \right)^{2n}.
\end{equation}
These two equations are conveniently solved in terms of the variables $z=x+y$ and $w=x-y$. In particular, defining also the constant $C^2=(n(a_1+a_2)/\lambda)^{2n}$, they read
\begin{align}
    &z^2 - (p-q)^2 w^2 = 4 C^2, \\
    & (pq) z w = \left( \frac{a_1-a_2}{a_1+a_2} \right) C^2.
\end{align}
One can write $w$ in terms of $z$ from the second equation
\begin{equation}\label{eq:w}
    w= \left(\frac{a_1-a_2}{a_1+a_2} \right)  \frac{C^2}{pq}\frac{1}{z}
\end{equation}
and then substitute it in the first one, obtaining a quadratic equation for $z^2$
\begin{equation}
    z^4 - 4C^2 z^2 - \frac{(p-q)^2}{p^2 q^2}\left(\frac{a_1-a_2}{a_1+a_2} \right)^2 C^4 =0,
\end{equation}
that can be readily solved (the negative root is discarded because incompatible with $z^2$)
\begin{equation}
    z^2 = 2 C^2 \left(1 + \sqrt{1 + \frac{(p-q)^2}{4 p^2 q^2}\left(\frac{a_1-a_2}{a_1+a_2} \right)^2 } \right).
\end{equation}
Therefore,
\begin{equation}
    z= \sqrt{2}C \sqrt{1 + \sqrt{1 + \frac{(p-q)^2}{4 p^2 q^2}\left(\frac{a_1-a_2}{a_1+a_2} \right)^2 }}
\end{equation}
and one has also
\begin{equation}
    \frac{1}{z}= \frac{1}{\sqrt{2}C} \sqrt{-1 +\sqrt{1 + \frac{(p-q)^2}{4 p^2 q^2}\left(\frac{a_1-a_2}{a_1+a_2} \right)^2 } } \frac{2pq}{|p-q|}\frac{a_1+a_2}{|a_1-a_2|} ,
\end{equation}
so that from \eqref{eq:w} 
\begin{equation}
    w= \frac{\sqrt{2}C}{|p-q|} \mathrm{sign}(a_1-a_2) \sqrt{-1 +\sqrt{1 + \frac{(p-q)^2}{4 p^2 q^2}\left(\frac{a_1-a_2}{a_1+a_2} \right)^2 } } .
\end{equation}
Coming back to the original variables, one arrives at
\begin{align}
    &k_1^* = \ln(x) = \ln \left(\frac{z+w}{2}\right) \\
    &=n \ln \left(\frac{n (a_1+a_2)}{\lambda}\right)  - \frac{\ln(2)}{2} \nonumber \\
    &+ \ln\left( \sqrt{1 + \sqrt{1 + \frac{(p-q)^2}{4 p^2 q^2}\left(\frac{a_1-a_2}{a_1+a_2} \right)^2 }} + \frac{\mathrm{sign}(a_1-a_2)}{|p-q|}\sqrt{-1 +\sqrt{1 + \frac{(p-q)^2}{4 p^2 q^2}\left(\frac{a_1-a_2}{a_1+a_2} \right)^2 } } \right)
\end{align}
\begin{align}
    &k_2^* = \ln(y) = \ln \left(\frac{z-w}{2}\right) \\
    &=n \ln \left(\frac{n (a_1+a_2)}{\lambda}\right)  - \frac{\ln(2)}{2} \nonumber  \\
    &+ \ln\left( \sqrt{1 + \sqrt{1 + \frac{(p-q)^2}{4 p^2 q^2}\left(\frac{a_1-a_2}{a_1+a_2} \right)^2 }} - \frac{\mathrm{sign}(a_1-a_2)}{|p-q|}\sqrt{-1 +\sqrt{1 + \frac{(p-q)^2}{4 p^2 q^2}\left(\frac{a_1-a_2}{a_1+a_2} \right)^2 } } \right)
\end{align}

\section{Calculations in the Langevin equation example}
\label{app_langevin}

Substituting in the equation \eqref{eq:Ric} the following ansatz for $B_k$, where $a,c \in \mathbb{R}$ and $b \in \mathbb{C}$,
\begin{equation*}
    B_k = 
    \begin{pmatrix}
        a & b \\
        b^* & c
    \end{pmatrix} \otimes \bbbone_2
\end{equation*}
one obtains a system of equations for the triple $(a,b,c)$
\begin{equation}\label{eq:abc}
   \begin{cases}
       \frac{4\gamma}{\beta} (a^2 + |b|^2) -2\gamma a + k_1 =0 \\
       \frac{4\gamma}{\beta} (c^2 + |b|^2) -2\gamma c + k_2 =0 \\
       \frac{4\gamma}{\beta} b (a + c) -2\gamma b + \frac{k_3}{2} =0
   \end{cases} 
\end{equation}
and one can check that for $k=0$ the physical solution $(a,b,c)=(0,0,0)$ is allowed, among others. When $k$ in finite we look for a solution that goes to zero in the limit $k \to 0$. Defining the real variables $x= a+c -\beta/2$ and $y= a-c$, and subtracting the second equation from the first one, one finds
\begin{equation}\label{eq:xy}
    xy = - \frac{\beta}{4\gamma} (k_1 - k_2).
\end{equation}
Also, the last equation can be rewritten as
\begin{equation}\label{eq:b}
bx=- \beta k_3/(8\gamma)  \,,
\end{equation}
so that $b$ has to be real ($b=b^*$) whenever $x\neq 0$.
Combining \eqref{eq:b} with \eqref{eq:xy} one also gets
\begin{equation}\label{eq:ky}
    k_3 y = 2 (k_1 -k_2) b .
\end{equation}
Moreover, one can isolate $b^2$ from the first equation in \eqref{eq:abc}, 
\begin{equation*}
    b^2 = - \frac{x^2 + y^2}{4} -\frac{xy}{2} +\frac{\beta^2}{16} - \frac{\beta k_1}{4\gamma} ,
\end{equation*}
multiply both members by $4 x^2 (k_1 - k_2)^2$ and use \eqref{eq:ky} on the left hand side
\begin{equation*}
    x^2 k_3^2  y^2  = \left( - \frac{x^2 +y^2}{4} -\frac{xy}{2} +\frac{\beta^2}{16} - \frac{\beta k_1}{4\gamma} \right) 4 x^2 (k_1 -k_2)^2 .
\end{equation*}
Finally, one uses \eqref{eq:xy} to substitute the terms $xy$ and $x^2y^2$, 
\begin{equation*}
    \frac{\beta^2}{16\gamma^2} (k_1 -k_2)^2 \big[ k_3^2 + (k_1 -k_2)^2 \big] =\left[ - x^4  + \left( -\frac{\beta}{2\gamma}(k_1 +k_2)  + \frac{\beta^2}{4} \right) x^2 \right] (k_1 -k_2)^2 
\end{equation*}
so that for $(k_1 - k_2) \neq 0$ the following biquadratic equation for $x$ holds true
\begin{equation}
    x^4 - 2 v_k  x^2 + w_k =0, \quad v_k = \left( \frac{\beta^2}{8} -\frac{\beta}{4\gamma}(k_1 +k_2)   \right) , \quad w_k = \frac{\beta^2}{16\gamma^2}\big[ k_3^2 + (k_1 -k_2)^2 \big],
\end{equation}
where the variables $v_k$ and $w_k$ have been defined for convenience. We notice that $w_k \geq 0$ while $v_k$ can be positive or negative. The solution $x$ can be a real number only if $v_k \geq 0$, that is
\begin{equation}\label{eq:bound1}
    k_1+k_2 \leq \frac{\gamma\beta}{2}.
\end{equation}
We choose the solution for $x^2$ that goes to $\beta^2/4$ for $k=0$ (remember that $x= a+c -\beta/2$)
\begin{equation}
    x^2 = v_k + \sqrt{v_k^2 - w_k} .
\end{equation}
The condition $x^2 \geq 0$ necessary to get a real $x$ imposes the bound $v_k^2- w_k \geq 0$, that is
\begin{equation}\label{eq:bound2}
   k_3^2 - 4 k_1k_2   \leq \frac{\beta^2\gamma^2}{4} -  \beta \gamma(k_1+k_2).
\end{equation}
Finally, we consider the square root with a minus sign because $x$ is negative $(-\beta/2)$ when $k=0$, namely
\begin{equation}
    x= - \sqrt{  v_k + \sqrt{v_k^2 - w_k}}.
\end{equation}
As a consequence, from Eq. \eqref{eq:xy} we find
\begin{equation}
    y= \frac{\beta}{4\gamma} \frac{(k_1 - k_2)}{\sqrt{w_k}}\sqrt{  v_k - \sqrt{v_k^2 - w_k}}, 
\end{equation}
that even though $w_k \to 0$ for vanishing $k$ has a well defined limit and can be extended by continuity. In fact, one has $y=0$ for $k=0$. Similarly, it turns out that
\begin{equation}\label{eq:b2}
    b= \frac{\beta}{8\gamma} \frac{k_3}{\sqrt{w_k}}\sqrt{  v_k - \sqrt{v_k^2 - w_k}}.
\end{equation}
Therefore, one can write for the scaled cumulant generating function for $k$ in a set $S \subset \mathbb{R}^3$ defined by the two conditions \eqref{eq:bound1}-\eqref{eq:bound2}
\begin{align}
    \Lambda_0(k) &= \mathrm{Tr}(D B_k) = \frac{2\gamma}{\beta} (a+c)= \frac{2\gamma}{\beta}\left( x+ \frac{\beta}{2} \right) \nonumber\\
    &= \gamma - \frac{2\gamma}{\beta}\sqrt{  v_k + \sqrt{v_k^2 - w_k}} .
\end{align}

We now check that the modified drift $M_k := M - 2D B_k$ has at least one eigenvalue with non-positive real part for $k \notin S$, thus suggesting that $\Lambda_0(k)$ is infinite for those values of $k$.
Explicitly,
\begin{align}
    &\det(M_k - \lambda \bbbone_4) = 
    \det 
    \begin{pmatrix}
     -\lambda & -1 & 0 & 0 \\
     1 & \gamma - \frac{4 \gamma}{\beta}a -\lambda & 0 & - \frac{4 \gamma}{\beta}b \\
     0 & 0 & -\lambda & -1 \\
     0 & - \frac{4 \gamma}{\beta}b & 1 & \gamma - \frac{4 \gamma}{\beta}c - \lambda
    \end{pmatrix} \nonumber \\
    &= \left( 1 + \lambda^2 + \frac{2\lambda\gamma}{\beta}\left(a +c - \frac{\beta}{2}\right) \right)^2  - \frac{4\lambda^2 \gamma^2}{\beta^2}\Big( (c-a)^2 + 4b^2 \Big) \\
    &= \Big[ 1 + \lambda^2 + \frac{2\lambda \gamma}{\beta}\left(a+c - \frac{\beta}{2} - b \frac{8\gamma}{\beta} \frac{\sqrt{w_k}}{k_3}\right) \Big] \Big[ 1 + \lambda^2 + \frac{2\lambda \gamma}{\beta}\left(a+c - \frac{\beta}{2} + b \frac{8\gamma}{\beta} \frac{\sqrt{w_k}}{k_3}\right)  \Big] ,
\end{align}
where in the last step we used 
\begin{equation*}
    (c-a)^2 + 4b^2 = y^2 + 4b^2 = \frac{4b^2}{k_3^2} \Big(k_3^2 + (k_1 - k_2)^2 \Big) = \frac{64 \gamma^2}{\beta^2}\frac{w_k}{k_3^2} b^2 .
\end{equation*}
Recalling the explicit expression of $x=a+c-\beta/2$ and using \eqref{eq:b2} we find
\begin{align}
    \lambda_{1,2}= \frac{\gamma}{\beta}\left( V_+ + V_- \right) \pm \sqrt{\frac{\gamma^2}{\beta^2}\left(V_+ + V_- \right)^2 -1} \\
    \lambda_{3,4}= \frac{\gamma}{\beta}\left( V_+ - V_- \right) \pm \sqrt{\frac{\gamma^2}{\beta^2}\left(V_+ - V_- \right)^2 -1}
\end{align}
where we defined for convenience 
\begin{equation}
    V_+ := \sqrt{  v_k + \sqrt{v_k^2 - w_k}}, \qquad V_- := \sqrt{  v_k - \sqrt{v_k^2 - w_k}} .
\end{equation}
If $v_k > 0$ and $v_k^2 - w_k > 0$, in the interior of $S$, then $V_+ > V_- \geq 0$ and all the $\lambda_i$ have a positive real part (because the term out of the square root is positive, and the square root is either purely imaginary or it is real but smaller than the first term in absolute value). Conversely, if $v_k^2 - w_k <0$, then $V_+$ and $V_-$ are complex conjugates and both $\lambda_3$ and $\lambda_4$ are purely imaginary. Also, when $v^2-w_k>0$ but $v_k <0$, $V_+$ and $V_-$ are imaginary and therefore both $\lambda_3$ and $\lambda_4$ are again purely imaginary.

We now show that the function $\Lambda_0(k)$ is steep at the boundary of $S$. The gradient in the interior of $S$ reads explicitly
\begin{align}
    \partial_1 \Lambda_0(k) &= \frac{1}{4\sqrt{  v_k + \sqrt{v_k^2 - w_k}}}  \left( 1 - \frac{\beta}{2\gamma\sqrt{v_k^2 - w_k}} \left( k_2 - \frac{\beta \gamma}{4} \right) \right) \label{eq:partial1}\\
    \partial_2 \Lambda_0(k) &= \frac{1}{4\sqrt{  v_k + \sqrt{v_k^2 - w_k}}}  \left( 1- \frac{\beta}{2\gamma\sqrt{v_k^2 - w_k}} \left( k_1 - \frac{\beta \gamma}{4} \right) \right) \label{eq:partial2}\\
    \partial_3 \Lambda_0(k) &=\frac{1}{4\sqrt{  v_k + \sqrt{v_k^2 - w_k}}}  \left( \frac{\beta}{4\gamma \sqrt{v_k^2 - w_k}}  k_3   \right), \label{eq:partial3}
\end{align}
where we used $\partial_1 v_k= \partial_2 v_k = -\beta/(4\gamma)$ , $\partial_1 w_k = - \partial_2 w_k = (k_1-k_2)\beta^2/(8\gamma^2)$, $\partial_3 v_k=0$, $\partial_3 w_k = k_3 \beta^2/(8\gamma^2)$, that imply also $$2v_k \partial_1 v_k - \partial_1 w_k= \frac{\beta^2}{4\gamma^2}\left( k_2 - \frac{\beta\gamma}{4} \right), \qquad 2v_k \partial_2 v_k - \partial_2 w_k= \frac{\beta^2}{4\gamma^2}\left( k_1 - \frac{\beta\gamma}{4} \right) .$$
Therefore
\begin{align}
    \big|\nabla \Lambda_0(k)\big|^2 &= \frac{1}{16\left(v_k + \sqrt{v_k^2 - w_k} \right)\Big( v_k^2 -w_k \Big)}\Bigg[ 2(v_k^2 -w_k ) - \frac{\beta}{\gamma}\sqrt{v_k^2 - w_k} \left( k_1+ k_2 - \frac{\beta\gamma}{2} \right) \nonumber \\
    &\quad + \frac{\beta^2}{4\gamma^2}\left( k_1^2 +k_2^2 -\frac{\beta\gamma}{2}(k_1+k_2) + \frac{\beta^2\gamma^2}{8} + \frac{k_3^2}{4} \right)\Bigg] \nonumber \\
    &= \frac{1}{16\left(v_k + \sqrt{v_k^2 - w_k} \right)\Big( v_k^2 -w_k \Big)}\Bigg[ 2(v_k^2 -w_k ) + 4 v_k \sqrt{v_k^2 - w_k} + 2 v_k^2 \nonumber \\
    &\quad +\frac{\beta^2}{16\gamma^2}\Big(k_3^2 + 2(k_1-k_2)^2\Big) \Bigg]
\end{align}
where in the second equality we used 
\begin{equation}
    \frac{\beta^2\gamma^2}{8} - \frac{\beta\gamma}{2}(k_1 + k_2) = \frac{1}{2}\left( \frac{\beta\gamma}{2} - (k_1+k_2) \right)^2 - \frac{(k_1+k_2)^2}{2} = \frac{8 \gamma^2}{\beta^2}v_k^2 -  \frac{(k_1+k_2)^2}{2} .
\end{equation}
Since one has
\begin{equation}
 -2w_k  \leq \frac{\beta^2}{16\gamma^2}\Big(k_3^2 + 2(k_1-k_2)^2\Big) \leq 2w_k
\end{equation}
it turns out that
\begin{equation}
  \frac{1}{4 \sqrt{v_k^2 -w_k }}  \leq  \big|\nabla \Lambda_0(k)\big|^2 \leq \frac{v_k}{4 (v_k^2 -w_k )} = \frac{1}{4 \sqrt{v_k^2 -w_k }} \left( \frac{v_k}{\sqrt{v_k^2 -w_k}} \right) .
\end{equation}
In particular, the lower bound implies that the function is steep for $k$ converging to the boundary of $S$, $v_k^2-\omega_k=0$.

We finally compute the rate function $I_0(a)$. We look for a stationary point $k^* \in S$ of the function $\langle a, k \rangle - \Lambda_0(k)$. The stationary point condition amounts to equate the three expressions \eqref{eq:partial1}-\eqref{eq:partial2}-\eqref{eq:partial3} to $a_1,a_2$ and $a_3$, respectively. This in turn implies the following three equations
\begin{align}
    \left.\frac{1}{8\gamma \sqrt{  v_k + \sqrt{v_k^2 - w_k}} }\right\vert_{k^*} \left. \frac{\beta}{\sqrt{v_k^2 - w_k}}\right\vert_{k^*} (k^*_1 - k^*_2) &= a_1 -a_2 \label{eq:firsteq}\\
     - \left.\frac{\beta}{2\gamma\sqrt{v_k^2 - w_k}}\right\vert_{k^*}\left( a_2 k_2^* - a_1 k_1^* - \frac{\beta\gamma}{4}(a_2 - a_1) \right) &= a_1 - a_2 \label{eq:secondeq}\\
     \left.\frac{1}{16\gamma \sqrt{  v_k + \sqrt{v_k^2 - w_k}} }\right\vert_{k^*} \left. \frac{\beta}{\sqrt{v_k^2 - w_k}}\right\vert_{k^*} k_3^* &= a_3
\end{align}
The first one is \eqref{eq:partial1}$-$\eqref{eq:partial2} $=a_1 - a_2$, while the second one is $a_2\times$ \eqref{eq:partial1} $-$ $a_1\times$ \eqref{eq:partial2}$=0$. If $a_1 \neq a_2$, we find immediately from the first and the third equation that
\begin{equation}\label{eq:k3intermediate}
    k_3^* = (k_1^*-k_2^*) \frac{2a_3}{a_1-a_2}.
\end{equation}
Using the second equation we find that
\begin{align}
    \beta^2 \left( a_2 k_2^* - a_1 k_1^* + \frac{\beta\gamma}{4}(a_1 - a_2) \right)^2 &= 4\gamma^2 (a_1-a_2)^2 \left( v_k^2 - w_k \right) \Big\vert_{k^*}, \nonumber\\
    \beta^2 \Bigg( (a_1+a_2) \frac{k_2^*-k_1^*}{2} - (a_1-a_2)\underbrace{\left(\frac{k_1^* +k_2^*}{2}  - \frac{\beta\gamma}{4}\right)}_{-2 v_k \gamma/\beta} \Bigg)^2 &= 4\gamma^2 (a_1-a_2)^2 \left( v_k^2 - w_k \right) \Big\vert_{k^*}, \nonumber \\
    \beta^2 \frac{(k_1^*-k_2^*)^2}{4}(a_1+a_2)^2 + 2 \beta\gamma v_{k^*} (k_2^*-k_1^*)(a_1^2-a_2^2) &= - 4\gamma^2 (a_1-a_2)^2 w_{k^*}
\end{align}
and substituting the expression for $k_3^*$ in $w_{k^*}$ the equation reads
\begin{equation*}
    (k_1^*-k_2^*)\left(  \frac{(k_1^*-k_2^*)}{4} \Big((a_1+a_2)^2 + 4a_3^2 + (a_1-a_2)^2\Big) +  (a_1^2-a_2^2)\left(\frac{k_1^* +k_2^*}{2}  - \frac{\beta\gamma}{4}\right)  \right)= 0.
\end{equation*}
Since we are dealing with the case $a_1 \neq a_2$, and therefore from \eqref{eq:firsteq} $k_1^* \neq k_2^*$, we can simplify the factor $(k_1^*-k_2^*)$ and we obtain
\begin{equation}\label{eq:k2intermediate}
    k_2^* = \frac{1}{a_2^2+a_3^2}\Big( (a_1^2 + a_3^2)k_1^* -\frac{\beta\gamma}{4}(a_1^2-a_2^2) \Big) .
\end{equation}
Using now \eqref{eq:firsteq} together with the following identity
\begin{equation*}
    \frac{1}{\sqrt{  v_k + \sqrt{v_k^2 - w_k}} } = \frac{1}{\sqrt{w_k}} \sqrt{  v_k - \sqrt{v_k^2 - w_k}} ,
\end{equation*}
upon squaring both members we end up with
\begin{equation*}
   \left( v_{k^*} - \sqrt{v_k^2-w_k}\big\vert_{k^*} \right)(k_1^*-k_2^*)^2 =  w_{k^*}\frac{64 \gamma^2}{\beta^2} (a_1-a_2)^2 (v_{k^*}^2- w_{k^*}) 
\end{equation*}
and using \eqref{eq:secondeq} to substitute $\sqrt{v_k^2-w_k}\big\vert_{k^*}$ we can write
\begin{equation*}
  \left( v_{k^*} + \frac{\beta}{2\gamma} \frac{a_2 k_2^* - a_1 k_1^*}{(a_1-a_2)} + \frac{\beta^2}{8} \right)(k_1^*-k_2^*)^2  = 16 w_{k^*} \left( a_2 k_2^* - a_1 k_1^* - \frac{\beta\gamma}{4}(a_2 - a_1) \right)^2.
\end{equation*}
Using the definitions of $v_k$ and $w_k$ to complete the substitution and writing $k_2^*$ and $k_3^*$ in terms of $k_1^*$, after some algebra we get
\begin{equation}
    k_1^* = \frac{\gamma}{4} \left( \beta - \frac{1}{\beta}\frac{a_2^2+a_3^2}{(a_3^2 - a_1 a_2)^2} \right) .
\end{equation}
Together with \eqref{eq:k2intermediate} and \eqref{eq:k3intermediate} we also find
\begin{align}
    k_2^* &= \frac{\gamma}{4} \left( \beta - \frac{1}{\beta}\frac{a_1^2+a_3^2}{(a_3^2 - a_1 a_2)^2} \right) , \\
    k_3^* &= \frac{\gamma}{2\beta} \frac{a_3 (a_1+a_2) }{(a_3^2 - a_1 a_2)^2} . 
\end{align}
For convenience, we write explicitly also the following expressions that result from the previous ones
\begin{equation*}
    v_{k^*}= \frac{1}{16} \frac{a_1^2+a_2^2 + 2 a_3^2}{(a_3^2 - a_1 a_2)^2}, \quad w_{k^*}= \frac{1}{16^2} \frac{(a_1+a_2)^2(4a_3^2 + (a_1-a_2)^2)}{(a_3^2 - a_1 a_2)^4} ,
\end{equation*}
\begin{equation*}
    \sqrt{v_k^2-w_k}\Big\vert_{k^*} = \frac{1}{8(a_1a_2 - a_3^2)}, \quad \sqrt{v_k + \sqrt{v_k^2-w_k}} \Bigg\vert_{k^*} = \frac{a_1+a_2}{4(a_1a_2 - a_3^2)}.
\end{equation*}
Therefore, in the end, coming back to the rate function we get the expression in \eqref{eq:Ilangevin}
\begin{align}
    I_0(a) &= \sup_{k \in \mathbb{R}^n}\Big( \langle a, k \rangle - \Lambda_0(k) \Big) = \sup_{k \in S}\Big( \langle a, k \rangle - \Lambda_0(k) \Big) = \langle a, k^* \rangle - \Lambda_0(k^*) \nonumber\\
    &= \frac{\gamma\beta}{4}(a_1+a_2) - \frac{\gamma}{4\beta}\frac{(a_1+a_2)(a_1a_2- a_3^2)}{(a_3^2 - a_1 a_2)^2} -\gamma + \frac{\gamma}{4\beta} \frac{2(a_1+a_2)}{(a_1 a_2-a_3^2)} \nonumber \\
    &=-\gamma + \frac{\gamma \beta}{4}(a_1+a_2) + \frac{\gamma}{4\beta} \frac{(a_1+a_2)}{(a_1 a_2 - a_3^2)} . 
\end{align}

\end{document}